\documentclass[11pt, a4paper]{article}
\usepackage{fullpage}

\usepackage[persection, figure]{belov_paper}
\usepackage{scalerel,stackengine}
\stackMath
\newcommand\reallywidehat[1]{%
\savestack{\tmpbox}{\stretchto{%
  \scaleto{%
    \scalerel*[\widthof{\ensuremath{#1}}]{\kern-.6pt\bigwedge\kern-.6pt}%
    {\rule[-\textheight/2]{1ex}{\textheight}}%WIDTH-LIMITED BIG WEDGE
  }{\textheight}% 
}{0.5ex}}%
\stackon[1pt]{#1}{\tmpbox}%
}
\def\chit#1{\reallywidehat{#1}} % for Fourier Transform

\usepackage{tikz}
\usetikzlibrary{quantikz, decorations.pathreplacing, arrows.meta}
\definecolor{brown}{rgb}{0.65, 0.16, 0.16}

\newcommand{\hookmapsto}{\hookrightarrow}
\DeclareMathOperator\supp{supp}
\def\maps#1{\stackrel{#1}{\longmapsto}}
\newcommand{\OO}{\mathcal{O}}

\title{Tight Quantum Lower Bound for $k$-Distinctness}
\author{
  Aleksandrs Belovs\thanks{Center for Quantum Computing Science, Faculty of Science and Technology, University of Latvia.}
}
\date{}

\hypersetup{
    pdftitle={Tight Quantum Lower Bound for k-Distinctness}, % title
    pdfauthor={Aleksandrs Belovs}, % author
}

\begin{document}
\maketitle

\begin{abstract}
In this paper, we introduce a new quantum query lower bound framework.
It is inspired by Zhandry's compressed oracle technique~\cite{zhandry:record}, but it also subsumes the polynomial method~\cite{beals:pol} as a special case.
Compared to Zhandry's technique, our approach has two key differences.
First, we do not use any oracles (except for the standard input oracle), and define ``knowledge'' directly through the expansion of the state of the algorithm in the Fourier basis.
Second, we allow arbitrary probability distributions of inputs.

We show how this framework behaves on the problem of finding equal elements in the input string.
In particular, we demonstrate its power by proving a first tight quantum query lower bound for the $k$-Distinctness problem, matching the upper bound from~\cite{belovs:learningKDist}.
\end{abstract}

\mycutecommand{\q}{\bZ_q}

\section{Introduction}
The $k$-Distinctness problem asks to find $k$ equal elements in the input string $x\in [q]^n$, where $q$ is the size of the input alphabet.
For many years, this problem has been on the front-line of quantum algorithm research.
It is sufficiently simple, but complicated enough to inspire many important algorithmic and lower bound techniques.

The first quantum algorithm for this problem was due to Buhrman et al.~\cite{buhrman:distinctness} in 2000 using quantum amplitude amplification~\cite{brassard:amplification}.
It achieved $\OO\sA[n^{3/4}]$ query and nearly the same time complexity for the special case of $k=2$, which is known as the Element Distinctness problem.
Shortly after, Aaronson and Shi~\cite{shi:collisionLower} proved an $\Omega\sA[n^{1/3}]$ query lower bound for the closely related Collision problem, which immediately implied an $\Omega\sA[n^{2/3}]$ query lower bound for Element Distinctness.
(See also~\cite{kutin:collisionLower} for a different version of this proof.)
This lower bound only worked in the assumption that the size of the input alphabet $q = \Omega\sA[n^2]$.
Ambainis~\cite{ambainis:collisionLower} removed this restriction and demonstrated that the lower bound $\Omega\sA[n^{2/3}]$ holds even for $q=n$.
(Ambainis consider the decision version of the problem, for which the regime $q<n$ is nonsensical.)
All these developments used the polynomial method~\cite{beals:pol}.

Ambainis was also the first to construct the matching upper bound.
In 2003, he used quantum walks on the Johnson graph to solve $k$-Distinctness in $\OO\sA[n^{k/(k+1)}]$ queries~\cite{ambainis:distinctness}.
For $k=2$, this gives $\OO\sA[n^{2/3}]$, matching the aforementioned lower bound.
The time complexity of the algorithm is close to its query complexity assuming a powerful form of quantum random access memory.

The upper bound on the quantum query complexity of $k$-Distinctness for larger $k$ was improved to $\OO\sB[n^{\frac 34 - \frac{1}{4(2^k-1)}} ]$ in 2012 by Belovs~\cite{belovs:learningKDist} using learning graphs~\cite{belovs:learning}.
This algorithm was only query-efficient.
The time-efficient version was obtained for $k=3$ in 2013 independently by Childs, Jeffery, Kothari, and Magniez~\cite{childs:walk3Dist} and Belovs~\cite{belovs:electicityQuantumWalks} (published as a merged paper~\cite{belovs:mergedWalk3Dist}).
A time-efficient version for general $k$ was constructed only in 2022 by Jeffery and Zur~\cite{jeffery:kDist} using multidimensional quantum walks.

On the lower bound front, the best known lower bound for all $k\ge 2$ had been the same $\Omega(n^{2/3})$ for more than a decade, until, in 2017, Bun, Kothari, and Thaler~\cite{bun:polynomialStrikesBack} showed a lower bound of $\Omega\sA[n^{\frac 34 - \frac1{2k}}]$ for all $k$.
This was later slightly improved to $\Omega\sA[n^{\frac 34 - \frac1{4k}}]$ by Mande, Thaler, and Zhu~\cite{mande:kDistinctness}.
While, for large $k$, it approaches the correct asymptotic of $n^{3/4}$, these lower bounds give no improvement over $\Omega(n^{2/3})$ for the case $k=3$.
Both papers used the polynomial method in the form of dual polynomials.

In this paper, we prove a tight quantum query lower bound for $k$-Distinctness for all values of $k$, matching the aforementioned upper bound by Belovs~\cite{belovs:learningKDist}:

\begin{thm}
\label{thm:main}
Any quantum algorithm solving the $k$-Distinctness problem with bounded error makes
\[
\Omega\sB[n^{\frac 34 - \frac{1}{4(2^k-1)}} ]
\]
queries to the input string (assuming $k=\OO(1)$ and that the size of the input alphabet is at least $\Omega(n^2)$).
\end{thm}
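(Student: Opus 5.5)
The plan is to set up a compressed-oracle-style progress measure, defined directly through the Fourier expansion of the algorithm's state rather than through an auxiliary oracle. Take the input $x\in\q^n$ uniformly at random (alphabet $q=\Omega(n^2)$). Purify the input register and write the joint state after $t$ queries as $\sum_x |x\rangle\otimes|\psi_x^t\rangle$. Expand the input register in the Fourier basis $\chit{u}$, $u\in\q^n$. The support $\supp u$ of a Fourier character plays the role of the ``database'' of queried positions, and its nonzero entries play the role of the ``recorded values''.

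Knowledge of a $j$-collision is measured by the squared norm of the projection onto Fourier components whose ``decompressed'' pattern contains $j$ equal recorded values. Concretely, I would use the projection, in the basis of $\q^{\supp u}$, onto value-patterns containing $j$ equal entries among the recorded ones. A successful algorithm must end with $\Omega(1)$ weight on patterns containing a $k$-collision: we argue that if this weight is small, the output is nearly independent of whether a $k$-collision exists. Since the distribution may be arbitrary, I would choose an input distribution with about $n/\ell$ planted $(k-1)$-tuples plus a hidden $k$-collision, tuned to the learning-graph parameters, so that the trivial Grover-type strategies are balanced.

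The core is a hierarchy of progress measures $\Phi_1,\dots,\Phi_k$. Here $\Phi_j$ is the weight on states holding roughly $m_j$ disjoint $j$-collisions, with thresholds $m_j$ chosen as in Belovs's learning-graph algorithm, namely $m_j \approx n^{1-(2^{k-j}-1)/(2^k-1)}\cdot(\text{const})$. The key lemma bounds how much one query can increase each $\Phi_j$. A query changes one Fourier coordinate, and the standard compressed-oracle computation shows that the amplitude of recording a value equal to one of $r$ existing stored values is $O(\sqrt{r/q})$ per query. Summed over positions in superposition, this yields a ``$\sqrt{\text{(number of extendable }(j-1)\text{-collisions)}}$'' growth. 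Thus $\sqrt{\Phi_j}$ grows per query by at most $O(\sqrt{m_{j-1}/n})$ times a factor from $\sqrt{\Phi_{j-1}}$, plus a term for the number of recorded positions.

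Solving this recursion (as in the Zhandry analysis of Element Distinctness, where $\sqrt{\Phi_2}\lesssim t\sqrt{t/n}$ gives $n^{2/3}$) gives the following. After $T$ queries, reaching a $k$-collision with constant amplitude forces $T=\Omega(n^{3/4-1/(4(2^k-1))})$, the exponent arising from the same geometric recurrence $e_j$ with $2e_j = \ldots$ that governs the learning-graph cost.

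The main obstacle is the single-query progress lemma with nonuniform distributions and many overlapping collision levels. Fourier components do not factor cleanly once the planted structure makes the prior non-product, so the ``recorded value'' interpretation is only approximate. I would first try to handle it by conditioning on the planted partition, which restores a product distribution within blocks. I would then control error terms using the $q=\Omega(n^2)$ assumption, which makes accidental collisions negligible, and $k=\OO(1)$, which keeps the number of levels and the combinatorial constants bounded. A secondary difficulty is the ``forgetting'' direction: queries can also decrease recorded sets, and amplitude can shift between levels. This must be bounded symmetrically by the triangle inequality on the appropriate projections.
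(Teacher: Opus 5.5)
Your proposal has a genuine gap at the step that decides the exponent. The progress measure and the single-query lemma come from the uniform-prior compressed oracle, and they do not see the planted structure. Under a product prior, ``$j$ equal recorded values'' only detects accidental collisions, and the $\OO(\sqrt{r/q})$ amplitude you cite is exactly that accidental rate.

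Conditioning on the hidden partition $\mu$ is the right instinct: it is the paper's relaxed problem, with $q=\Omega(n^2)$ used as you say. But after conditioning, the Fourier basis of the input has one frequency per block, $\tau(B)=\sum_{i\in B}\sigma(i)$. There is then no set of recorded positions telling you how many elements of a block are known. The rate $\sqrt{m_{j-1}/n}$ you write down comes from the randomness of the hidden partition, not of the values. Nothing in the proposal couples that average to the algorithm's coherent state.

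The consequences show up in your own numbers. Your check $\sqrt{\Phi_2}\le\OO(t\sqrt{t/n})$ gives $n^{1/3}$, not $n^{2/3}$; the paper's Theorem~\ref{thm:EDQueryGain} has $\OO(t^{3/2}/n)$. Iterating $m_j\approx t\sqrt{m_{j-1}/n}$ up to $m_k\approx 1$ gives $T\approx n^{(2^{k-1}-1)/(2^k-1)}$, which is the random-function (Liu--Zhandry) exponent. The worst-case bound needs an extra factor $n^{-1/2}$ at level $k$, because the $k$-block is unique. Justifying that a known $(k-1)$-set lies inside the hidden $k$-block only with weight about $m_{k-1}/n$ is a symmetry argument you have not supplied.

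The paper's missing ingredients here are:
\begin{itemize}
\item knowledge defined on the uniform state by $\supp(\sigma)\in L^+_\mu$, and only then transported by $\Upsilon_\mu$;
\item orbits of highlighted partitions with the splitting bijection, giving $\|\Psi^\partial_{\ell+1}\psi\|\le\OO(n^{-1/2})\|\Upsilon^+_\ell\psi\|$ (Corollary~\ref{cor:querySpecial});
\item the explicit recursion $\alpha_{\ell+1}=(\alpha_\ell+1)/2$, which you leave as ``$2e_j=\ldots$''.
\end{itemize}

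Two further steps do not work as stated.

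First, controlling gain and ``forgetting'' of a per-level measure by the triangle inequality is the paper's first attempt (Section~\ref{sec:3DPartI}), and it yields only $n^{3/5}$ for $k=3$. Your thresholds $m_j$ presuppose that the number of known $j$-blocks grows linearly in $t$. That growth comes from orthogonality of knowledge gained on different blocks. In the paper this is the squared-norm bound of Lemma~\ref{lem:queryGainRefined}, combined with $\|\Xi\Upsilon^+_\ell\psi'_t\|=\OO(n^{-1/2})$ (Lemma~\ref{lem:Xik}) at levels $2,\dots,k-1$; the plain bound is used only at level $k$. A Liu--Zhandry count-by-count threshold recursion might substitute, but it would have to be built on the hidden-partition side, which you do not set up.

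Second, ``small collision weight implies the output is nearly independent of the $k$-collision'' is a theorem for a planted prior, not a remark. Example~\ref{exm:ED} shows that $\Upsilon\mathcal{X}_{\le 2}$ is not anti-concentrated. The paper has to prove that the knowledge-free part $\Upsilon^-\mathcal{X}_{\le T}$ is $\OO(n^{-1/2})$-anti-concentrated (Theorem~\ref{thm:anticoncentration}). It does so via local alterations and an inclusion--exclusion identity, using $\Omega(n)$ singletons.

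Relatedly, the argument uses $\Omega(n)$ blocks of every size $1,\dots,k-1$, both in Lemma~\ref{lem:Xik} and in the anti-concentration. A prior made only of $(k-1)$-tuples plus the $k$-block, as your description suggests, is a case the paper leaves open.
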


As it can be deduced from the above exposition, the polynomial method played an important role in lower bounding quantum query complexity of $k$-Distinctness.
Another important lower bound method is Zhandry's compressed oracle technique~\cite{zhandry:record} developed in 2018.
It is a powerful tool, which allows one to talk about algorithm's knowledge of the input string in a very precise and formal way.
It was used both to prove lower bounds on concrete computational problems~\cite{liu:multiCollisions, hamoudi:multipleCollisionPairs} and to show security of a number of classical cryptographic schemes in the post-quantum world~\cite{liu:revisitingFiatShamir, chung:compressedOracle, alagic:EvenMansour}.
However, its important limitation is that it is tailored for the average-case settings assuming the uniform (or at least i.i.d.) distribution on the input strings.
Only very recently the compressed oracle method was generalised to uniformly random permutations~\cite{carolan:compressedPermutationOracles}.

In particular, Liu and Zhandry~\cite{liu:multiCollisions} proved matching upper and lower bounds on quantum query complexity of finding $k$ equal elements in \emph{a uniformly random} input string.
This result was later extended by Hamoudi and Magniez to prove time-space tradeoffs for finding multiple collision pairs~\cite{hamoudi:multipleCollisionPairs}.
Let us note that the assumption that the input string comes from the uniformly random distribution is different from the worst-case settings we consider in this paper, and it also significantly simplifies some aspects of the problem (see, e.g., discussion in~\cite{belovs:learningKdistPrior}).
Therefore, our result, \rf{thm:main}, is incomparable with prior work by Liu and Zhandry.

To prove \rf{thm:main}, we develop a new framework for proving quantum query lower bounds that might be of independent interest.
It is based on what we identify as the common foundation of both the polynomial and Zhandry's methods.
It includes the polynomial method as a special case, and it shares a lot of intuition and ideas with Zhandry's method.
Contrary to Zhandry's method, we are not limited to the uniform distribution, nor do we use any oracles besides the standard input oracle.
Note that we do \emph{not} prove in this paper that Zhandry's method is a special case of our framework.

\section{Overview of the Paper}
\label{sec:overview}

This paper is organized in three large blocks.
First, we describe our new lower bound framework.
It is rather general and can be seen as a (quite loose) oracle-less restatement of Zhandry's technique or as an extension of the polynomial method.
Second, we show how this framework looks when applied to a large group of tasks that we characterise as search for equal elements.
Third, we utilise this machinery to prove the lower bound on $k$-Distinctness, \rf{thm:main}.

We try to keep our framework modular with different lemmata of various generality.
We separate them from concrete lower bounds and explicitly state assumptions for all of them.

Throughout the paper, we consider a quantum query algorithm running in space $\cA$ with the initial state $\ket A|0>$.
It has oracle access to an input string $x\in \q^n$, where $\q$ is the input alphabet (this is without loss of generality, but admits the Fourier basis in $\bC^q$).
We consider relational problems, i.e., on each input string $x$, there are several possible correct outputs from some set $R$ of responses.

\subsection{Framework}

In this section, we describe our general framework,
a thorough exposition of which will follow in \rf{sec:framework}.
It extensively borrows ideas from Zhandry's methodology, and we also indicate its connection to the polynomial method.

\subsubsection{Zhandry's Method and the Uniform Distribution}

In \rf{sec:zhandry}, we describe our rather liberal treatment of Zhandry's technique.
In particular, we do not use the compressed oracle at all.
Instead of that, we adopt the following ideas.

First, the algorithm is executed in the joint space $\cX\otimes \cA$, where $\cX = (\bC^q)^{\otimes n}$ stores the input string $x\in \q^n$ and $\cA$ is the space of the algorithm.
The register $\cX$ is only accessible to the algorithm through applications of the input oracle $O$, which thus intertwines the state between these two registers.
(This is an old idea, e.g., it was already used in Ambainis' original adversary method~\cite{ambainis:adv}.)

Second, the Fourier basis is used in $\cX$.
Recall that it is given by vectors $\ket X|\chit \sigma>$ with $\sigma\in \q^n$.
Also, it is essential that the input string $x=(x_1,\dots,x_n)$ is sampled uniformly at random from $\q^n$.
This means that the algorithm is executed on the initial state $\frac{1}{\sqrt{q^n}} \sum_{x\in \q^n} \ket X|x>\ket A|0> = \ket X|\chit \emptyset> \ket A|0>$.
(Here $\emptyset$ stands for the identically zero function on $[n]$.)
We denote by $\psi_t$ the state of the algorithm after $t$ queries when executed on this initial state.
We call these states \emph{uniform states of the algorithm}, and they will be essential in our construction.

Third, due to the phase kickback trick, we may assume that the input oracle acts in $\cX$ instead of $\cA$.
Moreover, it acts by modifying the entry $\sigma(i)$ of the function $\sigma\in\q^n$ in $\ket X|\chit \sigma>$, where $i$ is the index of the input variable being queried.

Fourth, this gives the main piece of intuition which we use extensively in our paper.
The state $\ket X|\chit \sigma>$ corresponds to the algorithm \emph{``knowing'' the values of the variables in the support of $\sigma$} and, even more importantly, \emph{not ``knowing'' the values outside of it}.
(Thus, compared to the compressed oracle, we use $\sigma$ as our ``database'', treating 0 as ``do not know'', and avoiding $\perp$ altogether.)
As one demonstration of this, it is easy to show that, after $t$ queries, the state $\psi_t$ of the algorithm is in $\cX_{\le t}\otimes \cA$, where $\cX_{\le t}$ is the subspace of $\cX$ spanned by all $\ket X|\chit \sigma>$ with $\sigma$ having support size at most $t$.

\subsubsection{Introducing Non-Uniform Distribution}
In \rf{sec:spaceY}, we extend this method to other probability distributions on the input strings.
The idea is very simple.
Let $Y = \q^n$ equipped with some probability distribution $p_y$, and let $\cY$ be another register isomorphic to $(\bC^q)^{\otimes n}$.
We define a \emph{transfer operator} as a linear map $\Upsilon\colon \cX\to\cY$ transforming $\frac{1}{\sqrt{q^n}} \ket X|x>$ into $\sqrt{p_x}\ket Y|x>$ for all $x\in \q^n$.
If $\psi_t$ is the uniform state of the algorithm after $t$ queries, then $\Upsilon \psi_t$ is the state after $t$ queries on the initial state $\sum_{y\in \q^n} \sqrt{p_y}\ket Y|y>\ket A|0>$.
This might seem too na\"\i ve to possibly work, but this is precisely the content of the polynomial method, as we show in \rf{sec:polynomial}.

The problem with this approach is that too much knowledge gets destroyed during the transfer.
The only thing remaining is that the state $\Upsilon\psi_t$ of the algorithm lives in the space $\Upsilon\cX_{\le t}\otimes \cA$.
In other words, we know that the algorithm knows the values of at most $t$ input variables, but their locations are completely lost.

We argue that this loss in knowledge is because we do not discriminate between different inputs.
We solve this particular issue by introducing \emph{types} of inputs.
Informally, a type is a set of inputs that we treat as one indivisible entity.
We formalise this by defining Hidden Computational Problem.
In this problem, the input is a pair $(\mu,y)\in Y$, where the required output only depends on $\mu$ (the type), and the algorithm only has access to $y$ (the input string in $\q^n$ as before).
We also assume some probability distribution $p_{\mu,y}$ on such pairs.

This definition captures a wide variety of hidden problems, a framework popular in quantum algorithms, most famous being the Hidden Subgroup Problem~\cite{kitaev:HSP}, where the type $\mu$ is the hidden subgroup.
In the $k$-Distinctness problem, we say that two input strings $y$ and $y'$ have the same type if they induce the same partition on $[n]$, i.e., $y_i = y_j$ if and only if $y_i' = y_j'$.
(I.e., a type in this case is a partition of $[n]$ into subsets of equal elements.)

Let $M$ be a collection of possible types.
We represent each input by a pair $\ket Y|\mu,y>$ with $\mu \in M$ and $y\in \q^n$.
This gives a decomposition of the space $\cY$ into a direct sum $\bigoplus_\mu \cY_\mu$, where $\cY_\mu$ is spanned by inputs of type $\mu$.
For each $\mu$, we can define its own transfer operator $\Upsilon_\mu\colon \cX\to\cY_\mu$ (where the probability distribution $p$ is conditioned on having type $\mu$).
Thus, for every $\phi\in \cX$, it holds that $\Upsilon \phi = \bigoplus_{\mu} \sqrt{p_\mu}\; \Upsilon_\mu \phi$, where $p_\mu = \sum_{y} p_{\mu,y}$ is the probability of having type $\mu$.

\subsubsection{Knowledge}
That far, we have merely divided inputs into types and decomposed $\cY$ into a direct sum.
Our next goal is to retrieve intuition about knowledge of the algorithm, which we do in \rf{sec:overallKnowledge}.
First, for each $\mu\in M$, we define \emph{knowledge system} $L^+_\mu$.
It is a collection of subsets of $[n]$ with the following informal property:
For every input $(\mu,y)$ and every $S\in L^+_\mu$, the values of $y_i$ with $i\in S$ reveal some crucial information about $\mu$.
For instance, $L^+_\mu$ can be the set of common certificates of all input strings with type $\mu$.
For $k$-Distinctness, $L^+_\mu$ consists of all subsets of $[n]$ that are supersets of a tuple of $k$ equal elements.

Now we are able to define a \emph{knowledge operator} $\Upsilon^+_\mu$ as a linear map transforming $\ket X|\chit \sigma>$ into $1_{\supp(\sigma)\in L^+_\mu} \Upsilon_\mu \ket X|\chit\sigma>$.
The operator $\Upsilon^+\colon \phi\mapsto \bigoplus_{\mu} \sqrt{p_\mu}\; \Upsilon^+_\mu \phi$ combines all knowledge operators for individual $\mu$.
Similarly, $\Upsilon^-\mu$ maps $\ket X|\chit \sigma>$ into $1_{\supp(\sigma)\notin L^+_\mu} \Upsilon_\mu \ket X|\chit\sigma>$, and $\Upsilon^-$ maps $\phi$ into $\bigoplus_{\mu} \sqrt{p_\mu}\; \Upsilon^-_\mu \phi$.
In this way, the get decomposition $\Upsilon \psi_t = \Upsilon^+\psi_t  + \Upsilon^-\psi_t$ of the state of the algorithm into two parts: one with knowledge and one without.

To summarise, we use the uniform state $\psi_t$ to define knowledge via the Fourier basis, and then transfer this knowledge to the non-uniform state $\Upsilon\psi_t$ using the transfer operator.
We combine various inputs into one type to have ``nice'' knowledge operators $\Upsilon^+_\mu$.

\subsubsection{Anti-Concentration and Query Gain}

At this point, our strategy becomes reminiscent of that of Zhandry.
We break the task of proving the lower bound into two independent statements.

First, we have to show that the subspace $\Upsilon^- \cX_{\le t}$ is \emph{anti-concentrated}, meaning that, for every $\phi\in \cX_{\le t}$, the state $\Upsilon^-\phi$ does not lean towards any particular output $\rho\in R$.
This is sufficient to guarantee that a final state of the algorithm in the subspace $\Upsilon^- \cX_{\le t}$ will result in small success probability.
This is formalised in \rf{sec:overallAntiConcentration}.

Second, we have to show that \emph{knowledge} $\|\Upsilon^+ \psi_{t}\|$ of the algorithm grows slowly with $t$.
In \rf{sec:queryGain}, we capture the change of this quantity during a query by the \emph{query gain operator} defined as follows.
For each $\mu \in M$ and $i\in [n]$, we define the system $L^{\partial i}_\mu$ that consists of those subsets of $[n]$ that are added to $L^+_\mu$ when the $i$-th input variable is queried.
Formally, $S\in L^{\partial i}_\mu$ iff $S\notin L^+_\mu$ but $S\cup\{i\} \in L^+_\mu$.
Using this system, we define $\Upsilon_\mu^{\partial i}$ similarly to $\Upsilon^+_\mu$ and bound the change in knowledge $\|\Upsilon^+ \psi_{t+1}\| - \|\Upsilon^+ \psi_{t}\|$ using $\|\Psi^\partial \psi_t\|$, where $\Psi^\partial$ combines $\Upsilon^{\partial i}_\mu$ over all $\mu$ and $i$.
Since $L^{\partial i}_\mu$ uses only a small fraction of subsets $S\subseteq [n]$, we expect $\|\Psi^\partial \psi_t\|$ to have small norm.

\subsection{\texorpdfstring
    {Search for Equal Elements and $k$-Distinctness}
    {Search for Equal Elements and k-Distinctness}
}

In \rf{sec:equalElements}, we apply our framework to the problem of finding equal elements in the input string.
We start with the transfer operators in \rf{sec:spaceYRevisited} as they do not depend on the problem being solved, and then we talk about the problem and knowledge in \rf{sec:equalElementsKnowledge}.

\subsubsection{Partitions and the Transfer Operators}

As we mentioned above, the type of an input string $y\in\q^n$ is given by its partition into blocks of equal elements.
That is, types $\mu\in M$ are given by partitions of $[n]$, and we will use both term interchangeably.
For a fixed $\mu$, input strings of type $\mu$ are in 1-to-1 correspondence with injective functions in $\q^\mu$.
As a first step, we consider a relaxed version of the problem, where we allow all functions in $\q^\mu$ (not necessarily injective).
In this way, some input strings can have multiple types (e.g., the string $0^n$ has all possible types).
Nonetheless, the problem stays well-defined, moreover, if the size of the alphabet $q = \Omega(n^2)$, there is no significant difference between the two versions of the problem (as collisions in $\q^\mu$ are rear).

This relaxation allows us to define the Fourier basis in $\cY$ consisting of vectors $\ket Y|\mu, \chit \tau>$ with $\tau\in \q^\mu$.
This is crucial as now the transfer operator admits a very nice form:
$\Upsilon_\mu$ maps $\ket X|\chit\sigma>$ into $\ket Y |\mu, \chit \tau>$, where $\tau(B) = \sum_{i\in B}\sigma(i)$ for every block $B$ of the partition $\mu$.
A similar relaxation idea were used, e.g., in the quantum $k$-sum lower bound~\cite{belovs:kSumLower}.

\subsubsection{Knowledge and Anti-Concentration}

Our main goal is \rf{thm:main}, that deals with the task of finding $k$ equal elements.
In \rf{sec:kDistAntiConcentration}, we prove two anti-concentration results for this problem.

Let us briefly describe the problem.
The set of responses $R$ is given by $k$-subsets of $[n]$.
Such a subset $\rho\in R$ is a correct output for a partition $\mu$ if it is completely contained in a block of $\mu$.
This is justified as in this case it consists of equal elements for every input of type $\mu$.

We define the knowledge system in a related way.
For each partition $\mu$, $L^+_\mu$ consists of all subsets of $[n]$ that have intersection of size at least $k$ with some of the blocks of $\mu$.
Assuming this knowledge system, we prove in \rf{sec:anticoncentration} a rather general anti-concentration result:
Assuming that the collection $M$ of partitions is symmetric under permutations of $[n]$ and each partition contains $\Omega(n)$ singletons, the space $\Upsilon\cX_{\le t}$ is $\OO(\frac1{\sqrt n})$-anti-concentrated for all $t = \OO(n)$.
We prove this by using that all $\Upsilon^-_\mu \phi$ constituting the vector $\Upsilon^-\phi$ are coordinated via the common control centre: $\phi \in \cX_{\le t}$.

\subsubsection{\texorpdfstring
    {Query Gain and $k$-Distinctness}
    {Query Gain and k-Distinctness}
}

In \rf{sec:kDistQueryGain}, we prove our $k$-Distinctness lower bound, \rf{thm:main}.
We assume that all partitions in $M$ are obtained by permuting a single partition $\mu_k$ having one $k$-block and $\Omega(n)$ blocks of size $\ell$ for every $\ell=1,\dots,k-1$ (see \rf{fig:3Dist}).

Now, for every partition $\mu\in M$, the knowledge system $L^+_\mu$ consists of all supersets of the only $k$-block in $\mu$.
We capture this by defining a \emph{highlighted partition}.
It is defined like a usual partition but with exactly one of its blocks highlighted.
This highlighting is only important for knowledge:
If $\mu$ is a highlighted partition, we define $L^+_\mu$ as the set of all supersets of the only highlighted block in $\mu$.

Highlighting is not important for $M$ as it contains unique block of size $k$, be it highlighted or not.
But we will use auxiliary collections of partitions $M_1,\dots, M_{k-1}$ as well, where we highlight a block of size $\ell$ in $M_\ell$.
Here highlighting is important as there are many blocks of size $\ell$.
Informally, one can think of $M_\ell$ as corresponding to the subproblem of finding $\ell$ equal elements in the input.
But this is not totally correct, as our main reason for defining $M_\ell$ is to upper bound query gain for $M_k = M$, and highlighting helps us in that:
We prove an easy upper bound on the query gain $\|\Psi^{\partial i}_{\ell+1} \psi\|$ of the $(l+1)$-st level using knowledge $\|\Upsilon^+_\ell \psi\|$ of the $\ell$-th level.
Here, $\|\Upsilon^+_\ell \psi\|$ is the knowledge for the collection $M_\ell$, and $\|\Psi^{\partial i}_{\ell+1} \psi\|$ is the query gain for the collection $M_{\ell+1}$.
Using a simple recurrence, we obtain the required lower bound.

\section{Framework}
\label{sec:framework}

In this section, we lay out the basics of our framework for proving quantum query lower bounds.
The \emph{input string} to the algorithm is given by $x\in \q^n$, and the algorithm has query access to this string using the input oracle $O_x$ as described in \rf{sec:modelAlgorithm}.
For the purposes of this section, we fix a relational problem.
It is given by a set $R$ of possible \emph{responses} of the algorithm, and, for each input $x$, there is a set of \emph{correct outputs} $R_x\subseteq R$.
The goal of the algorithm is, given oracle access to the input string $x$, output a response $\rho\in R_x$.

We consider average-case settings.
Each input is assigned probability $p_x$, and the success probability of the algorithm is measured with respect to this probability distribution.
Lower bounds in the average-case settings imply the same lower bounds in the worst-case settings, but it is often tricky to come up with the right probability distribution to prove the lower bound.

\subsection{Preliminaries}
\label{sec:prelim}
For a positive integer $n$, we use $[n]$ to denote the set $\{1,\dots,n\}$, and $\bZ_n$ to denote the set $\{0,\dots,n-1\}$.
We treat the latter as an additive group modulo $n$.
We use $1_P$ to denote the indicator variable that equals 1 if $P$ is true, and 0 otherwise.

We use calligraphic letters like $\cX$ or $\cH$ to denote complex inner product spaces.
We generally do not use the ket notation to write vectors.
However, there are three cases when we adopt it.
First, to signify that a particular vector belongs to a particular quantum register, i.e., $\ket A|\psi>$ denotes the vector $\psi$ located in the register $\cA$.
Second, to denote elements of the computational basis like $\ket I|i>$.
Third, to denote elements of the Fourier basis like $\ket X|\chit \sigma>$.

Let us introduce some nomenclature related to the set $\q^A$ of functions $\sigma\colon A\to \q$.
We will extensively use such functions with $A = [n]$ starting with this section, and with $A$ being a partition of $[n]$ starting with \rf{sec:equalElements}.
We identify such functions with sets of mappings
\begin{equation}
\label{eqn:functionOnA}
\sigma = \sfigB{ a_1\mapsto v_1, a_2\mapsto v_2, \dots, a_\ell\mapsto v_\ell },
\end{equation}
with $a_i\in A$ all distinct and $v_i\in \q\setminus\{0\}$.
This notation means that $\sigma(a_i) = v_i$ for all $i\in[\ell]$, and $\sigma(a) = 0$ for all $a\in A\setminus\{a_1,\dots,a_\ell\}$.
(The corresponding $A$ can be deduced from the context.)
In particular, $\emptyset$ stands for the identically zero function.
We are ignoring mappings of the form $a\mapsto 0$.
Adding (or removing) such an assignment to the list in~\rf{eqn:functionOnA} will not change the latter.
The \emph{support} of $\sigma$ is $\supp(\sigma) = \{a_1,\dots,a_\ell\}$: the set of elements of $A$ on which $\sigma$ is non-zero.
We write $|\sigma|$ for the size of its support, $\ell$.

We use representation~\rf{eqn:functionOnA} to modify such functions.
E.g., $(\sigma\setminus\{a\mapsto v\})\cup \{b\mapsto w\}$ denotes the function $\sigma'$ that agrees to $\sigma$ everywhere except $a$ and $b$, where $\sigma'(a)=0$ and $\sigma'(b)=w$ (unless $a=b$, in which case, we simply have $\sigma'(a)=w$).
When we use such notation, we always guarantee that the things removed were actually in the value table of the function, and there are no collisions: e.g., in the above example, it is actually the case that $\sigma(a)=v$ and $\sigma(b)=0$ (assuming $a\ne b$).

\subsection{
\texorpdfstring
	{Quantum Query Algorithm and the Space $\cA$}
	{Quantum Query Algorithm and the Space A}
}
\label{sec:modelAlgorithm}

In this section, we define our model of quantum query algorithms.
We follow a standard model with the input oracle in the phase, which is equivalent to the input oracle in the register up to a change of basis.

\myfigure
\label{fig:alg1}
=====
\[
\begin{quantikz}
\lstick[3]{$\ket A|0>$} 
&\lstick{$\cI$}\slice{$\psi_{x,0}$}
& \gate[3]{U_0}       & \gate[2]{O_x}      & \gate[3]{U_1}       & \gate[2]{O_x}      & \midstick[3, brackets=none]{$\cdots$} \qw & \gate[3]{U_{T-1}}     & \gate[2]{O_x}      & \gate[3]{U_T}       & \qw &\\
&\lstick{$\cC$}
&\slice{$\psi'_{x,0}$}&\slice{$\psi_{x,1}$}&\slice{$\psi'_{x,1}$}&\slice{$\psi_{x,2}$}& \qw  \slice{$\psi_{x,T-1}$}               &\slice{$\psi'_{x,T-1}$}&\slice{$\psi_{x,T}$}&\slice{$\psi'_{x,T}$}& \qw &\\
&\lstick{$\cW$}
&                     & \qw                & \qw                 & \qw                & \qw                                       &                       & \qw                &                     & \qw &
\end{quantikz}
\]
-----
The quantum query algorithm from~\rf{eqn:program}.
The sequence of states $\psi_{x,t}$ and $\psi'_{x,t}$ from~\rf{eqn:algorithmStates} and~\rf{eqn:algorithmSequenceOfStates} is depicted.
=====

The space of the quantum query algorithm is $\cA = \cI\otimes \cC\otimes \cW$, where $\cI$ is an $n$-qudit storing the index of the queried input variable, 
$\cC$ is an $q$-qudit,
and $\cW$ is an arbitrary working register.
The algorithm is a unitary in $\cA$ of the form (see also \rf{fig:alg1})
\begin{equation}
\label{eqn:program}
U_T O_x U_{T-1} \cdots O_x U_1 O_x U_0.
\end{equation}
Here $U_t$ are arbitrary input-independent unitaries in $\cA$, and $O_x$ is the query to the input string 
$x = (x_1,\dots,x_n) \in\q^n$.
We assume the query works in the phase, i.e., it is a unitary operator defined by
\begin{equation}
\label{eqn:oracle}
O_x\colon \cI\otimes \cC\to \cI\otimes \cC \colon \quad \ket I|i> \ket C|c> \mapsto \omega_q^{cx_i} \ket I|i> \ket C|c>,
\end{equation}
where $\omega_q = \ee^{2\pi\ii/q}$.
The query complexity of the algorithm is $T$, the number of times $O_x$ is executed.
We will always assume $T\le n$, because any problem can be solved in $n$ queries.

We use $\ket A|0>$ to denote the initial state of the algorithm.
We define the intermediate states $\psi_{x,t}$ and $\psi'_{x,t}$ of the algorithm~\rf{eqn:program} on the input $x$ as follows.
First, $\psi_{x,0} = \ket A|0>$, and then, recursively, for $t=0,\dots, T$:
\begin{equation}
\label{eqn:algorithmStates}
\psi'_{x,t} = U_t \psi_{x,t}
\qqand
\psi_{x,t+1} = O_x \psi'_{x,t}.
\end{equation}

Thus, $\psi_{x,t}$ is the state after the $t$-th query in~\rf{eqn:program}, and $\psi'_{x,t}$ before the $(t+1)$-st query.
In other words, the algorithm proceeds as follows:
\begin{equation}
\label{eqn:algorithmSequenceOfStates}
\psi_{x,0} \maps{U_0} \psi'_{x,0} \maps{O_x} \psi_{x,1} \maps{U_1} \psi'_{x,1} \maps{O_x} \cdots\maps{U_{T-1}} \psi'_{x,T-1} \maps{O_x} \psi_{x,T} \maps{U_{T}} \psi'_{x,T}.
\end{equation}
In particular, $\psi'_{x,T}$ is the final state of the algorithm.
%(And we do not need $\psi_{x,T+1}$).

At the end, a measurement is performed in $\cA$ that yields an output $\rho\in R$.
Without loss of generality, we may assume it is an orthogonal measurement given by $\{\Pi_\rho\}$ with $\rho$ ranging over the set $R$ of responses.
Thus,
\begin{equation}
\label{eqn:successProbabilityIndividualInput}
\Pr\skA[\text{The algorithm outputs $\rho$ on input $x$}] = \norm|{ \Pi_\rho \psi'_{x,T} }|^2.
\end{equation}

From now on, we will fix the algorithm~\rf{eqn:program}.
The idea is that it is going to be an arbitrary quantum algorithm for which we prove small success probability (assuming the number of queries $T$ is small).

\subsection{
\texorpdfstring
	{Uniform Probability Distribution and the Space $\cX$}
	{Uniform Probability Distribution and the Space X}
}
\label{sec:zhandry}
In this section, we consider action of the algorithm assuming the uniform probability distribution on all input strings.
Later, we will transfer it to arbitrary probability distributions using the transfer operators.
The content of this section is independent of the problem being solved.

We define the register
\[
\cX = \cX_1\otimes \cdots \otimes \cX_n
\]
storing the input string, and which is not directly accessible to the algorithm.
Here each $\cX_i$ is isomorphic to $\bC^q$ (stores an element of $\q$).
Thus, the whole register $\cX$ stores the input string $x\in \q^n$.
The algorithm is executed on the initial state
\begin{equation}
\label{eqn:zhandryInitialOriginal}
\psi_0 = \frac{1}{\sqrt{q^n}} \sum_{x\in \q^n} \ket X|x>\otimes \ket A|0>
\end{equation}
involving the uniform superposition over all inputs in $\cX$ (corresponding to the uniform probability distribution).
We define the combined input oracle $O$ that simulates the input oracle $O_x$ from~\rf{eqn:oracle} for all $x$ by interweaving registers $\cX$ and $\cA$.
Formally, it is a unitary operator
\begin{equation}
\label{eqn:zhandryOracleStandard}
O\colon \cX\otimes \cI\otimes \cC \to \cX\otimes \cI\otimes \cC
\colon\quad
\ket X|x>\ket I|i> \ket C|c> \mapsto 
\omega_q^{cx_i} \ket X|x>\ket I|i> \ket C|c>,
\end{equation}
where $x_i$ is the $i$-th symbol of $x=(x_1,\dots,x_n)$.
It is the direct sum $O = \bigoplus_x O_x$ with $O_x$ as in~\rf{eqn:oracle}
It is often convenient to write it as a different direct sum $O = \bigoplus_{i,c} O_{i,c}$ of unitary operators
\begin{equation}
\label{eqn:zhandryOic}
O_{i,c}\colon \cX\to \cX\colon \quad \ket X|x> \mapsto \omega_q^{cx_i}\ket X|x>.
\end{equation}
The action of the algorithm~\rf{eqn:program} on the combined space $\cX\otimes\cA$ is given by (see also \rf{fig:alg2})
\begin{equation}
\label{eqn:zhandryAlgorithm}
U_T O U_{T-1} \cdots O U_1 O U_0.
\end{equation}
This is similar to~\rf{eqn:program} but with $O_x$ replaced with $O$.
Note that we generally omit identities on extra registers, therefore, $U_t$ in~\rf{eqn:zhandryAlgorithm} should, more strictly, be $I_{\cX}\otimes U_t$, and $O$ should be $O\otimes I_{\cW}$.

\myfigure
\label{fig:alg2}
=====
\[
\begin{quantikz}
\lstick{$\frac{1}{\sqrt{q^n}} \sum_{x\in \q^n} \ket |x>$}
&\lstick{$\cX$}
& \qw                 & \gate[3]{O}        & \qw                 & \gate[3]{O}        & \midstick[4, brackets=none]{$\cdots$} \qw & \qw                   & \gate[3]{O}        & \qw                 & \qw &\\
\lstick[3]{$\ket A|0>$} 
&\lstick{$\cI$}\slice{$\psi_{0}$}
& \gate[3]{U_0}       &                    & \gate[3]{U_1}       &                    &                                       \qw & \gate[3]{U_{T-1}}     &                    & \gate[3]{U_T}       & \qw &\\
&\lstick{$\cC$}
&\slice{$\psi'_{  0}$}&\slice{$\psi_{  1}$}&\slice{$\psi'_{  1}$}&\slice{$\psi_{  2}$}& \qw  \slice{$\psi_{T-1}$}                 &\slice{$\psi'_{  T-1}$}&\slice{$\psi_{  T}$}&\slice{$\psi'_{  T}$}& \qw &\\
&\lstick{$\cW$}
&                     & \qw                & \qw                 & \qw                & \qw                                       &                       & \qw                &                     & \qw &
\end{quantikz}
\]
-----
A quantum algorithm from~\rf{eqn:zhandryAlgorithm} combining 
the quantum algorithm from \rf{fig:alg1} with the extra register $\reg X$ storing input strings.
The latter is initialised to the uniform superposition over all inputs in $\q^n$.
The states $\psi_t$ and $\psi'_t$ from \rf{defn:algorithmStates} and~\rf{eqn:zhandrySequenceOfStates} are depicted.
=====

We call the states the algorithm~\rf{eqn:zhandryAlgorithm} goes through on the initial state $\psi_0$ of~\rf{eqn:zhandryInitialOriginal} the \emph{uniform states of the algorithm}.

\begin{defn}[Uniform States of the Algorithm]
\label{defn:algorithmStates}
For $t=0,\dots, T$, define the following states recursively starting with $\psi_0$ as in~\rf{eqn:zhandryInitialOriginal}:
\begin{equation}
\label{eqn:zhandryStates}
\psi_t' = U_t \psi_t
\qqand
\psi_{t+1} = O \psi'_t.
\end{equation}
\end{defn}

By construction, 
\begin{equation}
\label{eqn:zhandryToIndividual}
\psi_t = \frac{1}{\sqrt{q^n}} \sum_{x\in \q^n} \ket X|x> \ket A|\psi_{x,t}>
\qqand
\psi'_t = \frac{1}{\sqrt{q^n}} \sum_{x\in \q^n} \ket X|x> \ket A|\psi'_{x,t}>
\end{equation}
with $\psi_{x,t}$ and $\psi'_{x,t}$ being the states of the algorithm on individual inputs as in~\rf{eqn:algorithmStates}.
Again, we have the following sequence of states:
\begin{equation}
\label{eqn:zhandrySequenceOfStates}
\psi_0 \maps{U_0} \psi'_0 \maps{O} \psi_1 \maps{U_1} \psi'_1 \maps{O} \cdots\maps{U_{T-1}} \psi'_{T-1} \maps{O} \psi_{T} \maps{U_{T}} \psi'_{T}.
\end{equation}

Let us now define the Fourier basis.
The Fourier basis in $\bC^q$ consists of the following vectors
\begin{equation}
\label{eqn:FourierDef}
\ket |\chit a> = \frac1{\sqrt q} \sum_{b\in \q} \omega_q ^{ab} \ket |b>,
\end{equation}
where $a$ ranges over $\q$.
The most important one is $\ket |\chit 0>$, which is the uniform superposition in $\bC^q$.
The Fourier basis in $\cX$ is given by
\begin{equation}
\label{eqn:FourierSigma}
\ket X|\chit \sigma> = \ket X_1|\chit {\sigma(1)}>\otimes \cdots\otimes \ket X_n|\chit {\sigma(n)}>,
\end{equation}
where $\sigma$ ranges over all the functions in $\q^n$.
In particular, the uniform initial state $\psi_0$ of the algorithm 
can be written as
\begin{equation}
\label{eqn:zhandryInitialFourier}
\psi_0 = \ket X|\chit \emptyset>\ket A|0>,
\end{equation}
where $\emptyset$ is the identically zero function on $[n]$ (per our notational conventions of \rf{sec:prelim}).
\medskip

We will now describe the phase kickback trick.
Consider the constituent $O_{i,c}$ of the input oracle defined in~\rf{eqn:zhandryOic}.
It is a tensor product of the following operation on $\cX_i$:
\[
O_{i,c}
\colon \cX_i\to \cX_i\colon\quad
\ket X_i |x_i> \mapsto \omega_q^{cx_i }\ket X_i|x_i>
\]
and the identity on the remaining $\cX_j$.
Applying this to a Fourier basis state in $\bC^q$:
\[
O_{i,c} 
\colon \cX_i\to \cX_i\colon\quad
\frac1{\sqrt q} \sum_{b\in \q} \omega_q ^{ab} \ket X_i |b>
\mapsto 
\frac1{\sqrt q} \sum_{b\in \q} \omega_q ^{(a+c)b}  \ket X_i |b>,
\]
which reads in the the Fourier basis as 
$
O_{i,c} \colon \ket X_i |\chit a>
\mapsto 
\ket X_i|\chit {a+c}>,
$
where addition is performed in $\bZ_q$.
Attaching the remaining registers from $\cX$, we get the following action of $O_{i,c}$ on the Fourier basis of $\cX$:
\begin{equation}
\label{eqn:zhandryOicFourier}
O_{i,c}
\colon \cX\to \cX\colon\quad
\ket X |\chit \sigma> 
\mapsto 
\ket X |\chit {\sigma+\{i\mapsto c\}}>,
\end{equation}
where the function $\{i\mapsto c\}$ maps $i$ to $c$ and is zero everywhere else.
Addition of functions in~\rf{eqn:zhandryOicFourier} is performed element-wise.
For the whole input oracle $O$, we have
\begin{equation}
\label{eqn:zhandryOracleFourier}
O \colon \cX\otimes \cI\otimes \cC \to\cX\otimes \cI\otimes \cC
\colon\quad\ket X |\chit \sigma> \ket I|i> \ket C|c>
\mapsto 
\ket X |\chit {\sigma+\{i\mapsto c\}}> \ket I|i> \ket C|c>.
\end{equation}
Thus, we see that the input oracle $O$ acts in $\cX$ by performing addition in the Fourier basis.
\medskip

One can see from~\rf{eqn:zhandryOicFourier} that (assuming all $a_i$ are pairwise distinct):
\[
O_{a_t, c_t}\cdots O_{a_2,c_2} O_{a_1,c_1} \ket X|\chit \emptyset>
=
\ket X|\chit {\{a_1\mapsto c_1, a_2\mapsto c_2,\dots, a_t\mapsto c_t\}}>.
\]
From this, we draw intuition that a vector $\ket X|\chit \sigma>$ corresponds to the algorithm ``knowing'' the values of the input variables in the support of $\sigma$, and not knowing the remaining ones.
Motivated by this, for a non-negative integer $t$, we define the following subspace of $\cX$, which corresponds to the algorithm ``knowing'' the values of at most $t$ input variables: 
\begin{equation}
\label{eqn:Xlet}
\cX_{\le t} = \spn_{\sigma\in \q^n\colon |\sigma|\le t} \ket X|\chit \sigma>.
\end{equation}

These subspaces form an increasing sequence as $t$ grows:
\[
\cX_{\le 0} \subseteq \cX_{\le 1} \subseteq \cX_{\le 2} \subseteq \cdots \subseteq \cX_{\le t} \subseteq \cdots.
\]

\begin{prp}
\label{prp:inH}
For the uniform states of the algorithm, we have: $\psi_t, \psi'_t \in \cX_{\le t}\otimes \cA$.
\end{prp}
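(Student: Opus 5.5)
We argue by induction on $t$, following the sequence of states~\rf{eqn:zhandrySequenceOfStates}. For the base case, \rf{eqn:zhandryInitialFourier} gives $\psi_0 = \ket X|\chit \emptyset>\ket A|0>$. Since $|\emptyset| = 0$, this lies in $\cX_{\le 0}\otimes\cA$.

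The passage $\psi_t \mapsto \psi'_t = U_t\psi_t$ preserves the property. The operator $U_t$ is, strictly speaking, $I_{\cX}\otimes U_t$, so it maps $\cX_{\le t}\otimes\cA$ into itself. Hence $\psi_t\in\cX_{\le t}\otimes\cA$ implies $\psi'_t\in\cX_{\le t}\otimes\cA$.

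The main step is the query $\psi_{t+1} = O\psi'_t$, where we show that $O$ maps $\cX_{\le t}\otimes\cA$ into $\cX_{\le t+1}\otimes\cA$. Since $\cA = \cI\otimes\cC\otimes\cW$, the space $\cX_{\le t}\otimes\cA$ is spanned by product vectors $\ket X|\chit\sigma>\ket I|i>\ket C|c>\otimes w$ with $|\sigma|\le t$, $i\in[n]$, $c\in\q$ and $w\in\cW$. By the phase kickback formula~\rf{eqn:zhandryOracleFourier}, $O$ maps such a vector to $\ket X|\chit{\sigma+\{i\mapsto c\}}>\ket I|i>\ket C|c>\otimes w$. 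The function $\sigma+\{i\mapsto c\}$ differs from $\sigma$ only at position $i$, so $\supp(\sigma+\{i\mapsto c\})\subseteq\supp(\sigma)\cup\{i\}$, and its support size is at most $t+1$. (The support may also shrink, if $\sigma(i)+c=0$, which is harmless.) By linearity, $O(\cX_{\le t}\otimes\cA)\subseteq\cX_{\le t+1}\otimes\cA$, and so $\psi_{t+1}\in\cX_{\le t+1}\otimes\cA$.

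Combining the three steps completes the induction for both $\psi_t$ and $\psi'_t$, $t=0,\dots,T$. None of the steps is a real obstacle. The only care needed is to apply the Fourier-basis description~\rf{eqn:zhandryOracleFourier} of $O$ on a spanning set, rather than the computational-basis definition, and to note that $U_t$ acts trivially on $\cX$.
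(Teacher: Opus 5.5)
Your proof is correct and follows the paper's own route: induction on $t$, with the base case from $\psi_0 = \ket X|\chit \emptyset>\ket A|0>$, invariance of $\cX_{\le t}\otimes\cA$ under $U_t$, and the phase-kickback formula~\rf{eqn:zhandryOracleFourier} showing that a query increases $|\sigma|$ by at most one. The only difference is presentational: the paper expands $\psi'_t$ explicitly over $\sigma,i,c$ with working-register vectors $w_{\sigma,i,c}$, whereas you argue on a spanning set and extend by linearity.
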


\begin{proof}
By \rf{eqn:zhandryInitialFourier}, $\psi_0 \in \cX_{\le 0}\otimes \cA$, which is the base of our induction on $t$.

Now assume that $\psi_t\in \cX_{\le t}\otimes \cA$ for some value of $t$, and we will prove a similar statement for $t+1$.
An application of the unitary $U_t$ in $\cA$ does not move the state outside of this space, hence, $\psi'_t\in \cX_{\le t}\otimes \cA$ and can be written as
\[
\psi'_t = \sum_{\sigma\colon |\sigma|\le t} \sum_{i,c} \ket X|\chit \sigma> \ket I|i> \ket C|c> \ket W|w_{\sigma,i,c}>
\]
for some (non-normalised) vectors $w_{\sigma,i,c}$.  By~\rf{eqn:zhandryOracleFourier}, we get that the state of the algorithm after the query is
\[
\psi_{t+1} = O\psi'_t = \sum_{\sigma\colon |\sigma|\le t} \sum_{i,c} \ket X|\chit {\sigma+\{i\mapsto c\}}> \ket I|i> \ket C|c> \ket W|w_{\sigma,i,c}>.
\]
As $\absA|\sigma+\{i\mapsto c\}| \le |\sigma|+1 \le t+1$, we get $\psi_{t+1}\in \cX_{\le t+1}\otimes \cA$.
\end{proof}

\mycutecommand{\S}{\mathfrak{S}}

\subsection{
\texorpdfstring
	{Formulation of the Problem and the Space $\cY$}
	{Formulation of the Problem and the Space Y}
}
\label{sec:spaceY}

In this section, we define the problem being solved, and introduce the notion of types of inputs.
We formalise this via the following general version of a computational problem.

\begin{defn}[Hidden Computational Problem]
\label{defn:problem}
A \emph{hidden computational problem} is given by the following:
a collection $M$ of types; 
a set $Y\subseteq M\times \q^n$ of inputs;
a set $R$ of responses; 
and, for each $\mu\in M$, a set $R_\mu \subseteq R$ of correct responses.
For a fixed input $(\mu,y)\in Y$, the algorithm solving the problem has query access to the input string $y$, and its \emph{success probability} on $(\mu,y)$ is the probability it outputs $\rho\in R_\mu$.
In the worst-case settings, the total success probability is the minimum over all inputs in $Y$.
The the average-case settings, the total success probability is its average success probability on an input sampled from a given probability distribution $p$ on $Y$.
\end{defn}

We call this problem hidden, because the output depends on $\mu\in M$, called the \emph{type} of the input, but the algorithm only has access to $y$ that is correlated to $\mu$.
It is reminiscent of a number of hidden problems in quantum algorithms, most famous being the hidden subgroup problem~\cite{kitaev:HSP}.
If the problem has nothing to hide, this can be achieved by taking $M=D$ and $Y = \sfigA{(y,y) \mid y\in D}$ with $D$ being the domain of the problem.
It is allowed for an input string $y\in \q^n$ to appear several times in $Y$ with different types, moreover, $R_\mu$ can be different for different $(\mu, y)\in Y$ with fixed $y$.

We consider relational problems (the same input can have several correct outputs), which are more general than functional problems (an input can only have one correct output).
We prove lower bounds in the average-case, which imply the same lower bounds in the worst-case.
Using~\rf{eqn:successProbabilityIndividualInput}, we get the following formula for the total success probability of the algorithm in the average-case:
\begin{equation}
\label{eqn:successProbability}
\sum_{(\mu, y)\in Y} p_{\mu,y} \sum_{\rho\in R_\mu} \normA|{ \Pi_\rho \psi'_{y,T} }|^2.
\end{equation}

Similarly as we fixed the algorithm in \rf{sec:modelAlgorithm}, in this section, we fix an instance of the computational problem following \rf{defn:problem}.
\medskip

Mimicking~\rf{sec:zhandry}, we introduce another register $\cY$ inaccessible to the algorithm directly.
Its standard basis is given by the states $\ket Y|\mu, y>$ with $(\mu, y)\in Y$.
In exactly the same way as in~\rf{eqn:zhandryOracleStandard} and~\rf{eqn:zhandryOic}, we define
\begin{equation}
\label{eqn:Yoracle}
O \colon \cY\otimes \cI\otimes \cC \to\cY\otimes \cI\otimes \cC\colon\quad
\ket Y |\mu,y> \ket I|i> \ket C|c> 
\mapsto 
\omega_q^{cy_i} \ket Y |\mu, y>  \ket I|i> \ket C|c>
\end{equation}
and
\begin{equation}
\label{eqn:YOic}
O_{i,c}\colon \cY\to\cY\colon \quad \ket Y |\mu, y> \mapsto \omega_q^{cy_i} \ket Y |\mu, y>.
\end{equation}
Note that we use the same letter $O$ for the input oracle as we did in~\rf{eqn:zhandryOracleStandard} and~\rf{eqn:zhandryOic}. 
We interpret it as $O$ being the direct sum of the input oracles over all $x\in \q^n$ and $(\mu,y)\in Y$, and we restrict $O$ as necessary.
The action of the algorithm on the combined space $\cY\otimes \cA$ is still given by~\rf{eqn:zhandryAlgorithm}.

For $\mu\in M$, we define $Y_\mu = \{y\mid (\mu,y)\in Y\}$ and $\cY_\mu$ as a subspace of $\cY$ spanned by all $\ket Y|\mu, y>$ with $y\in Y_\mu$.
Also, let 
\begin{equation}
\label{eqn:inducedProbabilities}
p_\mu = \sum_{y\in Y_\mu} p_{\mu, y}
\qqand
p'_{\mu,y} = p_{\mu,y}/p_{\mu},
\end{equation}
which give probability distributions $(p_\mu)_{\mu\in M}$ on $M$ and, for each $\mu\in M$, $(p'_{\mu,y})_{y\in Y_\mu}$ on $Y_\mu$.
We assume here that all $p_\mu$ are non-zero.

\begin{defn}
[Transfer Operators]
\label{defn:transfer}
The transfer operators are linear operators given by
\begin{align}
\Upsilon\colon \cX\to\cY\colon\quad &
\frac{1}{\sqrt{q^n}} \ket X|x> \mapsto \sum_{(\mu, y)\in Y\colon y=x} \sqrt{p_{\mu, y}}\ket Y|\mu, y>.
\label{eqn:UpsilonStandard}
\\
\label{eqn:UpsilonMuStandard}
\Upsilon_\mu\colon \cX\to\cY_\mu\colon\quad &
\frac{1}{\sqrt{q^n}} \ket X| x> \mapsto \sqrt{p'_{\mu,x}} \ket Y|\mu, x>,
\end{align}
for all $x\in \q^n$.
In~\rf{eqn:UpsilonMuStandard}, we assume that $p'_{\mu,x}=0$ for $(\mu, x)\notin Y$.
\end{defn}

Thus, $\Upsilon$ transfers $\cX$ to the whole space $\cY$, while individual $\Upsilon_\mu$ only work in the confines of a fixed $\mu\in M$.
They are related by the following formula:
\begin{equation}
\label{eqn:UpsilonDirectSum}
\Upsilon\colon \cX\to\cY\colon\quad
\phi \mapsto \bigoplus_{\mu\in M} \sqrt{p_\mu}\; \Upsilon_\mu \phi.
\end{equation}

Another useful property is as follows.

\begin{fact}
\label{fact:Upsilon Oic commute}
The transfer operators commute with the input oracle: For all $i,c$ and $\mu$, it holds that 
$O_{i,c}\Upsilon = \Upsilon O_{i,c}$
and
$O_{i,c}\Upsilon_\mu = \Upsilon_\mu O_{i,c}$.
\end{fact}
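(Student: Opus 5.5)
The plan is to check the identities directly on the computational basis $\{\ket X|x>\}$ of $\cX$. Every operator involved is linear, and $\{\ket X|x>\}$ spans $\cX$, so it suffices to show that both sides agree on each $\frac{1}{\sqrt{q^n}}\ket X|x>$. The key observation is that $O_{i,c}$ is diagonal in the computational bases of both $\cX$ and $\cY$, with phase $\omega_q^{cx_i}$ on $\ket X|x>$ and phase $\omega_q^{cy_i}$ on $\ket Y|\mu,y>$. Meanwhile, $\Upsilon$ and $\Upsilon_\mu$ send $\ket X|x>$ only to vectors $\ket Y|\mu,y>$ with $y=x$.

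For $\Upsilon$, I would compute both sides on $\frac{1}{\sqrt{q^n}}\ket X|x>$.

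For $\Upsilon O_{i,c}$, the oracle first multiplies by $\omega_q^{cx_i}$, and linearity of $\Upsilon$ then gives $\omega_q^{cx_i}\sum_{(\mu,y)\in Y,\, y=x}\sqrt{p_{\mu,y}}\ket Y|\mu,y>$.

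For $O_{i,c}\Upsilon$, we apply~\rf{eqn:YOic} term by term to $\sum_{(\mu,y)\in Y,\, y=x}\sqrt{p_{\mu,y}}\ket Y|\mu,y>$. Each term acquires the phase $\omega_q^{cy_i}$, and since every summand has $y=x$, this phase equals $\omega_q^{cx_i}$. It factors out, and the two sides coincide.

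For $\Upsilon_\mu$ the computation is the same with a single term. Both sides give $\omega_q^{cx_i}\sqrt{p'_{\mu,x}}\ket Y|\mu,x>$ when $(\mu,x)\in Y$, and both give zero otherwise.

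No step here is a real obstacle. The only care needed is to note that~\rf{eqn:YOic} is well defined on $\cY_\mu\subseteq\cY$, and that $O_{i,c}$ acts on $\cY_\mu$ as a restriction, which is legitimate because it is diagonal. Alternatively, the statement for $\Upsilon$ follows from the one for $\Upsilon_\mu$ via the direct sum decomposition~\rf{eqn:UpsilonDirectSum}, since $O_{i,c}$ preserves each $\cY_\mu$.
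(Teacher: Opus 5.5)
Your proof is correct and follows the paper's argument. The paper checks the identity for $\Upsilon_\mu$ on each computational basis vector of $\mathcal{X}$, getting $\omega_q^{cx_i}\sqrt{p'_{\mu,x}}$ times the basis vector $(\mu,x)$ of $\mathcal{Y}_\mu$ on both sides, and says the $\Upsilon$ case is analogous; your alternative derivation of the $\Upsilon$ case from the direct sum decomposition, using that $O_{i,c}$ preserves each $\mathcal{Y}_\mu$, is also valid.
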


\begin{proof}
For the second one, we just have by~\rf{eqn:zhandryOic}, \rf{eqn:YOic}, and~\rf{eqn:UpsilonMuStandard} that, for all $x\in\q^n$:
\[
\frac1{\sqrt{q^n}} O_{i,c}\Upsilon_\mu \ket X|x> 
=
\frac1{\sqrt{q^n}} \Upsilon_\mu O_{i,c} \ket X|x>
=
\omega_q^{cx_i} \sqrt{p'_{\mu,x}}  \ket Y|\mu, x>.
\]
The first one is shown similarly.
\end{proof}

The main point of the transfer operators is that they allow us to transform uniform states of the algorithm from $\cX$ to $\cY$ or $\cY_\mu$, which use different probability distributions.
This is formalised by the following trivial proposition.

\begin{prp}
\label{prp:UpsilonProperty}
Let $\mu$ be a type and let $t$ be an integer.
Then, whenever the uniform state of the algorithm $\psi_t$ from \rf{defn:algorithmStates} is defined,
$\Upsilon \psi_t$ and $\Upsilon_\mu \psi_t$ are the states of the algorithm~\rf{eqn:zhandryAlgorithm} after the $t$-th query on the initial states
\[
\sum_{(\mu, y)\in Y} \sqrt{p_{\mu,y}} \ket Y |\mu, y>\ket A|0>
\qqand
\sum_{y\in Y_\mu} \sqrt{p'_{\mu,y}}\ket Y |\mu, y>\ket A|0>,
\]
respectively.
Similarly, $\Upsilon \psi'_t$ and $\Upsilon_\mu \psi'_t$ give the states before the $(t+1)$-st query on the same initial states.
\end{prp}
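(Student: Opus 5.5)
The plan is an induction on $t$ along the sequence~\rf{eqn:zhandrySequenceOfStates}, using two facts: $\Upsilon$ (respectively $\Upsilon_\mu$) acts only on the input register, so it commutes with every $U_t$; and it commutes with the oracle $O$. Throughout, $\Upsilon$ and $\Upsilon_\mu$ are understood as $\Upsilon\otimes I_{\cA}$ and $\Upsilon_\mu\otimes I_{\cA}$.

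\textbf{Base case.} By~\rf{eqn:zhandryInitialOriginal}, $\psi_0 = \frac1{\sqrt{q^n}}\sum_x \ket X|x>\ket A|0>$. Definition~\ref{defn:transfer} maps each term $\frac1{\sqrt{q^n}}\ket X|x>$ to $\sum_{(\mu,y)\colon y=x}\sqrt{p_{\mu,y}}\ket Y|\mu,y>$. Summing over $x$ gives exactly $\sum_{(\mu,y)\in Y}\sqrt{p_{\mu,y}}\ket Y|\mu,y>\ket A|0>$. In the same way, $\Upsilon_\mu\psi_0=\sum_{y\in Y_\mu}\sqrt{p'_{\mu,y}}\ket Y|\mu,y>\ket A|0>$, using the convention $p'_{\mu,x}=0$ outside $Y$. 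So both claims hold at $t=0$.

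\textbf{Commutation with the algorithm's steps.}
\begin{itemize}
\item Each $U_t$ acts as $I\otimes U_t$, while the transfer operator acts as $\Upsilon\otimes I_\cA$. These trivially commute, so $\Upsilon\psi'_t = \Upsilon U_t\psi_t = U_t\Upsilon\psi_t$.
\item The oracle decomposes as $O=\bigoplus_{i,c}O_{i,c}\otimes\ket I|i>\bra I|i>\otimes\ket C|c>\bra C|c>$, on $\cX$ and on $\cY$ alike, per~\rf{eqn:zhandryOic} and~\rf{eqn:YOic}. Fact~\ref{fact:Upsilon Oic commute} says $O_{i,c}\Upsilon=\Upsilon O_{i,c}$ for every $i,c$, so summing over $i,c$ gives $O\Upsilon=\Upsilon O$. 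The same argument works for $\Upsilon_\mu$, viewing $O$ restricted to $\cY_\mu$, which it preserves since $O$ is diagonal in the basis $\ket Y|\mu,y>$.
\end{itemize}

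\textbf{Induction step.} We get $\Upsilon\psi_{t+1}=\Upsilon O\psi'_t=O\Upsilon\psi'_t$. Hence if $\Upsilon\psi_t$ is the state after $t$ queries on the stated initial state, then $\Upsilon\psi'_t$ is the state after applying $U_t$, and $\Upsilon\psi_{t+1}$ is the state after the next query. The argument for $\Upsilon_\mu$ is identical.

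There is no real obstacle. The only point needing care is being explicit that the algorithm~\rf{eqn:zhandryAlgorithm} on $\cY\otimes\cA$ uses the oracle from~\rf{eqn:Yoracle}. Fact~\ref{fact:Upsilon Oic commute} is exactly the statement that this oracle intertwines with the one on $\cX$.
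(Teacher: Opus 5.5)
Your induction is correct and is the argument the paper leaves implicit when it calls this proposition trivial. The transfer operators commute with each $U_t$ because they act on different registers, and with $O$ by \rf{fact:Upsilon Oic commute}, which is stated immediately before for this purpose. The paper records the conclusion in the equivalent explicit form \rf{eqn:UpsilonStates1} and~\rf{eqn:UpsilonStates2}, which also follows by applying $\Upsilon$ termwise to the per-input expansion~\rf{eqn:zhandryToIndividual}.
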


In other words, similarly to~\rf{eqn:zhandryToIndividual}:
\begin{align}
\label{eqn:UpsilonStates1}
\Upsilon \psi_t &= \sum_{(\mu, y)\in Y} \sqrt{p_{\mu,y}} \ket Y |\mu, y>\ket A|\psi_{y,t}>
,
&&&\Upsilon_\mu \psi_t &=\sum_{y\in Y_\mu} \sqrt{p'_{\mu,y}}\ket Y |\mu, y>\ket A|\psi_{y,t}>,\\
\label{eqn:UpsilonStates2}
\Upsilon \psi'_t &= \sum_{(\mu, y)\in Y} \sqrt{p_{\mu,y}} \ket Y |\mu, y>\ket A|\psi'_{y,t}>
,
&\text{and}&&\Upsilon_\mu \psi'_t &=\sum_{y\in Y_\mu} \sqrt{p'_{\mu,y}}\ket Y |\mu, y>\ket A|\psi'_{y,t}>.
\end{align}
In particular, for all $\mu\in M$ and all $t$:
\begin{equation}
\label{eqn:UpsilonNorms}
\|\Upsilon \psi_t\| = \|\Upsilon_\mu \psi_t\| = 
\|\Upsilon \psi'_t\| = \|\Upsilon_\mu \psi'_t\| = 1.
\end{equation}

We use $M_\rho$ to denote the set of all $\mu\in M$ such that $\rho \in R_\mu$.
Slightly abusing notation, we use the same letter $M_\rho\colon \cY\to\cY$ to stand for the orthogonal projector onto the span of $\ket Y|\mu,y>$ with $\mu \in M_\rho$.
Recall also from~\rf{sec:modelAlgorithm} that $\{\Pi_\rho\}$ is the final measurement of the algorithm.

\begin{clm}
\label{clm:successProbability}
The total success probability of the algorithm~\rf{eqn:zhandryAlgorithm} on the problem from \rf{defn:problem} equals 
$
\norm|\Pi \Upsilon \psi'_T|^2,
$
where $\Pi\colon \cY\otimes \cA \to \cY\otimes \cA$ is an orthogonal projector given by
\begin{equation}
\label{eqn:Pi}
\Pi = \sum_{\rho\in R} M_\rho\otimes \Pi_\rho.
\end{equation}
\end{clm}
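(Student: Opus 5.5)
The plan is to check three things: that $\Pi$ is an orthogonal projector, that $\Upsilon\psi'_T$ splits into orthogonal per-input pieces, and that $\Pi$ acts on each piece exactly as the correct-response measurement does.

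\textbf{$\Pi$ is an orthogonal projector.} Each $M_\rho$ is diagonal in the standard basis $\ket Y|\mu,y>$ of $\cY$. So on the subspace $\ket Y|\mu,y>\otimes\cA$ the operator $\Pi$ acts as
\[
\sum_{\rho\in R_\mu} \Pi_\rho .
\]
The $\Pi_\rho$ are mutually orthogonal projectors, since $\{\Pi_\rho\}$ is an orthogonal measurement. Hence this sum is an orthogonal projector on $\cA$. It follows that $\Pi$ is a direct sum, over $(\mu,y)\in Y$, of orthogonal projectors, and so is itself an orthogonal projector.

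\textbf{Per-input decomposition.} By \rf{prp:UpsilonProperty}, in the form \rf{eqn:UpsilonStates2},
\[
\Upsilon\psi'_T = \sum_{(\mu,y)\in Y}\sqrt{p_{\mu,y}}\,\ket Y|\mu,y>\ket A|\psi'_{y,T}>.
\]
The action of $\Pi$ computed above then gives
\[
\Pi\Upsilon\psi'_T = \sum_{(\mu,y)\in Y}\sqrt{p_{\mu,y}}\,\ket Y|\mu,y>\otimes \sum_{\rho\in R_\mu}\Pi_\rho\psi'_{y,T}.
\]

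\textbf{Computing the norm.} The vectors $\ket Y|\mu,y>$ are orthonormal, so the squared norm splits as
\[
\norm|\Pi\Upsilon\psi'_T|^2 = \sum_{(\mu,y)} p_{\mu,y}\,\normB|\sum_{\rho\in R_\mu}\Pi_\rho\psi'_{y,T}|^2 .
\]
Because the $\Pi_\rho$ have mutually orthogonal ranges, Pythagoras gives
\[
\normB|\sum_{\rho\in R_\mu}\Pi_\rho\psi'_{y,T}|^2 = \sum_{\rho\in R_\mu}\norm|\Pi_\rho\psi'_{y,T}|^2 .
\]
Substituting this in yields exactly the success probability formula \rf{eqn:successProbability}.

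\textbf{Main obstacle.} The only real point is the first step: seeing that $M_\rho$ is diagonal in the $(\mu,y)$ basis and so commutes with the decomposition. Everything after that is bookkeeping.
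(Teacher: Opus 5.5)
Your proof is correct and takes essentially the same route as the paper. Both expand $\Upsilon\psi'_T$ via the per-input formula and use orthonormality of the standard basis of $\mathcal{Y}$ together with mutual orthogonality of the $\Pi_\rho$ to recover the success-probability sum. The differences are cosmetic: the paper splits the norm over $\rho$ first and then over $(\mu,y)$ (hence its remark about inverting the order of summation), whereas you split over $(\mu,y)$ first, and you additionally check that $\Pi$ is an orthogonal projector, which the paper takes for granted.
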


\begin{proof}
Using~\rf{eqn:UpsilonStates2} and that the projectors $\Pi_\rho$ are mutually orthogonal:
\[
\norm|\Pi \Upsilon \psi'_T|^2 
=\sum_{\rho\in R} \normC|{ (M_\rho\otimes \Pi_\rho)  \sum_{(\mu, y)\in Y} \sqrt{p_{\mu,y}} \ket Y |\mu, y>\ket A|\psi'_{y,t}>}|^2
= \sum_{\rho\in R}\; \sum_{(\mu, y)\in Y\colon \rho\in R_\mu} p_{\mu,y} \normA|\Pi_\rho \psi'_{T,y}|^2,
\]
which equals~\rf{eqn:successProbability} with inverted order of summation.
\end{proof}

\subsection{Anti-concentration and Small Success Probability}
\label{sec:overallAntiConcentration}

Our next goal is an upper bound on the success probability of the algorithm given by \rf{clm:successProbability}.
The following notion is of importance for this task.

\begin{defn}[Anti-concentration]
\label{defn:anticoncentration}
Let $M$, $R$, $\cY$, and $M_\rho\colon \cY\to\cY$ be as in \rf{sec:spaceY}.
Then, for a real number $0<\gamma<1$, we say that a vector $\phi$ in $\cY$ or $\cY\otimes \cA$ is \emph{$\gamma$-anti-concentrated} if $\|M_\rho\phi\|\le \gamma\|\phi\|$ for all $\rho\in R$.
We say that a subspace $\cH\subseteq \cY$ is $\gamma$-anti-concentrated if all vectors in $\cH$ are $\gamma$-anti-concentrated.
\end{defn}

Let us note that while the set of $\gamma$-anti-concentrated vectors in $\cY$ is closed under multiplication by scalars, it is \emph{not} closed under addition, and does \emph{not} form a subspace in general.
As a simplified example, the elements of the Fourier basis in $\bC^q$ are $1/\sqrt{q}$-anti-concentrated with respect to the standard basis, but the whole space $\bC^q$ clearly contains vectors that do not possess this property.

The notion of anti-concentration as in \rf{defn:anticoncentration} turns out to be too weak for the vectors in $\cY\otimes\cA$, and we need the following its strengthening.

\begin{defn}[Strong anti-concentration]
\label{defn:strongAntiConcentration}
We say that a vector $\psi\in\cY\otimes \cA$ is \emph{strongly $\gamma$-anti-concentrated}, if $\psi\in \cH\otimes \cA$ with $\cH$ being a $\gamma$-anti-concentrated subspace of $\cY$.
\end{defn}

For clarity, let us explicitly state out the following:
\begin{clm}
\label{clm:strongAnticoncentration}
If $\psi\in \cY\otimes \cA$ is strongly $\gamma$-anti-concentrated, then it is $\gamma$-anti-concentrated.
\end{clm}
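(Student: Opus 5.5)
The plan is to expand $\psi$ along an orthonormal basis of $\cA$ and check the defining inequality of \rf{defn:anticoncentration} coordinate by coordinate. Let $\cH\subseteq\cY$ be a $\gamma$-anti-concentrated subspace with $\psi\in\cH\otimes\cA$. Fix an orthonormal basis $\{e_j\}$ of $\cA$ and write
\[
\psi = \sum_j h_j\otimes e_j,
\]
where each $h_j \in \cH$. This is possible because $\cH\otimes\cA = \bigoplus_j \cH\otimes e_j$.

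Next, fix a response $\rho\in R$. In \rf{defn:anticoncentration}, the projector $M_\rho$ acting on $\cY\otimes\cA$ means $M_\rho\otimes I_\cA$. Hence
\[
M_\rho\psi = \sum_j (M_\rho h_j)\otimes e_j .
\]
Since the $e_j$ are orthonormal, Pythagoras gives
\[
\|M_\rho\psi\|^2 = \sum_j \|M_\rho h_j\|^2 .
\]
Each $h_j$ lies in the $\gamma$-anti-concentrated subspace $\cH$, so $\|M_\rho h_j\|\le\gamma\|h_j\|$ for every $j$. Summing, we get
\[
\|M_\rho\psi\|^2\le\gamma^2\sum_j\|h_j\|^2=\gamma^2\|\psi\|^2 .
\]
This holds for every $\rho$, so $\psi$ is $\gamma$-anti-concentrated.

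There is no real obstacle. The only point needing care is that the components $h_j$ must all lie in $\cH$ itself, not merely be individually anti-concentrated vectors of $\cY$. Anti-concentration is not closed under addition, but here we never add vectors inside $\cY$: the different $h_j$ sit in orthogonal slots $\cH\otimes e_j$, so their contributions combine through Pythagoras rather than through the triangle inequality.
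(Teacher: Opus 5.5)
Your proof is correct and is the paper's argument: expand $\psi$ along an orthonormal basis of $\cA$, use orthogonality of the slots to write $\|M_\rho\psi\|^2=\sum_j\|M_\rho h_j\|^2$, and bound each term by $\gamma^2\|h_j\|^2$. One small correction to your closing remark: the computation only uses that each $h_j$ is $\gamma$-anti-concentrated (which is how the paper phrases it), and membership in $\cH$ is simply what guarantees this for whatever basis you choose.
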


\begin{proof}
Let $\{v_i\}$ be an orthonormal basis in $\cA$.
We can write
$
\psi = \sum_i \ket Y|\phi_i> \ket A |v_i>,
$
where all $\phi_i$ are $\gamma$-anti-concentrated.
Then:
\[
\|M_\rho \psi\|^2 
= \sum_i \|M_\rho \phi_i\|^2 
\le \gamma^2 \sum_i \|\phi_i\|^2
= \gamma^2 \|\psi\|^2.
\qedhere
\]
\end{proof}

The crux of \rf{defn:strongAntiConcentration} is the following result.
\begin{prp}
\label{prp:antiConcentrationMeasurement}
Assume $\psi\in \cY\otimes \cA$ is strongly $\gamma$-anti-concentrated.
Then, for every choice of an orthogonal measurement $\{\Pi_\rho\}_{\rho\in R}$ in $\cA$, we have
\[
\|\Pi\psi\| \le \gamma \|\psi\|,
\]
where $\Pi = \sum_{\rho\in R} M_\rho\otimes \Pi_\rho$ is as defined in~\rf{eqn:Pi}.
\end{prp}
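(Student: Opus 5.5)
The plan is to reduce the statement to a single vector in $\cH$ by choosing a good basis of $\cA$, adapted to the measurement rather than an arbitrary one as in the proof of \rf{clm:strongAnticoncentration}. Let $\cH\subseteq\cY$ be the $\gamma$-anti-concentrated subspace with $\psi\in\cH\otimes\cA$. Since $\{\Pi_\rho\}$ is an orthogonal measurement, the projectors $\Pi_\rho$ are mutually orthogonal and sum to the identity, so $\cA=\bigoplus_\rho \operatorname{im}\Pi_\rho$. Fix an orthonormal basis $\{v_{\rho,j}\}_j$ of each $\operatorname{im}\Pi_\rho$; together these form an orthonormal basis of $\cA$. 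Expand
\[
\psi=\sum_{\rho\in R}\sum_j \ket Y|\phi_{\rho,j}>\ket A|v_{\rho,j}>,
\]
where $\phi_{\rho,j}\in\cH$ because $\psi\in\cH\otimes\cA$: each coefficient $\phi_{\rho,j}$ is obtained by applying $I\otimes\bra{v_{\rho,j}}$ to $\psi$, and this map sends $\cH\otimes\cA$ into $\cH$.

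Next I compute the action of $\Pi$. We have $\Pi_{\rho'}v_{\rho,j}=1_{\rho=\rho'}v_{\rho,j}$, hence
\[
\Pi\psi=\sum_{\rho}\sum_j \ket Y|M_\rho\phi_{\rho,j}>\ket A|v_{\rho,j}>.
\]
Because the $v_{\rho,j}$ are orthonormal, this gives
\[
\|\Pi\psi\|^2=\sum_{\rho,j}\|M_\rho\phi_{\rho,j}\|^2\le\gamma^2\sum_{\rho,j}\|\phi_{\rho,j}\|^2=\gamma^2\|\psi\|^2.
\]
The inequality holds because each $\phi_{\rho,j}$ lies in $\cH$ and is therefore $\gamma$-anti-concentrated. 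Taking square roots finishes the proof.

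The one point needing care is exactly the one where plain (non-strong) anti-concentration would fail. Every coefficient vector must lie in $\cH$ for the particular basis adapted to $\{\Pi_\rho\}$. This holds for every basis only because $\cH$ is a subspace, which is the content of \rf{defn:strongAntiConcentration}. If $R$ or $\cA$ is infinite, the same argument goes through with convergent sums; I will note this but not dwell on it.
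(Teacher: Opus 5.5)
Your proof is correct and is essentially the paper's argument. The paper writes $\psi=\sum_\rho u_\rho$ with $u_\rho=(I\otimes\Pi_\rho)\psi\in\cH\otimes\mathrm{im}\,\Pi_\rho$, uses $\|\Pi\psi\|^2=\sum_\rho\|M_\rho u_\rho\|^2$, and bounds each term by applying \rf{clm:strongAnticoncentration} to $u_\rho$; your basis expansion adapted to $\{\Pi_\rho\}$ just inlines that claim's basis-expansion proof into this decomposition.
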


\begin{proof}
Let $\cH$ be a $\gamma$-anti-concentrated subspace of $\cY$ such that $\psi \in \cH\otimes \cA$.
Write an orthogonal decomposition $\psi = \sum_{\rho\in R} u_\rho$, where each $u_\rho \in \cH\otimes \Pi_\rho$ is also strongly $\gamma$-anti-concentrated.
Then, using~\rf{clm:strongAnticoncentration}:
\[
\|\Pi \psi\|^2
=
\sum_{\rho\in R} \|M_\rho u_\rho\|^2
\le
\sum_{\rho\in R} \gamma^2\|u_\rho\|^2
= \gamma^2 \|\psi\|^2.
\qedhere
\]
\end{proof}

\rf{prp:antiConcentrationMeasurement} shows that if the final state of the algorithm is strongly anti-concentrated, the algorithm has small success probability.
The next lemma states that the same holds if the final state is close to being strongly anti-concentrated.

\begin{lem}[Success Probability of the Algorithm]
\label{lem:successProbability}
Assume the final state $\Upsilon\psi_T'\in \cY\otimes \cA$ of the algorithm can be decomposed as $\Upsilon\psi_T' = u + v$, where $\|u\|\le \delta$ and $v$ is strongly $\gamma$-anti-concentrated.
Then, the success probability of the algorithm is at most $(\gamma + \delta)^2$.
\end{lem}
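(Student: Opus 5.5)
The plan is to combine \rf{clm:successProbability} with \rf{prp:antiConcentrationMeasurement} via the triangle inequality. By \rf{clm:successProbability}, the total success probability equals $\|\Pi\Upsilon\psi'_T\|^2$, where $\Pi = \sum_{\rho\in R} M_\rho\otimes\Pi_\rho$ is the orthogonal projector from~\rf{eqn:Pi}. It therefore suffices to show $\|\Pi\Upsilon\psi'_T\| \le \gamma+\delta$.

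Using the decomposition $\Upsilon\psi'_T = u+v$ and linearity of $\Pi$, I will write $\Pi\Upsilon\psi'_T = \Pi u + \Pi v$. The two terms are then bounded separately:
\[
\|\Pi\Upsilon\psi'_T\| \le \|\Pi u\| + \|\Pi v\| .
\]
For the first term, $\Pi$ is an orthogonal projector, hence a contraction, so $\|\Pi u\|\le\|u\|\le\delta$. For the second term, $v$ is strongly $\gamma$-anti-concentrated, so \rf{prp:antiConcentrationMeasurement} applied to the algorithm's final measurement $\{\Pi_\rho\}$ gives $\|\Pi v\|\le\gamma\|v\|$.

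The one point requiring care is bounding $\|v\|$. I will show $\|v\|\le 1$ as follows. By~\rf{eqn:UpsilonNorms}, $\|\Upsilon\psi'_T\|=1$. However, $u$ and $v$ need not be orthogonal, so a priori only $\|v\|\le 1+\delta$ holds. This would give the weaker bound $\gamma(1+\delta)+\delta$ rather than $\gamma+\delta$.

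To fix this, I will replace $v$ by its component in the anti-concentrated part. Let $\cH\subseteq\cY$ be a $\gamma$-anti-concentrated subspace with $v\in\cH\otimes\cA$, as guaranteed by \rf{defn:strongAntiConcentration}. Let $P$ be the orthogonal projector onto $\cH\otimes\cA$. Define new vectors $v' = P\Upsilon\psi'_T$ and $u' = (I-P)\Upsilon\psi'_T$. Then:
\begin{itemize}
\item $v'$ is strongly $\gamma$-anti-concentrated, since it lies in $\cH\otimes\cA$;
\item $\|v'\|\le\|\Upsilon\psi'_T\|=1$, since $P$ is a projector;
\item $\|u'\| = \|(I-P)(u+v)\| = \|(I-P)u\| \le \|u\| \le \delta$, since $(I-P)v = 0$.
\end{itemize}
Running the triangle-inequality argument above with $u',v'$ in place of $u,v$ yields $\|\Pi\Upsilon\psi'_T\|\le\delta+\gamma$. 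Squaring gives the claimed success probability bound $(\gamma+\delta)^2$.
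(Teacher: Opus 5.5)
Your proof is correct and follows the paper's argument. You reduce to $\|\Pi\Upsilon\psi'_T\|$ via the success-probability claim, replace the decomposition by an orthogonal one so that $\|v\|\le 1$ while keeping $\|u\|\le\delta$, and finish with the triangle inequality and the anti-concentration proposition; the only cosmetic difference is that the paper obtains orthogonality by rescaling $v$ (projecting $\Upsilon\psi'_T$ onto the line spanned by $v$) rather than projecting onto all of $\mathcal{H}\otimes\mathcal{A}$, and both choices work equally well.
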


\begin{proof}
By \rf{clm:successProbability}, the success probability is given by $\norm|\Pi \Upsilon\psi_T'|^2$.
By rescaling the vector $v$ so that $u = \Upsilon\psi_T'-v$ has minimal norm, we can achieve that $u$ and $v$ are orthogonal.  In particular, $\|v\|\le 1$, and it is still strongly $\gamma$-anti-concentrated.
Then, using \rf{prp:antiConcentrationMeasurement}:
\[
\|\Pi \Upsilon\psi'_T\|
\le
\|\Pi u\| + \|\Pi v\|
\le
\|u\| + \gamma\|v\|
\le
\delta + \gamma.
\qedhere
\]
\end{proof}

\subsection{Example: Polynomial Method}
\label{sec:polynomial}

In this section, we sketch how the polynomial method~\cite{beals:pol} can be interpreted in our framework.
No notion nor result of this section is used anywhere else in the paper, and it can be safely omitted while reading.
For brevity, we will say a degree-$d$ polynomial or a polynomial of degree $d$ meaning a polynomial of degree \emph{at most} $d$.

In the polynomial method, we consider a Boolean function $f\colon \cube \to\bool$.
That is, $q=2$, $R=\{0,1\}$, and it is a functional problem.
Concerning \rf{defn:problem}, one possibility is to identify types with inputs: $Y = \sfigA{(y,y) \mid y\in \cube}$.
We will use $y\in \cube$ instead of $\mu$.
Also, we assume $Y = \cube$, $\cY = (\bC^2)^{\otimes n}$, and write $\ket Y|y>$ instead of $\ket Y|y,y>$.
This means that $\cX = \cY$, and we use $x$ and $y$ interchangeably.

Under these assumptions, $Y_y$ consists of a single element $y$ and $p'_{y,y} = 1$.
The transfer operator $\Upsilon_y$, on the one hand, maps $\frac{1}{\sqrt{2^n}} \ket X|x>$ into $1_{x=y}\ket Y|y>$ by~\rf{eqn:UpsilonMuStandard}, on the other hand, $\psi_t$ into $\ket Y|y> \ket A|\psi_{y,t}>$ by \rf{prp:UpsilonProperty}.
Recall that $\psi_t \in \cX_{\le t}\otimes \cA$ by \rf{prp:inH} and $\cX_{\le t}$ consists of precisely all degree-$t$ polynomials~\cite{oDonnell:analysisOfBooleanFunctions}.
All this means nothing else that the amplitudes of $\psi_{y,t}$ are polynomials of degree $t$ in $y$.
\medskip

We will show now that the method of dual polynomials essentially uses anti-concentration of the transfer operator $\Upsilon$ for a carefully crafted probability distribution $p_y$.
The construction of our reduction is similar to~\cite{belovs:directReduction}.
We consider the case of total Boolean functions for simplicity.
Everywhere in this section, for a function $\theta$ on $\cube$, we use $\ket |\theta>$ for the vector $\sum_{x\in\cube} \theta(x) \ket|x>\in \cX$.

A \emph{dual polynomial} of degree $d$ is a function $\theta\colon \cube\to \bR$ satisfying two conditions: 
\begin{equation}
\label{eqn:dualPolynomial}
\sum_{x\in\cube} \abs|\theta(x)|=1
\qqand
\ket|\theta> \perp \cX_{\le d}.
\end{equation}
It is known~\cite{sherstov:patternMethod} that
\begin{equation}
\label{eqn:dualPolynomialIdentity}
\min_p \max_{x\in\cube} \absA|f(x) - p(x)| = \max_\theta \sum_{x\in\cube} \theta(x) f(x),
\end{equation}
where optimisation on the left is over all degree-$d$ polynomials $p$ and on the right over all dual degree-$d$ polynomials $\theta$.
Thus, a dual degree-$d$ polynomial $\theta$ achieving value $\eps$ on the right-hand side of~\rf{eqn:dualPolynomialIdentity} certifies that $f$ cannot be $\eps$-approximated by a polynomial of degree $d$.

From now on, we let the degree $d=2T$, and we fix a dual degree-$d$ polynomial $\theta$ and denote
\begin{equation}
\label{eqn:dualPolynomialEpsilon}
\eps = \sum_{x\in\cube} \theta(x) f(x).
\end{equation}
Let $p_y = |\theta(y)|$, which is a probability distribution due to the first condition of~\rf{eqn:dualPolynomial}; $M_1 = \{y\in \cube\mid \theta(y)\ge 0\}$; and $M_0 = \{y\in \cube \mid \theta(y)<0\}$.
Note that $M_0$ and $M_1$ are \emph{not} obtained from $f$, but are sufficiently correlated to $f$ due to~\rf{eqn:dualPolynomialEpsilon}.

\begin{clm}
\label{clm:dualPolynomials}
In the above assumptions, the space $\Upsilon\cX_{\le T}$ is $\frac1{\sqrt{2}}$-anti-concentrated.
\end{clm}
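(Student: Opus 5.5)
The plan is to compute $\|M_1\Upsilon\phi\|^2$ and $\|M_0\Upsilon\phi\|^2$ explicitly for an arbitrary $\phi\in\cX_{\le T}$ and to show that they are \emph{equal}. Since $M_0+M_1$ is the identity on $\cY$, each would then be exactly $\frac12\|\Upsilon\phi\|^2$. That gives $\|M_\rho\Upsilon\phi\|\le\frac1{\sqrt2}\|\Upsilon\phi\|$ for both $\rho\in\{0,1\}$, which is the claim.

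First I will write $\phi$ in the standard basis as $\phi=\sum_{x\in\cube}\phi(x)\ket X|x>$. Because $\cX_{\le T}$ consists precisely of the degree-$T$ polynomials, the function $x\mapsto\phi(x)$ is a multilinear polynomial of degree at most $T$, possibly with complex coefficients. By~\rf{eqn:UpsilonStandard}, with $p_x=|\theta(x)|$, we get
\[
\Upsilon\phi=\sqrt{2^n}\sum_{x\in\cube}\sqrt{|\theta(x)|}\,\phi(x)\ket Y|x>.
\]
Since $M_1$ and $M_0$ project onto the standard basis vectors with $\theta(x)\ge0$ and $\theta(x)<0$ respectively, this yields
\[
\|M_1\Upsilon\phi\|^2-\|M_0\Upsilon\phi\|^2
=2^n\sum_{x}\theta(x)\,|\phi(x)|^2 .
\]
Here I use that $\theta(x)=|\theta(x)|$ on $M_1$ and $\theta(x)=-|\theta(x)|$ on $M_0$.

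The key step is to show that this sum vanishes. The function $|\phi(x)|^2=\phi(x)\overline{\phi(x)}$ is a product of two polynomials of degree at most $T$, because complex conjugation does not raise degree. On $\cube$, with $x_i^2=x_i$, the product is therefore a polynomial of degree at most $2T=d$, so $\ket|\,|\phi|^2>\in\cX_{\le d}$. The second condition of~\rf{eqn:dualPolynomial} says $\ket|\theta>\perp\cX_{\le d}$, and $\theta$ is real, so $\sum_x\theta(x)|\phi(x)|^2=\langle\theta,|\phi|^2\rangle=0$. This is exactly why the degree was set to $d=2T$.

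The only point needing care is the degree bound on $|\phi|^2$. It must hold for complex $\phi$ and after reduction to multilinear form, and both are immediate. Everything else is direct computation, so I expect no serious obstacle.
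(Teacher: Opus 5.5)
Your proposal is correct and follows the paper's own proof essentially step for step. Like the paper, you write $\|M_1\Upsilon\phi\|^2-\|M_0\Upsilon\phi\|^2=2^n\sum_x\theta(x)|\phi(x)|^2$, show this vanishes because $|\phi|^2$ has degree at most $2T=d$ while $\theta$ is orthogonal to $\cX_{\le d}$, and conclude that each projection carries exactly half of $\|\Upsilon\phi\|^2$.
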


\begin{proof}
Let $\phi\colon \cube\to \bC$ be a degree-$T$ polynomial, which is equivalent to $\ket |\phi> \in \cX_{\le T}$.
On the one hand, by~\rf{eqn:UpsilonStandard}:
\[
\|\Upsilon \phi\|^2 = 2^n \sum_{x\in\cube} |\theta(x)| \cdot |\phi(x)|^2.
\]
On the other hand, it is not hard to check that $|\phi|^2 = \phi^*\phi$ is a polynomial of degree $2T$, hence, using the second condition of~\rf{eqn:dualPolynomial}, we obtain
\begin{equation}
\label{eqn:dualPolynomials1}
\sum_{x\in\cube} \theta(x) \cdot |\phi(x)|^2
=
\sum_{x\in M_1} |\theta(x)| \cdot |\phi(x)|^2
-
\sum_{x\in M_0} |\theta(x)| \cdot |\phi(x)|^2
= 0.
\end{equation}
Combining the above two equations, we get
\[
\|M_1 \Upsilon \phi\|^2 =
2^n\sum_{x\in M_1} |\theta(x)| \cdot |\phi(x)|^2
=
\|M_0 \Upsilon \phi\|^2 =
2^n\sum_{x\in M_0} |\theta(x)| \cdot |\phi(x)|^2
=
\frac12 \|\Upsilon \phi\|^2,
\]
proving that $\Upsilon\cX_{\le T}$ is $\frac1{\sqrt2}$-anti-concentrated.
\end{proof}

Define the Boolean function $g\colon \cube\to\bool$ by $g(x) = 1$ if $\theta(x)\ge 0$.
By \rf{lem:successProbability} and \rf{clm:dualPolynomials}:
\begin{cor}
\label{cor:dualPolynomialsG}
Any quantum algorithm evaluating $g$ with $T$ queries has total success probability at most $1/2$.
\end{cor}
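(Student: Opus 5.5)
The plan is to apply \rf{lem:successProbability} to the hidden computational problem defined above: $M=\{0,1\}$ as types, inputs $Y=\{(g(y),y)\mid y\in\cube\}$, responses $R=\{0,1\}$, and $R_\mu=\{\mu\}$. The distribution is $p_y=|\theta(y)|$. With this choice, $M_1$ and $M_0$ coincide, as projectors on $\cY$, with the sets $\{y\mid\theta(y)\ge0\}$ and $\{y\mid\theta(y)<0\}$ introduced before \rf{clm:dualPolynomials}. By \rf{clm:successProbability}, the total success probability of a $T$-query algorithm computing $g$ equals $\|\Pi\Upsilon\psi'_T\|^2$.

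The first step is to show that $\Upsilon\psi'_T$ is strongly $\frac1{\sqrt2}$-anti-concentrated. By \rf{prp:inH}, $\psi'_T\in\cX_{\le T}\otimes\cA$. Since $\Upsilon$ acts only on the $\cX$ register, this gives $\Upsilon\psi'_T\in(\Upsilon\cX_{\le T})\otimes\cA$. By \rf{clm:dualPolynomials}, the subspace $\cH=\Upsilon\cX_{\le T}$ of $\cY$ is $\frac1{\sqrt2}$-anti-concentrated. This is exactly \rf{defn:strongAntiConcentration}.

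The second step is to apply \rf{lem:successProbability} with $u=0$ (so $\delta=0$) and $v=\Upsilon\psi'_T$, where $\|v\|=1$ by~\rf{eqn:UpsilonNorms}. This bounds the success probability by $(\frac1{\sqrt2})^2=\frac12$. Equivalently, \rf{prp:antiConcentrationMeasurement} gives $\|\Pi\Upsilon\psi'_T\|\le\frac1{\sqrt2}$ directly.

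I do not expect a real obstacle, since the statement is essentially a packaging of earlier results. The only point to check is the bookkeeping. The problem must be set up so that $M_\rho$ in \rf{defn:anticoncentration} matches the sets $M_0$ and $M_1$ used in \rf{clm:dualPolynomials}. The claim must be invoked with the degree $d=2T$ fixed earlier, so that $|\phi|^2$ for $\phi\in\cX_{\le T}$ is orthogonal to $\theta$. Inputs with $\theta(y)=0$ have zero probability; if \rf{defn:problem} insists on nonzero weights, these can be dropped from $Y$ without changing anything.
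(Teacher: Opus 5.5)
Your proposal is correct and matches the paper, which obtains this corollary by combining \rf{clm:dualPolynomials} with \rf{lem:successProbability}. As you do, it uses \rf{prp:inH} to place $\Upsilon\psi'_T$ in $(\Upsilon\cX_{\le T})\otimes\cA$, which makes it strongly $\frac1{\sqrt2}$-anti-concentrated, and then applies the lemma with $u=0$. Your choice of types $M=\{0,1\}$, instead of the paper's identification of types with inputs, only relabels the basis of $\cY$; under this relabelling $\Upsilon$ and the projectors $M_0,M_1$ are unchanged, so the claim applies as stated.
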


Note that the anti-concentration result of \rf{clm:dualPolynomials} is the best possible for a Boolean function, and the success probability of \rf{cor:dualPolynomialsG} is the same as simply guessing the output.
Clearly, in the average-case (assuming the probability distribution $p$), the success probability is precisely $1/2$.
Using the correlation from~\rf{eqn:dualPolynomialEpsilon}, we can upper bound the success probability of the algorithm on the function $f$ as well.

\begin{thm}
\label{thm:dualPol}
In the above assumptions, any quantum algorithm evaluating the function $f$ and making $T$ queries has success probability at most $1-\eps$.
\end{thm}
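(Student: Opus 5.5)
The plan is to bound the \emph{average} success probability with respect to the distribution $p_y = |\theta(y)|$ by $1-\eps$; the worst-case success probability is at most this average, which gives the statement. Let $P(x) = \|\Pi_1\psi'_{x,T}\|^2$ be the probability that the algorithm outputs $1$ on input $x$. Since the response is Boolean, the error probability on $x$ is exactly $|P(x)-f(x)|$. This holds because $f(x)\in\{0,1\}$ and $P(x)\in[0,1]$, so the error is $P(x)$ when $f(x)=0$ and $1-P(x)$ when $f(x)=1$.

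The key step is the orthogonality $\sum_x \theta(x)P(x) = 0$, obtained from the framework exactly as in the proof of \rf{clm:dualPolynomials}. Since $U_T$ acts only in $\cA$, \rf{prp:inH} gives $\psi'_T\in\cX_{\le T}\otimes\cA$, and hence $(I_\cX\otimes \Pi_1)\psi'_T\in \cX_{\le T}\otimes\cA$ as well. Expanding in an orthonormal basis $\{v_j\}$ of $\cA$, we get
\[
P(x) = \sum_j |\phi_j(x)|^2
\]
up to the normalisation factor $2^n$, where each $\phi_j$ is a degree-$T$ polynomial. By \rf{eqn:zhandryToIndividual}, $\phi_j(x)$ is (up to $\sqrt{2^n}$) the amplitude of $v_j$ in $\Pi_1\psi'_{x,T}$. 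Each $|\phi_j|^2$ has degree at most $2T=d$, so the second condition of \rf{eqn:dualPolynomial} gives $\sum_x\theta(x)|\phi_j(x)|^2=0$, which is the identity \rf{eqn:dualPolynomials1}. Summing over $j$ yields $\sum_x\theta(x)P(x)=0$.

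Combining this with \rf{eqn:dualPolynomialEpsilon}:
\[
\eps = \sum_x \theta(x)\bigl(f(x)-P(x)\bigr) \le \sum_x |\theta(x)|\cdot|f(x)-P(x)| = \sum_x p_x \Pr[\text{error on }x].
\]
So the average error under $p$ is at least $\eps$, and the average success probability is at most $1-\eps$. Since $p$ is a probability distribution by the first condition of \rf{eqn:dualPolynomial}, some input has success probability at most $1-\eps$.

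The only delicate point is justifying that $P(x)$ is a sum of squared moduli of degree-$T$ polynomials, which requires the measurement projector to act only on $\cA$. Because it does, this follows directly from \rf{prp:inH} and the product structure $\cX\otimes\cA$. As a sanity check, the same argument shows that in the notation of \rf{clm:successProbability} the success probability for $g$ is exactly $1/2$, consistent with \rf{cor:dualPolynomialsG}.
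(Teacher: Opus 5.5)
Your proof is correct, but it takes a more direct route than the paper.

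\textbf{The paper's route.} The paper works through its framework. It introduces the auxiliary function $g$, the sign pattern of $\theta$. From the anti-concentration of $\Upsilon\mathcal{X}_{\le T}$ (Claim~\ref{clm:dualPolynomials}) and Lemma~\ref{lem:successProbability}, it concludes that any $T$-query algorithm computes $g$ with $p$-average success at most $1/2$ (Corollary~\ref{cor:dualPolynomialsG}). It then transfers this to $f$ by bookkeeping with $A,B,C,D$. That step uses $\sum_x\theta(x)=0$ and $\sum_x\theta(x)f(x)=\varepsilon$ to get $B+C=\frac12-\varepsilon$, so the success on $f$ is at most $\frac12+\bigl(\frac12-\varepsilon\bigr)$.

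\textbf{Your route.} You skip $g$ and the anti-concentration machinery entirely. You apply the same core fact, that $\theta$ annihilates $|\phi|^2$ for $\phi$ of degree $T$, directly to the acceptance probability $P(x)=\sum_j|\phi_j(x)|^2$. This gives $\sum_x\theta(x)P(x)=0$, and a one-line triangle inequality finishes the proof. This is the classical dual-polynomial argument.

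\textbf{How they relate.} The two proofs have the same content. Given $\sum_x|\theta(x)|=1$ and $\sum_x\theta(x)=0$, the identity $\sum_x\theta(x)P(x)=0$ is exactly the statement that the $p$-average success on $g$ equals $\frac12+\sum_x\theta(x)P(x)=\frac12$. Your inequality $\sum_x\theta(x)\bigl(f(x)-P(x)\bigr)\le\sum_x|\theta(x)|\,|f(x)-P(x)|$ is a compressed form of the paper's $A$--$D$ computation.

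Your version is shorter and self-contained. The paper's is deliberately longer because the purpose of that section is to show the polynomial method as an instance of the anti-concentration principle of Lemma~\ref{lem:successProbability}. The rest of the paper builds on that principle.
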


\begin{proof}
All the probabilities in the proof will be for $x$ sampled from $p_x = |\theta(x)|$.
Let 
\begin{align*}
 A &= \Pr\skA[f(x) = 1\; \wedge\; g(x) =1], &&&  B &= \Pr\skA[f(x) = 1\; \wedge\; g(x) =0], \\
 C &= \Pr\skA[f(x) = 0\; \wedge\; g(x) =1], &&\text{and}&  D &= \Pr\skA[f(x) = 0\; \wedge\; g(x) =0].
\end{align*}
Clearly, 
\begin{equation}
\label{eqn:dualPol1}
A+B+C+D = 1.
\end{equation}
Also, the function $\theta$ is orthogonal to the constant function (i.e, \eqrf{eqn:dualPolynomials1} holds with $\phi=1$).  Hence,
\begin{equation}
\label{eqn:dualPol2}
A+C-B-D = \Pr\skA[g(x)=1] - \Pr\skA[g(x)=0] = \sum_{x\in\cube} \theta(x) = 0.
\end{equation}
Finally, condition~\rf{eqn:dualPolynomialEpsilon} is equivalent to
\begin{equation}
\label{eqn:dualPol3}
A - B = \sum_{x\in\cube} \theta(x) f(x) = \eps.
\end{equation}
Combining~\rf{eqn:dualPol3} with~\rf{eqn:dualPol2}, we get $D-C=\eps$, which, together with~\rf{eqn:dualPol1}, gives
\begin{equation}
\label{eqn:dualPol4}
B+C = \frac12-\eps.
\end{equation}

Now, let $q_x$ be the acceptance probability of the algorithm on the input $x$.
By \rf{cor:dualPolynomialsG}:
\[
\Pr\skA[\text{Algorithm successfully evaluates $g$}] 
=
\sum_{x\in g^{-1}(1)} p_x q_x + \sum_{x\in g^{-1}(0)} p_x (1-q_x) = \frac12.
\]
We have a similar expression for the success probability of the same algorithm on the function $f$.
Subtracting the success probability on $g$ from the success probability on $f$, we get the following expression:
\[
\sum_{x\in f^{-1}(1)\cap g^{-1}(0)} p_x(2q_x-1) 
+ 
\sum_{x\in f^{-1}(0)\cap g^{-1}(1)} p_x(1-2q_x) 
\le
B+C = \frac12-\eps.
\]
Therefore, the success probability of the algorithm on the function $f$ is at most $1-\eps$.
\end{proof}

\subsection{
\texorpdfstring
	{Knowledge and the Operator $\Upsilon^+$}
	{Knowledge and the Operator Upsilon+}
}
\label{sec:overallKnowledge}

We see from the previous section that the polynomial method works by showing anti-concentration of $\Upsilon \cX_{\le t}$.
This is achieved by the cost of having very unnatural and contrived probability distribution $p_y$.
Even for the OR function, finding the right probability distribution is not trivial at all~\cite{spalek:dualPolForOR}.
For natural probability distributions, anti-concentration of $\Upsilon\cX_{\le t}$ fails badly as we will demonstrate later in \rf{exm:ED}.

To keep natural probability distributions, we adopt a different approach.
We will divide the state $\Upsilon \psi_t$ into two parts 
\[
\Upsilon \psi_t = \Upsilon^+ \psi_t + \Upsilon^- \psi_t,
\]
where the first part has ``knowledge'' of the output and the second one does not.
Then, we will show two things:
\begin{itemize}\noseps
\item (anti-concentration) the part $\Upsilon^- \psi_t$ is anti-concentrated, so it has small success probability regardless of its size;
\item (query gain) the part $\Upsilon^+ \psi_t$ grows slowly with $t$, so, if the total number of queries is small, the algorithm has small success probability due to the size of this part of the state.
\end{itemize}
Technically, the two parts will play the roles of the vectors $v$ and $u$ in \rf{lem:successProbability}.
The anti-concentration and the query gain statements are independent and can be proven separately.

We are now going to define both operators $\Upsilon^+$ and $\Upsilon^-$ formally, but for that, we need some machinery.
First, we define a knowledge system $L^+$ that is individual for each type $\mu\in M$.
Informally, it consists of the subsets of input variables whose values reveal something important about the input string.

\begin{defn}[(Pseudo-)knowledge system]
\label{defn:knowledgeSystem}
Let $M$ be as in \rf{defn:problem}.
A \emph{pseudo-knowledge system} in $M$ is a mapping $L^\bigtriangleup$ that assigns for each $\mu\in M$ a set $L_\mu^\bigtriangleup\subseteq 2^{[n]}$ consisting of subsets of $[n]$.
A \emph{knowledge system} is a pseudo-knowledge system $L^+$ that satisfies two additional properties:
\begin{itemize}\noseps
\item none of $L^+_\mu$ contain the emptyset $\emptyset$; and
\item all $L^+_\mu$ are upwards closed: for all $\mu\in M$ and $S\subseteq S'\subseteq [n]$, if $S\in L^+_\mu$, then also $S'\in L^+_\mu$.
\end{itemize}
\end{defn}

The two properties of the knowledge system in \rf{defn:knowledgeSystem} are rather natural. 
The first one states that we know nothing if we know nothing, and the second one states that the knowledge can only increase when more input variables are revealed.
Often, the set $L^+_\mu$ is formed by the subsets $S\subseteq [n]$ with the following property: 
If the input type is $\mu$, knowledge of the values of the input variables $y_i$ for $i\in S$ implies knowledge of some $\rho\in R$ such that $\mu\in M_\rho$ (akin to certificate structures from~\cite{belovs:onThePower}).

It will be helpful to introduce other pseudo-knowledge systems derived from $L^+$, and we will use different superscripts (like $-$ or $\partial i$) to distinguish them.
For instance, $L^{-}$ is defined as the complement of $L^+$, i.e., $L^{-}_\mu = 2^{[n]} \setminus L^+_\mu$ for all $\mu\in M$.
For each such superscript $\bigtriangleup$, we define a linear operator $\Upsilon^\bigtriangleup\colon \cX\to\cY$ in the following way.

\begin{defn}[(Pseudo-)Knowledge operator]
Assume $L^\bigtriangleup$ is a (pseudo-)knowledge system in $M$.
For each type $\mu\in M$, define the \emph{(pseudo-)knowledge operator} for this type as the following linear operator
\begin{equation}
\label{eqn:pseudoKnowledgeOperatorMu}
\Upsilon^\bigtriangleup_\mu
\colon \cX\to\cY_\mu\colon\quad 
\ket X|\chit \sigma> \longmapsto 
\begin{cases} \Upsilon_\mu \ket X|\chit \sigma>,& \text{if $\supp(\sigma)\in L^\bigtriangleup_\mu$;}\\ 
0,&\text{otherwise};
\end{cases}
\end{equation}
where $\Upsilon_\mu$ are the transfer operators from \rf{defn:transfer}.
For the whole collection $M$, the \emph{(pseudo-)knowledge operator} corresponding to $L^\bigtriangleup$ is a linear operator $\Upsilon^\bigtriangleup$ given by
\begin{equation}
\label{eqn:pseudoKnowledgeOperator}
\Upsilon^\bigtriangleup
\colon \cX\to\cY\colon\quad 
\phi \mapsto \bigoplus_{\mu\in M} \sqrt{p_\mu}\; \Upsilon_\mu^\bigtriangleup \phi.
\end{equation}
\end{defn}

Compared to the transfer operator $\Upsilon$ from~\rf{eqn:UpsilonDirectSum}, the operator $\Upsilon^\bigtriangleup$ chops off the $\Upsilon_\mu \ket X|\chit \sigma>$ part of the state if the support of $\sigma$ is not an element of $L^\bigtriangleup_\mu$.
Note that 
\begin{equation}
\label{eqn:UpsilonSum}
\Upsilon^+ + \Upsilon^- = \Upsilon.
\end{equation}

The following result epitomises our lower bound framework.

\begin{lem}
[Lower Bound Framework]
\label{lem:framework}
Consider a computational problem from \rf{defn:problem}.
Assume we can define a knowledge system $L^+$ on $M$ such that the following two conditions hold:
\begin{itemize}\noseps
\item (anti-concentration) the space $\Upsilon^-\cX_{\le T}$ is $\gamma$-anti-concentrated for some $\gamma$;
\item (small knowledge) the final uniform state $\psi'_T$ of the algorithm~\rf{eqn:zhandryAlgorithm} making $T$ queries satisfies $\|\Upsilon^+ \psi'_T\| \le \delta$ for some $\delta$.
\end{itemize}
Then, total success probability of the algorithm making $T$ queries is at most $(\gamma + \delta)^2$.
\end{lem}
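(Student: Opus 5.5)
The plan is to reduce the statement to \rf{lem:successProbability}, using the decomposition $\Upsilon = \Upsilon^+ + \Upsilon^-$ from~\rf{eqn:UpsilonSum}. Applying this to the final uniform state gives
\[
\Upsilon\psi'_T = \Upsilon^+\psi'_T + \Upsilon^-\psi'_T.
\]
I will set $u = \Upsilon^+\psi'_T$ and $v = \Upsilon^-\psi'_T$. By the small-knowledge assumption, $\|u\|\le\delta$ directly. What remains is to show that $v$ is strongly $\gamma$-anti-concentrated in the sense of \rf{defn:strongAntiConcentration}. Once that is done, \rf{lem:successProbability} bounds the success probability by $(\gamma+\delta)^2$.

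For strong anti-concentration, I first place $\psi'_T$. By \rf{prp:inH}, $\psi'_T \in \cX_{\le T}\otimes\cA$. Fix an orthonormal basis $\{a_j\}$ of $\cA$ and expand
\[
\psi'_T = \sum_j \ket X|\phi_j>\ket A|a_j>, \qquad \phi_j\in\cX_{\le T}.
\]
The operator $\Upsilon^-$ acts only on the $\cX$ register, with the identity on $\cA$ implied as per the paper's convention. Since it is linear, this gives
\[
v = \sum_j \ket Y|\Upsilon^-\phi_j>\ket A|a_j>.
\]
Each $\Upsilon^-\phi_j$ lies in the subspace $\cH = \Upsilon^-\cX_{\le T}\subseteq\cY$. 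Hence $v\in\cH\otimes\cA$. By assumption, $\cH$ is a $\gamma$-anti-concentrated subspace, so $v$ is strongly $\gamma$-anti-concentrated.

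There is no real obstacle here. The only points needing care are, first, that $\Upsilon^-$ is applied as $\Upsilon^-\otimes I_\cA$. Second, anti-concentration must be assumed for the whole subspace $\Upsilon^-\cX_{\le T}$, not merely for individual vectors; this is exactly why the hypothesis is phrased that way, since anti-concentrated vectors are not closed under addition. Third, \rf{lem:successProbability} already handles the non-orthogonality of $u$ and $v$ internally, via rescaling, so no extra argument about their inner product is needed.
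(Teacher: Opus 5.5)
Your proposal is correct and follows the paper's proof exactly. You split $\Upsilon\psi'_T=\Upsilon^+\psi'_T+\Upsilon^-\psi'_T$, bound the first term by $\delta$, use \rf{prp:inH} to place $\Upsilon^-\psi'_T$ in $(\Upsilon^-\cX_{\le T})\otimes\cA$ (so it is strongly $\gamma$-anti-concentrated), and conclude with \rf{lem:successProbability}; your explicit basis expansion simply spells out the step the paper states in one line.
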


\begin{proof}
By~\rf{eqn:UpsilonSum}, $\Upsilon\psi'_T = \Upsilon^+\psi'_T  + \Upsilon^-\psi'_T $.
By the second assumption, $\|\Upsilon^+ \psi'_T\| \le \delta$.
By \rf{prp:inH}, $\psi'_T\in \cX_{\le T} \otimes \cA$.
Hence, by the first assumption, $\Upsilon^-\psi'_T\in (\Upsilon^-\cX_{\le T}) \otimes \cA$ is strongly $\gamma$-anti-concentrated.
The result now follows from \rf{lem:successProbability}.
\end{proof}

\subsection{
\texorpdfstring
	{Query Gain and the Operator $\Psi^\partial$}
	{Query Gain and the Operator Psi d}
}
\label{sec:queryGain}
\mycutecommand{\Qi}{^{\partial i}}
By \rf{lem:framework}, the quantity $\|\Upsilon^+\psi_t\|$ is an important progress measure of the algorithm.
We call it \emph{knowledge} of the algorithm.
In this section, we formulate some results on how this progress measure changes during the algorithm.
First, we show that the algorithm can gain knowledge only through queries.

\begin{prp}
\label{prp:knowledgeProperties}
Knowledge satisfies the following properties:
\begin{itemize}\itemsep=0pt
\item[(a)] At the beginning, the knowledge $\|\Upsilon^+\psi_0\|$ of the algorithm is zero.
\item[(b)] A unitary $U$ applied in the space $\cA$ of the algorithm does not change the knowledge.\\
More generally, for any such unitary, any vector $\psi\in \cX\otimes \cA$, and any pseudo-knowledge operator $\Upsilon^\bigtriangleup$, we have $\|\Upsilon^\bigtriangleup U \psi \| = \|\Upsilon^\bigtriangleup \psi \|$.
\end{itemize}
\end{prp}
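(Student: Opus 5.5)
For part (a), we use \rf{eqn:zhandryInitialFourier}: the initial uniform state is $\psi_0 = \ket X|\chit \emptyset>\ket A|0>$. By the definition of a pseudo-knowledge operator in~\rf{eqn:pseudoKnowledgeOperatorMu}, $\Upsilon^+_\mu \ket X|\chit\emptyset>$ is nonzero only if $\supp(\emptyset)=\emptyset\in L^+_\mu$. The first property of a knowledge system in \rf{defn:knowledgeSystem} excludes this. Hence $\Upsilon^+_\mu\psi_0 = 0$ for every $\mu$. By~\rf{eqn:pseudoKnowledgeOperator}, this gives $\Upsilon^+\psi_0=0$, so the knowledge at the start is zero.

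For part (b), the key observation is that $\Upsilon^\bigtriangleup$ acts only on $\cX$ and $U$ acts only on $\cA$. Strictly, $\Upsilon^\bigtriangleup$ stands for $\Upsilon^\bigtriangleup\otimes I_\cA$ and $U$ for $I_\cX\otimes U$ (the same convention for omitted identities as after~\rf{eqn:zhandryAlgorithm}). Operators acting on different tensor factors commute, so
\[
(\Upsilon^\bigtriangleup\otimes I_\cA)(I_\cX\otimes U) = \Upsilon^\bigtriangleup\otimes U = (I_\cY\otimes U)(\Upsilon^\bigtriangleup\otimes I_\cA).
\]
To check this concretely, expand $\psi=\sum_\sigma \ket X|\chit\sigma>\ket A|a_\sigma>$ and apply both sides to each term.

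It follows that $\|\Upsilon^\bigtriangleup U\psi\| = \|(I_\cY\otimes U)\Upsilon^\bigtriangleup\psi\| = \|\Upsilon^\bigtriangleup\psi\|$, because $I_\cY\otimes U$ is unitary on $\cY\otimes\cA$ and preserves norms. This argument uses nothing about $L^\bigtriangleup$, so it holds for every pseudo-knowledge operator. Specialising to $\Upsilon^+$ and to $\psi=\psi_t$ with $U=U_t$ shows that passing from $\psi_t$ to $\psi'_t$ does not change the knowledge.

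Neither part poses a real obstacle. The only point requiring care is the bookkeeping of implicit identity operators on the registers, and possibly checking that the direct-sum structure over $\mu$ in~\rf{eqn:pseudoKnowledgeOperator} is compatible with tensoring by $\cA$. That compatibility is immediate, since the direct sum lives entirely in $\cY$.
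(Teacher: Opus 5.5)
Your proposal is correct and follows the paper's proof essentially verbatim: for (a) you use $\psi_0=\ket X|\chit\emptyset>\ket A|0>$ together with $\emptyset\notin L^+_\mu$, and for (b) you note that $\Upsilon^\bigtriangleup\otimes I_\cA$ and $I_\cX\otimes U$ commute, after which $I_\cY\otimes U$ preserves norms. No correction is needed.
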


\begin{proof}
By \rf{eqn:zhandryInitialFourier}, the initial uniform state of the algorithm if $\psi_0 = \ket X|\chit \emptyset> \ket A|0>$.
By \rf{defn:knowledgeSystem}, $\emptyset\notin L_\mu^+$ for every $\mu\in M$, hence, $\Upsilon^+ \ket X|\chit \emptyset> = 0$.  This proves~(a).

To prove~(b), it is sufficient to note that $\Upsilon^\bigtriangleup$ and $U$ commute as acting on different registers:
\[
\normA|(\Upsilon^\bigtriangleup\otimes I_\cA) (I_\cX\otimes U)\psi |
= \normA|(I_\cY\otimes U)(\Upsilon^\bigtriangleup\otimes I_\cA) \psi |
= \normA|(\Upsilon^\bigtriangleup\otimes I_\cA) \psi |.
\qedhere
\]
\end{proof}

The next pseudo-knowledge system focuses on how much knowledge the algorithm can obtain during a query.

\begin{defn}[Query Gain]
\label{defn:queryGain}
Let $L^+$ be a knowledge system on $M$.
For $i\in [n]$, the pseudo-knowledge system $L^{\partial i}$ is made of $L^{\partial i}_\mu$ consisting of all $S\in L^-_\mu$ such that $S\cup\{i\}\in L^+_\mu$.
The \emph{query gain} operator corresponding to $L^+$ is a linear operator defined by
\[
\Psi^{\partial}\colon \cX\otimes \cI \to \cY\otimes \cI
\colon\quad \ket X|\phi> \ket I|i> \longmapsto 
\ketA X|\Upsilon^{\partial i}\phi> \ket I|i>,
\]
where $\Upsilon^{\partial i} \colon \cX\to\cY$ are the pseudo-knowledge operators corresponding to $L^{\partial i}$.
\end{defn}

We use a different letter to emphasise that $\Psi^\partial$ is \emph{not} a pseudo-knowledge operator, and, in particular, \rf{prp:knowledgeProperties}(b) does \emph{not} apply to it.
The set $L^{\partial i}_\mu$ consists of $S\subseteq [n]$ that are on the verge of being in $L^+_\mu$:
For $\sigma$ with $\supp(\sigma)\in L^{\partial i}_\mu$, the support of $\sigma$ is not in $L^+_\mu$, but, for any $c\ne 0$, the support of $\sigma + \{i\mapsto c\}$ is.

We prove the following identity between two commutators.
The one on the left-hand side of~\rf{eqn:queryGainMu} captures the knowledge gained by one query, while, on the right-hand side, we have the same commutator but with all the terms that cancel out removed.
In particular, each of the two terms on the right has much smaller norm than each of the two terms on the left.
Therefore, we can get a meaningful upper bound on the change of norm on the left using the sum of the norms on the right via the triangle inequality: see \rf{cor:queryGainSimple} below.

\begin{lem}
\label{lem:queryGainMu}
For all $\mu\in M$, $i\in [n]$, and $c\in\q$, we have the following identity between linear operators from $\cX$ to $\cY_\mu$:
\begin{equation}
\label{eqn:queryGainMu}
\Upsilon^+_\mu O_{i,c} - O_{i,c} \Upsilon_\mu^+
=
O_{i,c} \Upsilon_\mu^{\partial i} - \Upsilon^{\partial i}_\mu O_{i,c}.
\end{equation}
\end{lem}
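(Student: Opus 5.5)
Both sides of \rf{eqn:queryGainMu} are linear, so the plan is to check the identity on each Fourier basis vector $\ket X|\chit\sigma>$. On this basis everything is explicit. By \rf{eqn:zhandryOicFourier}, $O_{i,c}\ket X|\chit\sigma> = \ket X|\chit{\sigma'}>$ with $\sigma' = \sigma+\{i\mapsto c\}$. By \rf{fact:Upsilon Oic commute}, $O_{i,c}\Upsilon_\mu = \Upsilon_\mu O_{i,c}$. Combining these with the definition \rf{eqn:pseudoKnowledgeOperatorMu}, each of the four terms evaluated on $\ket X|\chit\sigma>$ equals $\Upsilon_\mu\ket X|\chit{\sigma'}>$ times an indicator:
\begin{itemize}
\item $\Upsilon^+_\mu O_{i,c}$ gives $1_{\supp\sigma'\in L^+_\mu}$;
\item $O_{i,c}\Upsilon^+_\mu$ gives $1_{\supp\sigma\in L^+_\mu}$;
\item $O_{i,c}\Upsilon^{\partial i}_\mu$ gives $1_{\supp\sigma\in L^{\partial i}_\mu}$;
\item $\Upsilon^{\partial i}_\mu O_{i,c}$ gives $1_{\supp\sigma'\in L^{\partial i}_\mu}$.
\end{itemize}
So the identity reduces to a combinatorial identity between indicators for each $\sigma$:
\[
1_{\supp\sigma'\in L^+_\mu} - 1_{\supp\sigma\in L^+_\mu} = 1_{\supp\sigma\in L^{\partial i}_\mu} - 1_{\supp\sigma'\in L^{\partial i}_\mu}.
\]

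Write $S=\supp\sigma\setminus\{i\}$. Then $\supp\sigma$ and $\supp\sigma'$ are each either $S$ or $S\cup\{i\}$. Let $a=1_{S\in L^+_\mu}$ and $b=1_{S\cup\{i\}\in L^+_\mu}$; upward closure of the knowledge system gives $a\le b$. A set containing $i$ is never in $L^{\partial i}_\mu$, since then $T\cup\{i\}=T$ and $T$ cannot be both in $L^-_\mu$ and in $L^+_\mu$. Also $S\in L^{\partial i}_\mu$ exactly when $a=0$ and $b=1$, so $1_{S\in L^{\partial i}_\mu}=b-a$. Hence for any $T\in\{S,S\cup\{i\}\}$ we get $1_{T\in L^+_\mu}+1_{T\in L^{\partial i}_\mu}=b$: for $T=S$ this is $a+(b-a)$, and for $T=S\cup\{i\}$ it is $b+0$. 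This invariance is the key observation.

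Applying it to $T=\supp\sigma$ and $T=\supp\sigma'$ and subtracting gives exactly the required indicator identity. The case split ($i$ in or out of each support, which happens depending on whether $\sigma(i)=0$ and whether $\sigma(i)+c=0$, including $c=0$) is absorbed because the invariant holds for both possible supports. I expect the only real care to be in justifying that $O_{i,c}$ commutes with $\Upsilon_\mu$ in the Fourier basis. This is exactly \rf{fact:Upsilon Oic commute}, so no obstacle remains.
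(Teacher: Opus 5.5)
Your proof is correct and follows essentially the paper's route: evaluate all four terms on $\ket X|\chit\sigma>$, use the commutation of $O_{i,c}$ with $\Upsilon_\mu$ to write each term as an indicator times $\Upsilon_\mu\ket X|\chit{\sigma+\{i\mapsto c\}}>$, and then verify the resulting identity between indicators. Where you differ is only in that last step. You note that $1_{T\in L^+_\mu}+1_{T\in L^{\partial i}_\mu}$ is constant on $T\in\{S,S\cup\{i\}\}$, with the invoked upward closure made explicit. This packages the paper's terser case-matching more cleanly, and it also absorbs the $c=0$ case that the paper treats separately.
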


\begin{proof}
If $c=0$, then $O_{i,c}$ is the identity, and~\rf{eqn:queryGainMu} is obvious.  So, we will further assume that $c\ne 0$.
We will compute all four terms in~\rf{eqn:queryGainMu} on an arbitrary $\ket X|\chit \sigma>$.
Let $\tau\in\q^\mu$ be such that $\Upsilon_\mu \ket X|\chit \sigma> = \ket Y|\mu, \chit \tau>$, 
and $\bigtriangleup$ be either $+$ or $\partial i$.
By~\rf{eqn:zhandryOicFourier}, and \rf{eqn:pseudoKnowledgeOperatorMu}:
\begin{equation}
\label{eqn:queryGainMu + O}
\Upsilon^\bigtriangleup_\mu O_{i,c} \ket X|\chit \sigma>
=
\Upsilon^\bigtriangleup_\mu \ket X|\chit {\sigma+\{i\mapsto c\}}>
=
1_{\supp\sA[\sigma+\{i\mapsto c\}]\in L^\bigtriangleup_\mu} 
\Upsilon_\mu \ket X|\chit {\sigma+\{i\mapsto c\}}>,
\end{equation}
where $1_P$ is the indicator variable equal to 1 is the predicate $P$ is true, and to 0 otherwise.
Similarly, using also that $O_{i,c}$ and $\Upsilon_\mu$ commute by \rf{fact:Upsilon Oic commute}:
\begin{equation}
\label{eqn:queryGainMu O +}
O_{i,c} \Upsilon^\bigtriangleup_\mu \ket X|\chit \sigma>
=
1_{\supp(\sigma)\in L^\bigtriangleup_\mu} \Upsilon_\mu O_{i,c} \ket X|\chit \sigma>
=
1_{\supp(\sigma)\in L^\bigtriangleup_\mu}
\Upsilon_\mu \ket X|\chit {\sigma+\{i\mapsto c\}}>.
\end{equation}
For $+$ instead of $\bigtriangleup$, subtracting~\rf{eqn:queryGainMu O +} from~\rf{eqn:queryGainMu + O}, we obtain:
\begin{equation}
\label{eqn:queryGainMu 1}
\sA[\Upsilon^+_\mu O_{i,c} - O_{i,c} \Upsilon_\mu^+] \ket X|\chit \sigma>
=
\sC[
1_{\substack{
\supp(\sigma)\notin L^+_\mu\\
\supp\sA[\sigma+\{i\mapsto c\}]\in L^+_\mu
}}
-
1_{\substack{
\supp\sA[\sigma+\{i\mapsto c\}]\notin L^+_\mu\\
\supp(\sigma)\in L^+_\mu
}}
]
\Upsilon_\mu \ket X|\chit {\sigma+\{i\mapsto c\}}>.
\end{equation}
Consider the pair of conditions of the first indicator variable in~\rf{eqn:queryGainMu 1}.
As explained in the paragraph below \rf{defn:queryGain}, they are equivalent to $\supp(\sigma) \in L_\mu^{\partial i}$.
In the same way, the second indicator variable is equal to $1_{\supp(\sigma+\{i\mapsto c\})\in L^{\partial i}_\mu}$.
Thus, using~\rf{eqn:queryGainMu + O} and~\rf{eqn:queryGainMu O +} for $\partial i$ instead of $\bigtriangleup$, we get from~\rf{eqn:queryGainMu 1}:
\begin{align*}
\sA[\Upsilon^+_\mu O_{i,c} - O_{i,c} \Upsilon_\mu^+] \ket X|\chit \sigma>
&=
\sB[
1_{\supp(\sigma) \in L_\mu^{\partial i}}
-
1_{\supp\sA[\sigma+\{i\mapsto c\}]\in L^{\partial i}_\mu}
]
\Upsilon_\mu \ket X|\chit {\sigma+\{i\mapsto c\}}>\\
&=
\sA[O_{i,c} \Upsilon_\mu^{\partial i} - \Upsilon^{\partial i}_\mu O_{i,c}]\ket X|\chit \sigma>.
\qedhere
\end{align*}
\end{proof}

Let us argue starting from \rf{lem:queryGainMu} upwards.
First, taking the direct sum over all $\mu\in M$ and using the definition of $\Upsilon^\bigtriangleup$ from~\rf{eqn:pseudoKnowledgeOperator}, we obtain from~\rf{eqn:queryGainMu} the following identity between linear operators from $\cX$ to $\cY$:
\begin{equation}
\label{eqn:queryGain 1}
\Upsilon^+ O_{i,c} - O_{i,c} \Upsilon^+
=
O_{i,c} \Upsilon^{\partial i} - \Upsilon^{\partial i} O_{i,c}.
\end{equation}
Next, using that $O = \bigoplus_{i,c} O_{i,c}$ and $\Psi^\partial = \bigoplus_i \Upsilon^{\partial i}$, we get the following identity between linear operators from $\cX\otimes \cA$ to $\cY\otimes \cA$:
\begin{equation}
\label{eqn:queryGain 2}
\Upsilon^+ O - O \Upsilon^+
=
O \Psi^\partial - \Psi^\partial O. 
\end{equation}
Finally, applying the above identity to $\psi'_t$ and using that $\psi_{t+1} = O\psi'_t$ by~\rf{eqn:zhandryStates}, we get the following identity between the states of the algorithm:

\begin{cor}
[Query Identity]
\label{cor:queryIdentity}
It holds that
\begin{equation}
\label{eqn:queryIdentity}
\Upsilon^+\psi_{t+1} - O\Upsilon^+\psi'_t = O \Psi^\partial \psi'_t - \Psi^\partial \psi_{t+1}.
\end{equation}
\end{cor}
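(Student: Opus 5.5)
The identity follows by applying the operator identity~\rf{eqn:queryGain 2} to the vector $\psi'_t$ and rewriting the terms with~\rf{eqn:zhandryStates}. So the plan is to first check carefully that~\rf{eqn:queryGain 2} holds as an identity of operators from $\cX\otimes\cA$ to $\cY\otimes\cA$, and then substitute.

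\textbf{Step 1: the operator identity.} Start from \rf{lem:queryGainMu}. Multiply~\rf{eqn:queryGainMu} by $\sqrt{p_\mu}$ and take the direct sum over $\mu\in M$. Both $O_{i,c}$ on $\cY$ and the operators $\Upsilon^{\bigtriangleup}$ of~\rf{eqn:pseudoKnowledgeOperator} respect the decomposition $\cY=\bigoplus_\mu\cY_\mu$. This gives~\rf{eqn:queryGain 1} for every fixed $i$ and $c$.

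Next, decompose $\cX\otimes\cA = \bigoplus_{i,c}\cX\otimes\ket I|i>\ket C|c>\otimes\cW$. On the $(i,c)$ block:
\begin{itemize}
\item $O$ acts as $O_{i,c}\otimes I_\cW$;
\item $\Upsilon^+$ acts as $\Upsilon^+\otimes I$;
\item $\Psi^\partial$ acts as $\Upsilon^{\partial i}\otimes I$, by \rf{defn:queryGain}, after extending $\Psi^\partial$ by the identity on $\cC\otimes\cW$.
\end{itemize}
None of these operators changes the $\cI$ or $\cC$ labels, so all of them are block diagonal. Applying~\rf{eqn:queryGain 1} on each block therefore yields~\rf{eqn:queryGain 2}. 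The one point to verify is that the two $O$'s act on different spaces ($\cX$ versus $\cY$) but carry the same labels, so the composites $\Upsilon^+O$ and $O\Upsilon^+$ are both well defined as maps $\cX\otimes\cA\to\cY\otimes\cA$.

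\textbf{Step 2: substitution.} Apply~\rf{eqn:queryGain 2} to $\psi'_t$. By~\rf{eqn:zhandryStates}, $O\psi'_t=\psi_{t+1}$, where this $O$ acts on $\cX$. Then the left-hand side becomes $\Upsilon^+\psi_{t+1}-O\Upsilon^+\psi'_t$, and the right-hand side becomes $O\Psi^\partial\psi'_t-\Psi^\partial\psi_{t+1}$. This is exactly~\rf{eqn:queryIdentity}.

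The main obstacle is the bookkeeping in Step 1: making sure the block decomposition over $(i,c)$ is compatible with the definition of $\Psi^\partial$, which is defined only via $\cI$. Since $\Psi^\partial$ is independent of $c$, on the $(i,c)$ block it is indeed $\Upsilon^{\partial i}$, so the blockwise argument goes through.
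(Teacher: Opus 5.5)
Your proposal is correct and follows the paper's argument exactly. Take the direct sum of the commutator identity of Lemma~\ref{lem:queryGainMu} over $\mu$ with weights $\sqrt{p_\mu}$, then over the $(i,c)$ blocks using $O=\bigoplus_{i,c}O_{i,c}$ and $\Psi^\partial=\bigoplus_i\Upsilon^{\partial i}$, and finally apply the resulting operator identity to $\psi'_t$ with $\psi_{t+1}=O\psi'_t$; your bookkeeping about the $c$-independence of $\Psi^\partial$ and the two incarnations of $O$ (on $\mathcal{X}$ and on $\mathcal{Y}$) just makes explicit what the paper leaves implicit.
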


Using the triangle inequality, we obtain an upper bound on the increase of knowledge of the algorithm:
\begin{cor}
[Query Gain Bound]
\label{cor:queryGainSimple}
It holds that
\[
\|\Upsilon^+\psi_{t+1}\| - \|\Upsilon^+\psi_{t}\| \le \normA|\Psi^\partial\psi'_t| + \normA|\Psi^\partial \psi_{t+1}|.
\]
\end{cor}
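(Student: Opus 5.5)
The bound follows from the Query Identity, \rf{cor:queryIdentity}, by the triangle inequality, after checking that the relevant operators preserve norms. Rearranging~\rf{eqn:queryIdentity} gives
\[
\Upsilon^+\psi_{t+1} = O\Upsilon^+\psi'_t + O \Psi^\partial \psi'_t - \Psi^\partial \psi_{t+1}.
\]
Taking norms yields
\[
\|\Upsilon^+\psi_{t+1}\| \le \|O\Upsilon^+\psi'_t\| + \|O\Psi^\partial\psi'_t\| + \|\Psi^\partial\psi_{t+1}\|.
\]

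Two observations finish the argument. First, $O$ is unitary on $\cY\otimes\cI\otimes\cC$: by~\rf{eqn:Yoracle} it is diagonal with unit-modulus phases in the standard basis of $\cY$. Tensored with the identity on $\cW$, it therefore preserves norms, so $\|O\Upsilon^+\psi'_t\| = \|\Upsilon^+\psi'_t\|$ and $\|O\Psi^\partial\psi'_t\| = \|\Psi^\partial\psi'_t\|$. Here $\Psi^\partial\psi'_t$ lies in $\cY\otimes\cA$, since $\Psi^\partial$ acts as $\Upsilon^{\partial i}$ controlled by the $\cI$ register, tensored with the identity on $\cC\otimes\cW$.

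Second, $\psi'_t = U_t\psi_t$ with $U_t$ a unitary on $\cA$. By \rf{prp:knowledgeProperties}(b), applied to the knowledge operator $\Upsilon^+$, this gives $\|\Upsilon^+\psi'_t\| = \|\Upsilon^+\psi_t\|$.

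Substituting both observations gives
\[
\|\Upsilon^+\psi_{t+1}\| \le \|\Upsilon^+\psi_t\| + \|\Psi^\partial\psi'_t\| + \|\Psi^\partial\psi_{t+1}\|,
\]
which is the claim.

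No step is a genuine obstacle. The only point needing care is that $O$ is applied on the $\cY$ side in~\rf{eqn:queryIdentity}, so we need its unitarity there rather than on $\cX$; this is immediate from~\rf{eqn:Yoracle}.
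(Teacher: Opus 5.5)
Your proof is correct and is essentially the paper's argument: both combine the Query Identity, the unitarity of $O$ on the $\cY$ side, and \rf{prp:knowledgeProperties}(b) to replace $\|\Upsilon^+\psi'_t\|$ by $\|\Upsilon^+\psi_t\|$. The only difference is bookkeeping. You rearrange the identity first and apply the triangle inequality to three terms at once, whereas the paper uses the reverse triangle inequality and then the ordinary one.
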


\begin{proof}
We have the following string of identities and inequalities:
\begin{align*}
\|\Upsilon^+\psi_{t+1}\| - \|\Upsilon^+\psi_{t}\|
&=
\|\Upsilon^+\psi_{t+1}\| - \|O\Upsilon^+\psi'_{t}\|
&&\text{\rf{prp:knowledgeProperties}(b) and $O$ is unitary}\\
&\le
\normA|\Upsilon^+\psi_{t+1} - O\Upsilon^+\psi'_{t}|
&&\text{triangle inequality}\\
& = 
\normA|O \Psi^\partial \psi'_t - \Psi^\partial \psi_{t+1}|
&&\text{\rf{cor:queryIdentity}}\\
&\le \normA|\Psi^\partial\psi'_t| + \normA|\Psi^\partial \psi_{t+1}|
&&\text{triangle inequality and $O$ is unitary}
\qedhere
\end{align*}
\end{proof}

Finally, combining this with \rf{prp:knowledgeProperties}, we get the following result.

\begin{cor}
[Knowledge Upper Bound]
\label{cor:knowledgeUpperBound}
We have
\[
\| \Upsilon^+\psi'_t \| = 
\| \Upsilon^+\psi_t \|
\le
\sum_{j=0}^{t-1} \s[\normA|\Psi^\partial\psi'_j| + \normA|\Psi^\partial \psi_{j+1}|].
\]
\end{cor}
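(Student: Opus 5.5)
The equality $\|\Upsilon^+\psi'_t\| = \|\Upsilon^+\psi_t\|$ comes straight from \rf{prp:knowledgeProperties}(b). By \rf{eqn:zhandryStates}, $\psi'_t = U_t\psi_t$, and $U_t$ is a unitary acting in $\cA$ only. Since $\Upsilon^+$ is a (pseudo-)knowledge operator acting on $\cX$ only, it commutes with $I_\cX\otimes U_t$. Therefore the norms agree.

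For the inequality, I would telescope. Write
\[
\|\Upsilon^+\psi_t\| = \|\Upsilon^+\psi_0\| + \sum_{j=0}^{t-1}\sA[\|\Upsilon^+\psi_{j+1}\| - \|\Upsilon^+\psi_j\|].
\]
The first term vanishes by \rf{prp:knowledgeProperties}(a): we have $\psi_0 = \ket X|\chit\emptyset>\ket A|0>$, and $\emptyset\notin L^+_\mu$ for every $\mu$. For each $j$, I then bound the summand by \rf{cor:queryGainSimple}, which gives exactly $\normA|\Psi^\partial\psi'_j| + \normA|\Psi^\partial\psi_{j+1}|$. Summing over $j$ yields the claim.

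No step is a genuine obstacle. The only point needing care is that \rf{cor:queryGainSimple} was derived for the uniform states, which already incorporate the unitary step through \rf{prp:knowledgeProperties}(b). This is precisely the form needed for consecutive $\psi_j$ and $\psi_{j+1}$, so the telescoping chain closes without gaps.
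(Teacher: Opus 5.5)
Your proposal is correct and matches the paper's argument: the equality comes from \rf{prp:knowledgeProperties}(b) applied to $\psi'_t = U_t\psi_t$, and the inequality comes from telescoping \rf{cor:queryGainSimple} starting from zero initial knowledge given by \rf{prp:knowledgeProperties}(a). The paper gives only the one-line remark ``combining this with \rf{prp:knowledgeProperties}'', and your write-up just spells that out.
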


\section{Search for Equal Elements}
\label{sec:equalElements}

The framework of \rf{sec:framework} feels rather empty at this point.
It is not at all clear how to prove anti-concentration nor query gain.
Starting with this section, we make things more concrete.
From now on, all problems we consider will be of specific form we call Equal Element Problem.
In \SuperRef Sections\ref{sec:settings} and~\ref{sec:spaceYRevisited}, we concentrate on the input of the problem: the sets $M$ and $Y$ from \rf{defn:problem}.
The first of these sections is devoted to general definitions.
We give two versions of the problem: the strict one, which is a more conventional variant, and the relaxed one, which is more suitable for our framework.
The latter yields a very clean description of the transfer operator in the Fourier basis, which we describe in \rf{sec:spaceYRevisited}.
In \rf{sec:equalElementsKnowledge}, we concentrate on the output of the problem: sets $R$ and $R_\mu$.
Also, we define the $k$-Distinctness problem.

\subsection{Formulation of the Problem}
\label{sec:settings}

The following notion will be crucial for the remaining part of the paper.

\begin{defn}[Partition]
\label{defn:partition}
A \emph{partition} is a collection of \emph{blocks} $\mu = \{B_1,\dots,B_m\}$ that satisfies:
\begin{itemize}\noseps
\item Each block $B\in\mu$ is a non-empty subset of $[n]$, 
\item their union $\bigcup_{B\in\mu} B$ equals $[n]$, and
\item they are pairwise disjoint: $B\cap C = \emptyset$ for distinct $B,C\in\mu$.
\end{itemize}
\end{defn}

A \emph{singleton} is a block of size 1, a \emph{pair} a block of size 2, and an  \emph{$\ell$-block} (or an $\ell$-tuple) has size $\ell$.
Unless specifically told otherwise, all partitions will be of $[n]$, the set of input indices.

The problems we consider from now on follow the Hidden Problem framework from \rf{defn:problem}, where it is additionally assumed that $M$, the set of types, is a set of partitions.
In particular, everywhere in this and the next sections, we use the term `partition' instead of the term `type'.
These partitions will correspond to block of equal elements in the input string.

We first define the strict version of the problem.
For that, we say that an input $x\in \q^n$ \emph{agrees} to a partition $\mu$ if, for all $i,j\in [n]$,
$x_i = x_j$ if and only if $i$ and $j$ belong to the same block $B$ of the partition $\mu$.

\begin{defn}[Equal Element Problem, strict version]
\label{defn:strict}
The \emph{Equal Element Problem} is a special case of the Hidden Computational Problem from \rf{defn:problem} where each type $\mu\in M$ is a partition of $[n]$ and the set of inputs $Y$ consists of \emph{all} pairs $(\mu,y)$ with $\mu\in M$ and $y\in \q^n$ agreeing to $\mu$.

In the average-case settings, a partition $\mu$ is sampled from some probability distribution $(p_\mu)_{\mu\in M}$, and each $y \in Y_\mu$ is sampled uniformly at random, where $Y_\mu$ consists of all inputs agreeing to $\mu$.
In other words, using notation from~\rf{eqn:inducedProbabilities}, $p'_{\mu,y} = 1/|Y_\mu|$ and $p_{\mu, y} = p_\mu/|Y_\mu|$.
\end{defn}

Each input string $y$ in \rf{defn:strict} agrees to exactly one partition.
Thus, the types are not strictly necessary for the formulation of the problem in this case, since the input type can be uniquely deduced from the input string.
This is not the case for the relaxed version of the problem, where types/partitions are essential.
To define that, we need a slightly different piece of notation.
Let $\mu$ be a partition and $z\in \q^\mu$.
The string $z^{\uparrow\mu} \in \q^n$ is defined by
\begin{equation}
\label{eqn:zuparrowmu}
\text{$(z^{\uparrow\mu})_i = z_B$ for all $B\in\mu$ and $i\in B$.}
\end{equation}

\begin{defn}[Equal Element Problem, relaxed version]
\label{defn:relaxed}
In the \emph{relaxed version} of the Equal Element Problem, the set $Y$ is given by
\begin{equation}
\label{eqn:relaxedY}
Y = \sfig{ \sA[\mu, z^{\uparrow \mu}] \midA \mu\in M,\; z\in \q^\mu }.
\end{equation}
The average-case version is defined similarly to \rf{defn:strict}, where $Y_\mu$ is now the set $\sfigA{(\mu, z^{\uparrow \mu}) \mid z\in \q^\mu}$.
\end{defn}

Note that $y \in Y_\mu$ need not to agree (in the sense of \rf{defn:strict}) to $\mu$ any more since $z$ can contain equal elements.

We are primarily interested in the strict version of the problem.
However, all the analysis in the paper is done in the relaxed settings.
This is justified: as the following result shows, if the size $q$ of the input alphabet is sufficiently large, the average-case versions of the problems from \SuperRef Definitions\ref{defn:strict} and~\ref{defn:relaxed} are essentially equivalent.

\begin{prp}
\label{prp:relaxedToStrict}
Consider an algorithm solving an instance of Equal Element Problem.
Let $p^{\mathrm{strict}}$ be its success probability in the average-case settings of the strict version from \rf{defn:strict}.
Define $p^{\mathrm{relaxed}}$ similarly for the relaxed version from \rf{defn:relaxed}.
Assuming $q = \Omega(n^2)$, we have 
\[
p^{\mathrm{strict}} \le \sA[1+\OO (n^2/q)]p^{\mathrm{relaxed}}.
\]
\end{prp}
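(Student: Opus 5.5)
Fix a partition $\mu\in M$ and compare the two conditional distributions on inputs. In the relaxed version, $z$ is uniform over $\q^\mu$ and the algorithm receives $z^{\uparrow\mu}$. In the strict version, the input is uniform over $Y_\mu^{\mathrm{strict}}$, the strings agreeing to $\mu$. These are exactly the strings $z^{\uparrow\mu}$ with $z$ injective on the blocks of $\mu$, and the map $z\mapsto z^{\uparrow\mu}$ is a bijection from injective $z\in\q^\mu$ onto $Y_\mu^{\mathrm{strict}}$. So the strict conditional distribution is the relaxed one conditioned on the event $E_\mu$ that $z$ is injective. The set of correct responses $R_\mu$ depends only on $\mu$, so the success event for a fixed $(\mu,y)$ is identical in both versions.

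Let $s(\mu,y)$ be the probability that the algorithm outputs some $\rho\in R_\mu$ on input string $y$; it is a function of $y$ and $\mu$ only. Then
\[
p^{\mathrm{strict}}=\sum_\mu p_\mu\,\frac{1}{\Pr[E_\mu]}\sum_{z\text{ injective}} q^{-|\mu|}s(\mu,z^{\uparrow\mu})\le \sum_\mu p_\mu \frac{1}{\Pr[E_\mu]}\sum_{z\in\q^\mu} q^{-|\mu|}s(\mu,z^{\uparrow\mu}),
\]
using $s\ge 0$. The inner sum is the relaxed conditional success probability for $\mu$.

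It remains to lower bound $\Pr[E_\mu]$ uniformly in $\mu$. Since $\mu$ has $m=|\mu|\le n$ blocks, a union bound over pairs of blocks gives $\Pr[\neg E_\mu]\le \binom{m}{2}/q\le n^2/(2q)$. Under the assumption $q=\Omega(n^2)$ with a large enough constant, this is at most $1/2$, so
\[
\frac{1}{\Pr[E_\mu]}\le \frac{1}{1-n^2/(2q)}=1+\OO(n^2/q).
\]
Pulling this factor out of the sum over $\mu$ gives $p^{\mathrm{strict}}\le(1+\OO(n^2/q))\,p^{\mathrm{relaxed}}$.

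No step is a serious obstacle. The only care needed is to check that the same prior $p_\mu$ on partitions is used in both versions, which holds by the definitions, and that the bijection argument remains valid when an input string arises from several types in the relaxed version. It does, because everything is computed per pair $(\mu,z)$.
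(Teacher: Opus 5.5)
Your proof is correct and follows the same route as the paper. For each fixed $\mu$, you identify the strict distribution with the relaxed one conditioned on $z\in\q^\mu$ being injective, bound the probability that this event fails by $\OO(n^2/q)$, invert to get the per-$\mu$ inequality $s^{\mathrm{strict}}_\mu\le(1+\OO(n^2/q))\,s^{\mathrm{relaxed}}_\mu$, and then average over $\mu$ using the common prior $p_\mu$. Your explicit union bound $\binom{m}{2}/q$ and your choice of a large enough constant in $q=\Omega(n^2)$ simply spell out steps the paper leaves implicit.
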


\begin{proof}
Let $s^{\mathrm{strict}}_\mu$ and $s^{\mathrm{relaxed}}_\mu$ be the success probabilities of the algorithm when $x$ is generated from $\mu\in M$ as in the strict version of \rf{defn:strict} and in the relaxed version of \rf{defn:relaxed}, respectively.
The first probability distribution over $x$ is equal to the second probability distribution conditioned on all the elements of $z\in \q^\mu$ being distinct.
The latter happens with probability at least $1 - \OO (n^2/q)$, which gives
$s^{\mathrm{relaxed}}_\mu \ge \sA[1-\OO (n^2/q)]s^{\mathrm{strict}}_\mu$, or, under the assumption $q = \Omega(n^2)$:
\[
s^{\mathrm{strict}}_\mu \le \sC[{1 + \OO \s[\frac{n^2}{q}]}]s_\mu^{\mathrm{relaxed}}.
\]
Summing over all $\mu\in M$, we get
\[
p^{\mathrm{strict}} = \sum_\mu p_\mu s^{\mathrm{strict}}_\mu \le \sC[{1 + \OO \s[\frac{n^2}{q}]}] \sum_\mu p_\mu s_\mu^{\mathrm{relaxed}} = \s[{1 + \OO \s[\frac{n^2}{q}]}] p^{\mathrm{relaxed}}.
\qedhere
\]
\end{proof}

The last proposition means that, if $q = \Omega(n^2)$ and we can show that the success probability of the algorithm is small in the relaxed model, then its average success probability is also small in the strict model.
The latter, of course, implies the same upper bound on the success probability in the worst case.
From now on, we will assume that $q = \Omega(n^2)$ and will consider the relaxed settings from \rf{defn:relaxed}.

\subsection{
\texorpdfstring
	{Space $\cY$ Revisited}
	{Space Y Revisited}
}
\label{sec:spaceYRevisited}
In this section, we show that, assuming the relaxed version of the problem from \rf{defn:relaxed}, the transfer operators from \rf{defn:transfer} assume a particularly nice form if we introduce the Fourier basis in the space $\cY_\mu$.

We assume the standard basis of the space $\cY$ from \rf{sec:spaceY} consists of the vectors $\ket Y|\mu, z>$, where $\mu\in M$ is a partition and $z\in \q^\mu$.
This tuple is interpreted as the input $(\mu, z^{\uparrow\mu})$ from~\rf{eqn:relaxedY}.
Fixing a partition $\mu$, we get a subregister $\cY_\mu$ storing $z\in \q^\mu$.
For the latter, we can write
\begin{equation}
\label{eqn:Ymu}
\cY_\mu = \bigotimes_{B\in \mu} \cY_B,
\end{equation}
where each $\cY_B$ is isomorphic to $\bC^q$ and corresponds to a block $B$ of the partition $\mu$.
Since different $\mu$ have different such decompositions, we use a single register $\cY$ to store both $\mu$ and $z$.

We also define the Fourier basis in $\cY_\mu$ by
\begin{equation}
\label{eqn:YmuFourier}
\ket Y |\mu, \chit \tau> = \ket |\mu>\otimes \bigotimes_{B\in \mu} \ket Y_B|\chit {\tau(B)}>,
\end{equation}
where $\tau$ ranges over $\q^\mu$ and individual $\ket Y_B|\chit {\tau(B)}>$ are as in~\rf{eqn:FourierDef}.
It is an orthonormal basis of $\cY_\mu$.
In particular, for arbitrary $\beta_{\mu,\tau}\in \bC$, we have Parseval's identity
\begin{equation}
\label{eqn:YParseval}
\normC|{\sum_{\tau\in\q^\mu} \beta_{\mu,\tau} \ket Y|\mu,\chit \tau > }|^2 
=
\sum_{\tau\in\q^\mu} |\beta_{\mu,\tau}|^2.
\end{equation}

\begin{prp}
[Transfer Operators]
\label{prp:transfer}
For a partition $\mu$, the transfer operator $\Upsilon_\mu$ from \rf{defn:transfer} has the following form in the Fourier basis (where $\sigma$ ranges over $\q^n$):
\begin{equation}
\label{eqn:UpsilonMuFourier}
\Upsilon_\mu\colon \cX\to\cY_\mu\colon\quad 
\ket X|\chit \sigma> \mapsto \ket Y|\mu, \chit \tau>\quad\text{given by}\quad
\tau(B) = \sum_{i\in B} \sigma(i)\quad \text{for all $B\in\mu$}.
\end{equation}
\end{prp}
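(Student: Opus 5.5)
We verify the identity by a direct computation: expand $\ket X|\chit\sigma>$ in the standard basis, apply $\Upsilon_\mu$ term by term using~\rf{eqn:UpsilonMuStandard}, and recognise the result as $\ket Y|\mu,\chit\tau>$. In the relaxed setting of \rf{defn:relaxed}, $p'_{\mu,y}$ is uniform over $Y_\mu=\{z^{\uparrow\mu}\mid z\in\q^\mu\}$. These strings are distinct for distinct $z$, since $z$ can be read off from $z^{\uparrow\mu}$. Hence $p'_{\mu,x}=q^{-|\mu|}$ if $x=z^{\uparrow\mu}$ for some $z\in\q^\mu$, and $p'_{\mu,x}=0$ otherwise. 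In the standard basis of $\cY_\mu$ indexed by $z$, the definition therefore reads
\[
\Upsilon_\mu \ket X|x> = \sqrt{q^n}\,q^{-|\mu|/2}\,1_{x=z^{\uparrow\mu}}\ket Y|\mu,z>.
\]

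Next we write $\ket X|\chit\sigma> = q^{-n/2}\sum_{x\in\q^n}\omega_q^{\sum_i\sigma(i)x_i}\ket X|x>$ and apply $\Upsilon_\mu$. Only the strings $x=z^{\uparrow\mu}$ survive. For such $x$ the phase regroups by blocks:
\[
\sum_{i}\sigma(i)x_i=\sum_{B\in\mu}z_B\sum_{i\in B}\sigma(i)=\sum_{B\in\mu} z_B\tau(B),
\]
with $\tau(B)$ as in~\rf{eqn:UpsilonMuFourier}. The normalisations combine as $q^{-n/2}\cdot q^{n/2}\cdot q^{-|\mu|/2}=q^{-|\mu|/2}$. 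We thus obtain
\[
\Upsilon_\mu\ket X|\chit\sigma> = q^{-|\mu|/2}\sum_{z\in\q^\mu}\omega_q^{\sum_B z_B\tau(B)}\ket Y|\mu,z>.
\]
By~\rf{eqn:FourierDef} and~\rf{eqn:YmuFourier}, the right-hand side factorises as $\ket|\mu>\otimes\bigotimes_{B}\ket Y_B|\chit{\tau(B)}>=\ket Y|\mu,\chit\tau>$, which is the claim.

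No step is a real obstacle; the argument is bookkeeping. The one point that needs care is the normalisation in~\rf{eqn:UpsilonMuStandard}, where the map is defined on $\frac1{\sqrt{q^n}}\ket X|x>$ rather than on $\ket X|x>$, together with checking that the map $z\mapsto z^{\uparrow\mu}$ is injective, so that no two $z$ collide and the uniform conditional probability is exactly $q^{-|\mu|}$. As a sanity check, the resulting map sends an orthonormal basis to unit vectors, consistent with~\rf{eqn:UpsilonNorms}.
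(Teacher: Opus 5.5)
Your proof is correct and matches the paper's own argument: expand $\ket X|\chit \sigma>$ in the standard basis, note that only the strings $x=z^{\uparrow\mu}$ survive, each with uniform weight $p'_{\mu,x}=q^{-|\mu|}$, regroup the phase $\sum_i\sigma(i)x_i$ block by block as $\sum_{B} z_B\tau(B)$, and recognise the tensor-product Fourier vector $\ket Y|\mu, \chit \tau>$. Your explicit tracking of the $\sqrt{q^n}$ normalisation and of the injectivity of $z\mapsto z^{\uparrow\mu}$ spells out what the paper leaves implicit, and your $\omega_q$ is the correct root of unity where the paper's displayed computation writes $\omega_n$.
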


\begin{proof}
First, using~\rf{eqn:FourierDef} and~\rf{eqn:FourierSigma}, we can write
\[
\ket X|\chit \sigma> =
\frac{1}{\sqrt{q^n}} \sum_{x\in \q^n} \omega_n^{\sum_{i=1}^n \sigma(i)x_i} \ket X|x>.
\]
Now, recall that we identify $\ket Y|\mu, y>$ from~\rf{eqn:UpsilonMuStandard} (where $y\in \q^n$) with $\ket Y|\mu, z>$ (where $z\in \q^\mu$) via the identity $y = z^{\uparrow \mu}$.
Recall also that all $y\in Y_\mu = \sfigA{(\mu, z^{\uparrow \mu}) \mid z\in \q^\mu}$ have the same probability $p'_{\mu, y} = 1/|Y_\mu| = 1/q^{|\mu|}$.
Therefore, we obtain from~\rf{eqn:UpsilonMuStandard} and~\rf{eqn:zuparrowmu}:
\[
\label{eqn:transfer 1}
\Upsilon_\mu \ket X|\chit \sigma> 
=
\frac{1}{\sqrt{q^{|\mu|}}} \sum_{z\in \q^\mu} \omega_n^{\sum\limits_{i=1}^n \sigma(i)z^{\uparrow \mu}_i} \ket Y|\mu, z>
=
\frac{1}{\sqrt{q^{|\mu|}}} \sum_{z\in \q^\mu} \omega_n^{\sum\limits_{B\in \mu} z_B\sum\limits_{i\in B}\sigma(i)} \ket Y|\mu, z>,
\]
which equals $\ket Y|\mu, \chit \tau>$ from~\rf{eqn:YmuFourier}.
\end{proof}

We will be applying $\Upsilon_\mu^\bigtriangleup$ to specific vectors in $\cX$.
The following claim, which follows immediately from \rf{prp:transfer} and~\rf{eqn:pseudoKnowledgeOperatorMu}, helps in this task.
In its formulation, we use two additional pieces of notation.
First, for $\mu\in M$ and $\tau\in \q^\mu$, the set $\S[\mu,\tau]$ consists of all $\sigma\in \q^n$ satisfying $\sum_{i\in B} \sigma(i) = \tau(B)$ for all blocks $B\in \mu$.
Second, for $\bigtriangleup$ equal to $+$, $-$, or $\partial i$, the set $\S^\bigtriangleup[{\mu,\tau}]$ consists of all $\sigma\in \S[{\mu,\tau}]$ additionally satisfying $\supp(\sigma) \in L_\mu^{\bigtriangleup}$.

\begin{clm}
\label{clm:UpsilonTriangle}
For complex numbers $\alpha_\sigma$, we have
\[
\Upsilon^{\bigtriangleup}_\mu
\colon \cX\to\cY_\mu\colon\quad
\sum_{\sigma\in \q^n} \alpha_\sigma \ket X|\chit \sigma>
\longmapsto
\sum_{\tau\in \q^\mu} \sC[\sum_{\sigma\in \S^\bigtriangleup[{\mu,\tau}]} \alpha_\sigma ] \ket Y|\mu, \chit \tau>,
\]
where $\bigtriangleup$ can be either empty, or equal to $+$, $-$, or $\partial i$.
\end{clm}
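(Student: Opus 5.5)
The claim follows by linearity from the definition of $\Upsilon^\bigtriangleup_\mu$ in~\rf{eqn:pseudoKnowledgeOperatorMu} together with the Fourier form of $\Upsilon_\mu$ from \rf{prp:transfer}. The plan is to apply $\Upsilon^\bigtriangleup_\mu$ term by term to $\sum_\sigma \alpha_\sigma \ket X|\chit \sigma>$ and then regroup the result according to the image basis vector.

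First, I will treat a single basis vector $\ket X|\chit\sigma>$. By~\rf{eqn:pseudoKnowledgeOperatorMu}, $\Upsilon^\bigtriangleup_\mu \ket X|\chit\sigma>$ equals $1_{\supp(\sigma)\in L^\bigtriangleup_\mu}\Upsilon_\mu\ket X|\chit\sigma>$. For the empty superscript, I take $L_\mu$ to be all of $2^{[n]}$, so that $\Upsilon_\mu$ itself is covered. By \rf{prp:transfer}, $\Upsilon_\mu\ket X|\chit\sigma> = \ket Y|\mu,\chit\tau>$, where $\tau$ is the unique element of $\q^\mu$ with $\tau(B)=\sum_{i\in B}\sigma(i)$ for all $B\in\mu$. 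Equivalently, $\tau$ is the unique element with $\sigma\in\S[\mu,\tau]$.

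Second, I will use linearity to write
\[
\Upsilon^\bigtriangleup_\mu \sum_{\sigma\in\q^n}\alpha_\sigma\ket X|\chit\sigma> = \sum_{\sigma\colon \supp(\sigma)\in L^\bigtriangleup_\mu}\alpha_\sigma \ket Y|\mu,\chit{\tau_\sigma}>,
\]
where $\tau_\sigma$ denotes the $\tau$ obtained from $\sigma$ in the first step. I then regroup this sum by the value of $\tau_\sigma$. The sets $\S[\mu,\tau]$, for $\tau\in\q^\mu$, partition $\q^n$, because each $\sigma$ determines exactly one $\tau$. Intersecting this partition with the support condition gives the sets $\S^\bigtriangleup[\mu,\tau]$. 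The coefficient of $\ket Y|\mu,\chit\tau>$ is therefore $\sum_{\sigma\in\S^\bigtriangleup[\mu,\tau]}\alpha_\sigma$, which is the claimed formula.

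There is no real obstacle here. The only point needing care is that the sets $\S[\mu,\tau]$ really are disjoint and cover $\q^n$, so that the regrouping neither double-counts nor misses any $\sigma$. This holds because the map $\sigma\mapsto\tau_\sigma$ is a well-defined function; it is in fact a surjective group homomorphism $\q^n\to\q^\mu$. The sum over $\sigma$ is finite, so exchanging the order of summation needs no justification.
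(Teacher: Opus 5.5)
Your proposal is correct and matches the paper, which simply states that the claim follows immediately from the Fourier form of $\Upsilon_\mu$ in Proposition~\ref{prp:transfer} and the definition~\eqref{eqn:pseudoKnowledgeOperatorMu}. Your argument (apply the operator to each $\ket X|\chit\sigma>$, use linearity, then regroup by the unique $\tau$ with $\sigma\in\S[\mu,\tau]$) is exactly that reasoning written out.
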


\subsection{Knowledge}
\label{sec:equalElementsKnowledge}
In the previous two sections, we did not specify in detail the problem being solved, as the transfer operator is oblivious to it.
In this section, we define the output of the problem (sets $R$ and $R_\mu$), and this will also give us the corresponding knowledge system.
In principle, there are many different problems following \rf{defn:strict} like the Hidden Subgroup Problem~\cite{kitaev:HSP} or the Collision problem~\cite{brassard:collision}.
We can search for $\ell$ pairs of equal elements like in~\cite{hamoudi:multipleCollisionPairs}, and so on.
In this paper, our main goal is \rf{thm:main}, which limits our scope to the problem of finding $k$ equal elements.

\begin{defn}
[Finding $k$ Equal Elements]
\label{defn:findKEqual}
The problem of finding $k$ equal elements is a special case of the Equal Element Problem (either the strict version of \rf{defn:strict} or the relaxed one from \rf{defn:relaxed}) with the following assumptions on $R$ and $R_\mu$.
The set $R$ consists of all subsets of $[n]$ of size $k$.
For a partition $\mu\in M$, the set $R_\mu$ consists of all $\rho\in R$ that are subsets of blocks in $\mu$, i.e, $R_\mu = \sfigA{\rho\in R\mid \exists B\in \mu\colon \rho\subseteq B}$.
\end{defn}

The corresponding knowledge system $L^+$ for this problem follows this definition.
Namely, for $\mu\in M$, the set $L^+_\mu$ consists of all $S\subseteq [n]$ that have intersection of size at least $k$ with some block of the partition $\mu$.

The $k$-Distinctness problem we study in depth in \rf{sec:kDistQueryGain} follows this convention.
The most conventional formulation of this problem is as \rf{defn:findKEqual} with $M$ consisting of all possible partitions of $[n]$.
We concentrate on the version of the problem where there is unique block of $k$ equal elements.

\begin{defn}[(Unique) $k$-Distinctness.]
\label{defn:kDist}
The \emph{$k$-Distinctness} problem follows \rf{defn:findKEqual}, where each $\mu\in M$ contains exactly one $k$-block, and all other blocks of $\mu$ are of smaller size.
The \emph{Element Distinctness} problem is a special case of $k$-Distinctness with $k=2$.
\end{defn}

In this case, for each $\mu$, there is unique correct response: $R_\mu$ consists of the only $k$-block of $\mu$.
Also, the knowledge system $L^+_\mu$ consists of all subsets of $[n]$ that are simultaneously supersets of the only $k$-block of $\mu$.

\rf{defn:findKEqual} is not limited to  $k$-Distinctness.
With different choices of $M$, we get other problems.
For example, if $M$ consists of all matchings of $[n]$ (partitions into $n/2$ pairs) and $k=2$, we get the Collision problem.
\medskip

Before we go into the lower bound for $k$-Distinctness in the next two sections, let us end this section with a simple example related to the Element Distinctness problem, that illustrates some of the concepts introduced above.

\begin{exm}
\label{exm:ED}
Consider the (Unique) Element Distinctness problem from \rf{defn:kDist}.
Thus, each partition $\mu\in M$ contains exactly one pair and $n-2$ singletons, and we assume that $M$ contains all such partitions.
We denote by $\mu_{a,b}$ the partition from $M$ having $\{a,b\}$ as its unique pair.
We equip $M$ with the uniform probability distribution: $p_\mu = 1/\binom{n}{2}$ for all $\mu\in M$.

Consider the following vector in $\cX_{\le 2}$:
\begin{equation}
\label{eqn:overlapExample}
\phi = \sum_{v\in \q} \ket X|\chit {\{1\mapsto v,\, 2\mapsto -v\}}>.
\end{equation}
Applying $\Upsilon$, we have by \rf{clm:UpsilonTriangle} and~\rf{eqn:UpsilonDirectSum}:
\begin{equation}
\label{eqn:overlapExample2}
\Upsilon \phi = \frac1{\sqrt{\binom n2}} \sC[
q \ket Y|\mu_{1,2}, \chit \emptyset>
+
\sum_{\{a,b\}\ne \{1,2\}} \sum_{v\in \q} \ket Y|\mu_{a,b} ,\chit {\{1\hookmapsto v,\; 2\hookmapsto -v\}}>
],
\end{equation}
where $i\hookmapsto v$ denotes $B\mapsto v$ for the block $B\in \mu_{a,b}$ containing $i$.

The norm squared of the $\cY_{\mu_{1,2}}$-projection of this state is $q^2/\binom n2$, while the norm squared of the remaining part of the state is $\sA[\binom n2-1]q/\binom n2 < q$. 
Under the assumption $q\gg n^2$, if we measure the normalised state~\rf{eqn:overlapExample2}, we obtain the output $\{1,2\}$ with overwhelming probability, hence, this vector is not anti-concentrated.
In particular, the space $\Upsilon \cX_{\le 2}$ is not anti-concentrated.

Recall that the knowledge system $L^+$ is defined by $L^+_{\mu_{a,b}}$ consisting of all supersets of $\{a,b\}$.
Correspondingly, $L^-_{\mu_{a,b}}$ consists of all subsets $S\subseteq [n]$ such that $\{a,b\}\not\subseteq S$.
Therefore, the state $\Upsilon \phi$ from~\rf{eqn:overlapExample2} is decomposed into the sum of two vectors
\[
\Upsilon^- \phi = \frac1{\sqrt{\binom n2}} \sum_{\{a,b\}\ne \{1,2\}} \sum_{v\in \q} \ket Y|\mu_{a,b} ,\chit {\{1\hookmapsto v,\; 2\hookmapsto -v\}}>
\qqand
\Upsilon^+ \phi = \frac{q}{\sqrt{\binom n2}}
\ket Y|\mu_{1,2}, \chit \emptyset>.
\]
It is easy to see that the first vector is $\OO (\frac{1}{n})$-anti-concentrated.
We will generalise this observation in \rf{sec:EDAnti} by showing that the whole space $\Upsilon^-\cX_{\le t}$ is $\OO(\frac 1n)$-anti-concentrated for all $t\le n/2$.
\end{exm}

\section{Anti-Concentration}
\label{sec:kDistAntiConcentration}

In this section, we start applying the framework of \SuperRef Sections\ref{sec:framework} and~\ref{sec:equalElements} for the $k$-Distinctness problem.
We begin with two anti-concentration results.
In the first one, we prove $\OO(1/n)$-anti-concentration for the Element Distinctness problem.
The second anti-concentration results will be more general.
It works for the problem of finding $k$-equal elements as in \rf{defn:findKEqual} provided that the problem is invariant under permutations and each partition $\mu\in M$ has $\Omega(n)$ singletons.
Under these assumptions, we prove $\OO(1/\sqrt n)$-anti-concentration.
In \rf{sec:kDistQueryGain}, we combine these results with the query gain results proved in that section to obtain lower bounds for Element Distinctness and $k$-Distinctness.

All collections of partitions we use in this and the next sections satisfy the following permutation-invariance property.

\begin{defn}[Orbit]
\label{defn:orbit}
We say that a collection $M$ of partitions is an \emph{orbit} of a partition $\mu$ if the following two conditions hold:
\begin{enumerate}\noseps
\item
The collection $M$ is the orbit of $\mu$ under the action of the symmetric group on $[n]$.

\item
The associated probability distribution is uniform: $p_\mu = \frac1{|M|}$ for all $\mu\in M$.
\end{enumerate}
\end{defn}

More formally, let $S_n$ be the symmetric group consisting of all permutations of $[n]$.
For $\pi\in S_n$, and partition $\mu = \{B_1,\dots,B_m\}$, we define $\pi(\mu) = \{\pi(B_1),\dots, \pi(B_m)\}$ with $\pi(B) = \{\pi(b) \mid b\in B\}$.
Thus, in Point~1 of \rf{defn:orbit}, $M = \{\pi(\mu) \mid \pi\in S_n\}$, which holds for any $\mu\in M$.

\def\nodegapA{0.9}
\def\marginsizeA{0.37}

\def\drawelementA#1#2{ % column, content
    \draw (#1*\nodegapA, 0) node[draw, circle, inner sep=5] {};
    \draw (#1*\nodegapA, 0) node[] {$#2$};
}
\def\drawblockI#1#2#3#4{ % column start, column end, subtitle, filling
    \draw[#4] (#1*\nodegapA-\marginsizeA, -\marginsizeA) rectangle (#2*\nodegapA+\marginsizeA, \marginsizeA);
    \path (#1*\nodegapA, -\marginsizeA) to node[brown] {$#3$} (#2*\nodegapA, -\marginsizeA-0.4);
}
\def\drawblockA#1#2#3{ % column start, column end, subtitle
    \drawblockI{#1}{#2}{#3}{}
}
\def\drawblockF#1#2#3{ % column start, column end, subtitle
    \drawblockI{#1}{#2}{#3}{fill=gray!50}
}

\subsection{Element Distinctness}
\label{sec:EDAnti}
In this section, we prove an anti-concentration result for the special case $k=2$ of $k$-Distinctness, known as Element Distinctness.
Everywhere in this section, $M$ will be the collection of partitions corresponding to this problem.
Recall from \rf{exm:ED} that it is defined as follows.
First, each partition $\mu\in M$ contains exactly one pair and $n-2$ singletons, and the probability distribution $p_\mu$ on $M$ is uniform.
The corresponding knowledge system $L^+$ is defined by the set $L^+_\mu$ consisting of all subsets of $[n]$ that are simultaneously supersets of the unique pair in $\mu$.
In this section, we prove the following result, fulfilling a half of the program in \rf{lem:framework}.

\begin{thm}[Element Distinctness, anti-concentration]
\label{thm:EDAntiConcentration}
Assume $t\le n/2$.
In the above assumptions, the subspace $\Upsilon^-\cX_{\le t}$ is $\OO \s[1/n]$-anti-concentrated.
\end{thm}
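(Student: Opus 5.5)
The plan is to verify the definition of anti-concentration directly. Fix an arbitrary $\phi = \sum_{\sigma} \alpha_\sigma \ket X|\chit\sigma> \in \cX_{\le t}$, so that $\alpha_\sigma = 0$ whenever $|\sigma|>t$, and fix a response $\rho=\{a,b\}$. Since $M_\rho$ projects onto $\cY_{\mu_{a,b}}$ only, \rf{eqn:pseudoKnowledgeOperator} gives
\[
\|M_\rho \Upsilon^-\phi\|^2 = \frac{1}{|M|}\|\Upsilon^-_{\mu_{a,b}}\phi\|^2
\qqand
\|\Upsilon^-\phi\|^2 = \frac{1}{|M|}\sum_{\{c,d\}}\|\Upsilon^-_{\mu_{c,d}}\phi\|^2 .
\]
So it suffices to show two things. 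First, every individual term $\|\Upsilon^-_{\mu_{a,b}}\phi\|^2$ is $\OO(\|\phi\|^2)$. Second, the sum over all $\binom n2$ pairs is $\Omega(n^2\|\phi\|^2)$. The ratio of squared norms is then $\OO(1/n^2)$, which gives $\OO(1/n)$-anti-concentration.

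To handle both bounds, I will compute $\Upsilon^-_{\mu_{a,b}}\phi$ explicitly using \rf{clm:UpsilonTriangle} and Parseval's identity~\rf{eqn:YParseval}. Take $\tau\in\q^{\mu_{a,b}}$ and let $s=\tau(\{a,b\})$. An element $\sigma$ of $\S^-[\mu_{a,b},\tau]$ is completely determined outside $\{a,b\}$, namely $\sigma(i)=\tau(\{i\})$, by a function $\sigma'$ vanishing on $a$ and $b$. On $\{a,b\}$ it must satisfy $\sigma(a)+\sigma(b)=s$ with $\{a,b\}\not\subseteq\supp\sigma$.
\begin{itemize}
\item If $s=0$, the only option is $\sigma(a)=\sigma(b)=0$, so the coefficient of $\ket Y|\mu_{a,b},\chit\tau>$ is $\alpha_{\sigma'}$.
\item If $s\neq 0$, there are exactly two options, and the coefficient is $\alpha_{\sigma'\cup\{a\mapsto s\}}+\alpha_{\sigma'\cup\{b\mapsto s\}}$.
\end{itemize}
Hence
\[
\|\Upsilon^-_{\mu_{a,b}}\phi\|^2=\sum_{\sigma'}\Bigl(|\alpha_{\sigma'}|^2+\sum_{s\ne0}\bigl|\alpha_{\sigma'\cup\{a\mapsto s\}}+\alpha_{\sigma'\cup\{b\mapsto s\}}\bigr|^2\Bigr),
\]
where $\sigma'$ ranges over functions with $a,b\notin\supp\sigma'$.

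The upper bound follows from $|x+y|^2\le 2|x|^2+2|y|^2$. Each $\sigma$ appears in at most one term, so $\|\Upsilon^-_{\mu_{a,b}}\phi\|^2\le 2\|\phi\|^2$.

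For the lower bound, I drop the $s\ne0$ terms, which is the only place where cancellations could occur. This leaves $\|\Upsilon^-_{\mu_{c,d}}\phi\|^2\ge\sum_{\sigma\colon c,d\notin\supp\sigma}|\alpha_\sigma|^2$. Summing over all pairs $\{c,d\}$, each $\sigma$ with $\alpha_\sigma\ne0$ has $|\sigma|\le t\le n/2$. Therefore it is counted for at least $\binom{n-t}{2}\ge\binom{\lceil n/2\rceil}{2}=\Omega(n^2)$ pairs, which gives $\sum_{\{c,d\}}\|\Upsilon^-_{\mu_{c,d}}\phi\|^2\ge\Omega(n^2)\|\phi\|^2$.

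Combining the two bounds yields $\|M_\rho\Upsilon^-\phi\|\le\OO(1/n)\|\Upsilon^-\phi\|$ for every $\rho$, which is the claimed $\OO(1/n)$-anti-concentration. The main point to get right is the lower bound. The cross terms $\alpha_{\sigma'\cup\{a\mapsto s\}}+\alpha_{\sigma'\cup\{b\mapsto s\}}$ can cancel, as in \rf{exm:ED}, so those terms cannot be relied on. The key observation is that the ``untouched'' contribution from pairs disjoint from $\supp\sigma$ is exactly $|\alpha_\sigma|^2$. The assumption $t\le n/2$ is precisely what guarantees that a constant fraction of pairs is disjoint from every relevant support.
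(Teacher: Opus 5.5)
Your proof is correct, but it takes a different route from the paper.

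\textbf{The paper's route.} It uses the Local Alterations Lemma~\ref{lem:alterations}. For each $(\mu,\tau)$ with $\mu\in M_\rho$, it swaps the pair $\{a,b\}$ with a pair $\{c,d\}$ of zero-valued singletons of $\tau$, and shows $\beta^-[\mu,\tau]=\beta^-[\mu',\tau_a]+\beta^-[\mu',\tau_b]$. There are $K=\Omega(n^2)$ such swaps, and each altered pair $(\mu',\tau')$ is charged only once ($K'=1$). So it compares Fourier coefficients of $\Upsilon^-\phi$ across different types directly and never refers to $\|\phi\|$.

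\textbf{Your route.} You use $\|\phi\|$ as an intermediate yardstick, in two steps:
\begin{itemize}
\item Under $L^-$, each coefficient of $\Upsilon^-_{\mu}\phi$ aggregates at most two of the $\alpha_\sigma$, which gives $\|\Upsilon^-_\mu\phi\|^2\le 2\|\phi\|^2$.
\item $\Upsilon^-_{\mu_{c,d}}$ acts isometrically on the span of Fourier vectors vanishing on $\{c,d\}$, which gives total mass at least $\binom{n-t}{2}\|\phi\|^2$. This is where $t\le n/2$ enters, playing the role of the paper's $K=\Omega(n^2)$.
\end{itemize}

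\textbf{What your route buys.} It is shorter and gives the explicit constant $\sqrt{2/\binom{n-t}{2}}$. As a bonus, it shows $\|\Upsilon^-\phi\|=\Theta(\|\phi\|)$ on $\cX_{\le t}$.

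\textbf{Its limitation.} The argument relies on the special structure of this collection: all other blocks are singletons, and $L^-$ restricts the pair to supports of size at most one. In the setting of Theorem~\ref{thm:anticoncentration} this fails. The $k$-block when $k\ge 3$, or any block of size between $2$ and $k-1$, lets on the order of $q$ functions $\sigma$ contribute to a single coefficient. For instance, the vector $\phi$ of Example~\ref{exm:ED} has its norm multiplied by $\sqrt q$ under such a block. So $\Upsilon^-_\mu$ is no longer bounded relative to $\|\phi\|$, and the yardstick breaks. The paper's type-to-type comparison, combined with the inclusion--exclusion identity, is what carries over. That is why the paper presents the Element Distinctness case in that form, as a warm-up.
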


Consider an arbitrary vector
\begin{equation}
\label{eqn:EDAnti phi}
\phi = \sum_{\sigma\in \q^n\colon |\sigma|\le t} \alpha_\sigma \ket X|\chit\sigma> \in \cX_{\le t}.
\end{equation}
To simplify notation, we will assume that $\alpha_\sigma=0$ for all $\sigma$ with support size greater than $t$.
By \rf{defn:anticoncentration} of anti-concentration, we have to show that
\[
\|M_\rho\Upsilon^-\phi\| = \OO \s[\frac 1n]\|\Upsilon^-\phi\|
\]
for every $\rho\in R$.
By \rf{clm:UpsilonTriangle} and Point~2 of \rf{defn:orbit}, this is equivalent to
\[
\norm|\sum_{\mu\in M_\rho} \frac1{|\sqrt{M}|} \sum_{\tau\in \q^\mu} \sC[\sum_{\sigma\in \S^-[\mu, \tau]} \alpha_\sigma] {\ket Y|\mu,\chit\tau> }|^2
=
\OO \s[\frac1{n^2}]
\norm|\sum_{\mu\in M} \frac1{|\sqrt{M}|} \sum_{\tau\in \q^\mu} \sC[\sum_{\sigma\in \S^-[\mu, \tau]} \alpha_\sigma] {\ket Y|\mu,\chit\tau> }|^2
\]
(The only difference between the left and the right is in the range of the sum over $\mu$.)
Note that all $\sigma \in \S^-[\mu,\tau]$ satisfy $|\sigma|\ge |\tau|$.
Hence, the sum over $\sigma$ in the above equation is zero whenever $|\tau|> t$.
Using this, and applying also Parserval's identity~\rf{eqn:YParseval}, we get that the above equation is equivalent to
\[
\sum_{\mu\in M_\rho} \sum_{\tau\in \q^\mu \colon |\tau|\le t} \absC|\sum_{\sigma\in \S^-[\mu, \tau]} \alpha_\sigma|^2
=
\OO \s[\frac1{n^2}]
\sum_{\mu\in M} \sum_{\tau\in \q^\mu \colon |\tau|\le t} \absC|\sum_{\sigma\in \S^-[\mu, \tau]} \alpha_\sigma|^2.
\]

Let us simplify notation.
First, denote by $P$ all pairs $(\mu,\tau)$, where $\mu\in M$, $\tau\in \q^\mu$, and $|\tau|\le t$.
Next, for $(\mu,\tau)\in P$, denote 
\begin{equation}
\label{eqn:beta}
\beta^-[\mu,\tau] = \sum_{\sigma\in \S^-[\mu,\tau]} \alpha_\sigma.
\end{equation}
Under these simplifications, we have to prove the following bound
\begin{equation}
\label{eqn:EDAnti Goal}
\sum_{(\mu,\tau)\in P\colon \mu\in M_\rho} \absA|\beta^-[\mu,\tau]|^2
=
\OO \s[\frac{1}{n^2}] 
\sum_{(\mu,\tau)\in P} \absA|\beta^-[\mu,\tau]|^2.
\end{equation}

The general proof strategy is clear.
We have to show that each $(\mu,\tau)\in P$ with $\mu\in M_\rho$ gives rise to a bunch (more precisely, $\Omega(n^2)$) of $(\mu',\tau')\in P$ such that $\absA|\beta^-[\mu,\tau]| = \OO(1) \absA|\beta^-[\mu',\tau']|$.
This is a bit too strong, but replacing a single pair $(\mu',\tau')\in P$ with a family of pairs gives a provable claim.
We formulate this proof strategy in the following way.

\begin{lem}[Local Alterations]
\label{lem:alterations}
Let $M$ be a collection of partitions with uniform probability distribution as in Point~2 of \rf{defn:orbit}, $M_\rho\subseteq M$, 
and $L^+$ be a knowledge system in $M$.
We assume $\phi$ and $\alpha_\sigma$ are as in~\rf{eqn:EDAnti phi}, while $P$ and $\beta^-[\mu,\tau]$ are as described near~\rf{eqn:beta}.

Assume that, for each pair $(\mu, \tau)\in P$ with $\mu\in M_\rho$, there exists a set $\Delta_{\mu,\tau}$ of some labels, and, for each $d\in \Delta_{\mu,\tau}$, there exists a subset $N_{\mu,\tau,d}\subseteq P$ satisfying 
\begin{equation}
\label{eqn:alterationsEstimate}
\absA|\beta^-[\mu,\tau]|^2 = \OO (1) \sum_{(\mu', \tau')\in N_{\mu,\tau,d}} \absA|\beta^-[\mu',\tau']|^2.
\end{equation}

Let $K$ be the minimum size of $\Delta_{\mu,\tau}$ over all $(\mu,\tau)\in P$ with $\mu\in M_\rho$.
Let $K'$ be the maximum number, taken over $(\mu',\tau')\in P$, of triples $(\mu,\tau,d)$ such that $(\mu',\tau') \in N_{\mu,\tau,d}$.
Then:
\begin{equation}
\label{eqn:alterationsResult}
\|M_\rho \Upsilon^-\phi\| = \OO \s[\sqrt{\frac{K'}{K}}] \|\Upsilon^-\phi\|.
\end{equation}
\end{lem}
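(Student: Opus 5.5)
The argument is a double-counting over the alterations, carried out after reducing the norms in \rf{eqn:alterationsResult} to sums of $|\beta^-[\mu,\tau]|^2$.

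\emph{Step 1: reduction.} By \rf{clm:UpsilonTriangle}, Point~2 of \rf{defn:orbit}, Parseval's identity~\rf{eqn:YParseval}, and the remark that $\beta^-[\mu,\tau]=0$ whenever $|\tau|>t$, we have
\[
\|M_\rho\Upsilon^-\phi\|^2 = \frac{1}{|M|}\sum_{(\mu,\tau)\in P\colon \mu\in M_\rho}\absA|\beta^-[\mu,\tau]|^2
\qqand
\|\Upsilon^-\phi\|^2 = \frac{1}{|M|}\sum_{(\mu,\tau)\in P}\absA|\beta^-[\mu,\tau]|^2.
\]
This is exactly the computation preceding~\rf{eqn:EDAnti Goal}, and it uses nothing specific to Element Distinctness beyond the uniform distribution. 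So it suffices to prove
\[
\sum_{(\mu,\tau)\in P,\ \mu\in M_\rho}|\beta^-[\mu,\tau]|^2 = \OO(K'/K)\sum_{(\mu,\tau)\in P}|\beta^-[\mu,\tau]|^2 .
\]

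\emph{Step 2: averaging and double counting.} Fix $(\mu,\tau)\in P$ with $\mu\in M_\rho$. We may assume $K\ge 1$; otherwise the statement is vacuous. Average~\rf{eqn:alterationsEstimate} over the labels $d\in\Delta_{\mu,\tau}$, using $|\Delta_{\mu,\tau}|\ge K$:
\[
|\beta^-[\mu,\tau]|^2 = \OO(1)\frac{1}{|\Delta_{\mu,\tau}|}\sum_{d\in\Delta_{\mu,\tau}}\sum_{(\mu',\tau')\in N_{\mu,\tau,d}}|\beta^-[\mu',\tau']|^2 \le \frac{\OO(1)}{K}\sum_{d\in\Delta_{\mu,\tau}}\sum_{(\mu',\tau')\in N_{\mu,\tau,d}}|\beta^-[\mu',\tau']|^2 .
\]
Now sum over all such $(\mu,\tau)$ and exchange the order of summation. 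A fixed $(\mu',\tau')\in P$ appears once for each triple $(\mu,\tau,d)$ with $(\mu',\tau')\in N_{\mu,\tau,d}$, hence at most $K'$ times. All terms are non-negative, so extending the range to all of $P$ only increases the right-hand side. This gives the left-hand side of the target bound at most $\OO(1)\frac{K'}{K}\sum_{(\mu',\tau')\in P}|\beta^-[\mu',\tau']|^2$.

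\emph{Step 3: conclusion.} Multiply both sides by $1/|M|$ and take square roots to obtain~\rf{eqn:alterationsResult}.

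None of these steps is a real obstacle. The only points needing care are that the constant in $\OO(1)$ of~\rf{eqn:alterationsEstimate} is uniform over all $(\mu,\tau,d)$, and that the Parseval reduction in Step~1 holds for a general $M_\rho$, not just the one in \rf{sec:EDAnti}. The actual difficulty lies in later applications of the lemma, where one must construct alterations with $K'/K$ small.
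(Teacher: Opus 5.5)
Your proposal is correct and follows essentially the same route as the paper: reduce via Parseval and the uniform distribution to comparing sums of $|\beta^-[\mu,\tau]|^2$, then double count over the triples $(\mu,\tau,d)$ using $|\Delta_{\mu,\tau}|\ge K$ and the multiplicity bound $K'$. The only difference is cosmetic: you divide by $|\Delta_{\mu,\tau}|$, whereas the paper multiplies the left-hand side by $K$.
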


\begin{proof}
Using exactly the same argument as at the beginning of this section, it suffices to show that
\[%begin{equation}
%\label{eqn:alterationsGoal}
\sum_{(\mu,\tau)\in P\colon \mu\in M_\rho} \absA|\beta^-[\mu,\tau]|^2
=
\OO \s[\frac{K'}{K}] 
\sum_{(\mu,\tau)\in P} \absA|\beta^-[\mu,\tau]|^2.
\]%end{equation}
This is shown as follows:
\begin{align*}
K \sum_{(\mu,\tau)\in P\colon \mu\in M_\rho} \absA|\beta^-[\mu,\tau]|^2
&\le
\sum_{(\mu,\tau)\in P\colon \mu\in M_\rho} \; \sum_{d\in \Delta_{\mu,\tau}} \absA|\beta^-[\mu,\tau]|^2
\\
&\le
\OO (1) \sum_{(\mu,\tau)\in P\colon \mu\in M_\rho} \sum_{d\in \Delta_{\mu,\tau}} \sum_{(\mu', \tau')\in N_{\mu,\tau,d}} \absA|\beta^-[\mu',\tau']|^2
\\
&\le
\OO (K') \sum_{(\mu', \tau')\in P} \absA|\beta^-[\mu',\tau']|^2.\qedhere
\end{align*}
\end{proof}

Let us now apply the local alteration \rf{lem:alterations} for Element Distinctness to prove \rf{thm:EDAntiConcentration}.
Fix $\rho = \{a,b\}$, and assume that $\mu\in M_\rho$, i.e., $\mu$ contains $\{a,b\}$ as a pair.
Fix also $\tau\in \q^\mu$ with $|\tau|\le t$, and let $v = \tau(\{a,b\})$.
We define $\Delta_{\mu,\tau}$ as the collection of pairs $\{c,d\}$ such that $\{c\}$ and $\{d\}$ are singletons of $\mu$, and $\tau(\{c\}) = \tau(\{d\})=0$.
By our assumption of $|\tau|\le t \le n/2$, the size of $\Delta_{\mu,\tau}$ is $\Omega(n^2)$, hence, in notation of \rf{lem:alterations}, $K=\Omega(n^2)$.

Temporarily fix $\{c,d\}\in \Delta_{\mu, \tau}$.
We define a partition $\mu_0$ of $[n]\setminus \{a,b,c,d\}$ and a mapping $\tau_0\in \q^{\mu_0}$ on it by (using conventions of \rf{sec:prelim}):
\[
\mu_0 = \mu \setminus\sfigA{\{a,b\}, \{c\}, \{d\}}
\qqand
\tau_0 = \tau \setminus\sfigA{ \{a,b\}\mapsto v }.
\]
From that, we define a partition $\mu'\in M$ and two mapping $\tau_a, \tau_b\in \q^{\mu'}$ on it by (see \rf{fig:EDAnticoncentration}):
\[
\mu' = \mu_0\cup\sfigA{\{a\}, \{b\}, \{c,d\}},
\qquad
\tau_a = \tau_0 \cup\sfigA{\{a\}\mapsto v},
\qqand
\tau_b = \tau_0 \cup\sfigA{\{b\}\mapsto v}.
\]
We let 
\[
N_{\mu,\tau,\{c,d\}} = \sfigA{(\mu',\tau_a), (\mu',\tau_b)}.
\]
(If $v=0$, then $\tau_a = \tau_b$, and this set is of cardinality 1.)
For any $(\mu',\tau')\in N_{\mu, \tau, \{c,d\}}$ for \emph{some} pair $(\mu,\tau)$, the latter can be uniquely reconstructed from the former.
Indeed, $\{a,b\} = \rho$,  $\{c,d\}$ is the only pair of $\mu'$, and the value of $\tau(\{a,b\})$ is the only non-zero (if any) of $\tau'(\{a\})$ and $\tau'(\{b\})$.
Therefore, in notation of \rf{lem:alterations}, $K' = 1$.
To prove \rf{thm:EDAntiConcentration} using this lemma, it suffices to show~\rf{eqn:alterationsEstimate}.
For that, we have to consider two cases, see also \rf{fig:EDAnticoncentration}.

\def\drawabcd{
    \drawelementA0a
    \drawelementA1b
    \drawelementA2c
    \drawelementA3d
}

\myfigure
\label{fig:EDAnticoncentration}
=====
\[
\begin{tikzpicture}
\begin{scope}[shift={(0,0)}]
    \node[blue] at (-1.2, 0) {$(\mu,\tau)$};
    \drawabcd
    \drawblockA 010
    \drawblockA 220
    \drawblockA 330
\end{scope}
\begin{scope}[shift={(0,-2.5)}]
    \node[blue] at (-1.2, 0) {$(\mu',\tau')$};
    \drawabcd
    \drawblockA 000
    \drawblockA 110
    \drawblockA 230
\end{scope}
\draw[very thick, double distance=2pt, gray, arrows={-Latex[length=2 2.5 0]}] (1.4,-0.9) to (1.4, -2);
\begin{scope}[shift={(6.5,0)}]
    \node[blue] at (-1.2, 0) {$(\mu,\tau)$};
    \drawabcd
    \drawblockA 01v
    \drawblockA 220
    \drawblockA 330
\begin{scope}[shift={(-2.5,-3.2)}]
    \node[blue] at (1.5, 0.8) {$(\mu',\tau_a)$};
    \drawabcd
    \drawblockA 00v
    \drawblockA 110
    \drawblockA 230
\end{scope}
\begin{scope}[shift={(2.5,-3.2)}]
    \node[blue] at (1.5, 0.8) {$(\mu',\tau_b)$};
    \drawabcd
    \drawblockA 000
    \drawblockA 11v
    \drawblockA 230
\end{scope}
\draw[very thick, double distance=2pt, gray, arrows={-Latex[length=2 2.5 0]}] (1.2,-0.9) to (-0.9, -2);
\draw[very thick, double distance=2pt, gray, arrows={-Latex[length=2 2.5 0]}] (2.2,-0.9) to (4, -2);
\node[gray] at (1.3, -3.2) {\Huge$+$};
\end{scope}
\end{tikzpicture}
\]
\negbigskip
-----
An illustration to the two cases from the proof of \rf{thm:EDAntiConcentration}.
It only depicts the elements $a,b,c,d$, since for the remaining elements of $[n]$ all partitions and functions on them are identical.
On the left, a pair $(\mu,\tau)$ with $\tau(\{a,b\})=0$ gets replaced by the pair $(\mu',\tau')$ as in~\rf{eqn:ED Anti 1}.
On the right, a pair $(\mu,\tau)$ with $\tau(\{a,b\})=v\ne 0$ gets replaced by two pairs $(\mu', \tau_a)$ and $(\mu', \tau_b)$ like in~\rf{eqn:ED Anti 15}.
=====

\begin{itemize}
\item 
If $v=0$, then $\tau_a = \tau_b$.
Denote their common value by $\tau'$.
All $\sigma\in \S^-[\mu,\tau]$ satisfy $\sigma(a)=\sigma(b)=\sigma(c)=\sigma(d) = 0$.
The same is true for any $\sigma\in \S^-[\mu', \tau']$, which gives us $\S^-[\mu,\tau] = \S^-[\mu',\tau']$, resulting in 
\begin{equation}
\label{eqn:ED Anti 1}
\absA|\beta^-[\mu,\tau]|^2
= 
\absA|\beta^-[\mu',\tau']|^2.
\end{equation}

\item If $v\ne 0$, then all $\sigma\in \S^-[\mu,\tau]$ satisfy $\sigma(c)=\sigma(d)=0$ together with $\sigma(a)=v$ xor $\sigma(b)=v$.
This yields a decomposition into a disjoint union: $\S^-[\mu,\tau] = \S^-[\mu',\tau_a] \sqcup \S^-[\mu',\tau_b]$.  
Therefore,
\begin{equation}
\label{eqn:ED Anti 15}
\beta^-[\mu, \tau]
= 
\beta^-[\mu', \tau_a]
+
\beta^-[\mu', \tau_b],
\end{equation}
which by the QM-AM inequality gives
\begin{equation}
\label{eqn:ED Anti 2}
\absA|\beta^-[\mu, \tau]|^2
\le
2\sB[{
\absA|\beta^-[\mu', \tau_a]|^2
+
\absA|\beta^-[\mu', \tau_b]|^2
}].
\end{equation}
\end{itemize}

\eqsrf{eqn:ED Anti 1} and~\rf{eqn:ED Anti 2} are of the form~\rf{eqn:alterationsEstimate}, which finishes the proof of \rf{thm:EDAntiConcentration}.

\subsection{General Anti-Concentration Theorem}
\label{sec:anticoncentration}
In this section, we prove a general anti-concentration theorem for the task of finding $k$ equal elements from \rf{defn:findKEqual} under two additional assumptions: the problem is invariant under permutations, and each partition contains many singletons.
Formally, our assumptions (including the ones from \rf{sec:equalElementsKnowledge}) are as follows:

\begin{itemize}\noseps
\item The collection $M$ is an orbit of $\mu$ such that
    \begin{itemize}\noseps
    \item it has at least $cn$ singletons for some constant $c>0$, and
    \item the size of its largest block is $\OO (1)$.
    \end{itemize}
\item The set $R$ consists of all $k$-subsets of $[n]$ for some $2\le k=\OO (1)$.
\item The set $M_\rho$ with $\rho\in R$ is given by $M_\rho = \{\mu\in M\mid \exists B\in \mu\colon \rho\subseteq B\}$.
%\item The set $R_\mu$ is given by $R_\mu = \{\rho\in R\mid \exists B\in \mu\colon \rho\subseteq B\}$.
\item The knowledge system $L^+$ is given by 
$L^+_\mu = \sfigA{S\subseteq [n]\mid \exists B\in \mu\colon |B\cap S|\ge k}$.
\end{itemize}

The condition that $M$ forms a single orbit is without loss of generality.
If $M$ is a union of several orbits, each of which satisfy $\gamma$-anti-concentration, the whole collection $M$ also meets this condition.
In this section, we prove the following result.

\begin{thm}
[Anti-concentration with Singletons]
\label{thm:anticoncentration}
In the above assumptions, the subspace $\Upsilon^-\cX_{\le t}$ is $\OO (1/\sqrt{n})$-anti-concentrated for all $t\le cn/2$.
\end{thm}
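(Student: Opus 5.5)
We apply the Local Alterations \rf{lem:alterations}. Our target is $K=\Omega(n)$ and $K'=\OO(1)$, which gives the required $\OO(1/\sqrt n)$ bound via~\rf{eqn:alterationsResult}. The first steps copy the proof of \rf{thm:EDAntiConcentration}.

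Fix $\rho\in R$, a partition $\mu\in M_\rho$, and $\tau\in\q^\mu$ with $|\tau|\le t\le cn/2$. Let $B\in\mu$ be the block with $\rho\subseteq B$ and put $v=\tau(B)$. Since $\mu$ has at least $cn$ singletons and $|\tau|\le cn/2$, at least $cn/2$ singletons $\{d\}$ of $\mu$ have $\tau(\{d\})=0$. We call these \emph{free singletons}. A label $d\in\Delta_{\mu,\tau}$ will be a bounded-size tuple of free singletons, which are about to be recycled into a new copy of $B$. Since $|B|=\OO(1)$, we get $|\Delta_{\mu,\tau}|=\Omega(n)$ even if we only vary one coordinate of the tuple. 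This gives $K=\Omega(n)$.

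The key structural observation is the following. For every $\sigma\in\S^-[\mu,\tau]$ we have $|\supp(\sigma)\cap B|<k=|\rho|$, so $\sigma$ vanishes on some element of $\rho$. By inclusion--exclusion over nonempty $A\subseteq\rho$,
\[
\beta^-[\mu,\tau]=\sum_{\emptyset\ne A\subseteq\rho}(-1)^{|A|+1}\sum_{\sigma\in\S^-[\mu,\tau]\colon \sigma|_A\equiv 0}\alpha_\sigma .
\]
There are only $2^k=\OO(1)$ terms, so by QM--AM it suffices to bound each restricted sum by $\OO(1)$ quantities of the form $|\beta^-[\mu',\tau']|$.

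For a fixed $A$, we build $\mu'\in M$ as in \rf{fig:EDAnticoncentration}. The elements of $A$ become singletons carrying $\tau'=0$. The block $B$ is replaced by $(B\setminus A)\cup D$, where $D\in\Delta_{\mu,\tau}$ consists of $|A|$ free singletons, and this block keeps the value $v$. All other blocks are kept, so $\mu'$ lies in the same orbit. Then $\sigma\mapsto\sigma$ embeds the restricted set into $\S[\mu',\tau']$. The knowledge condition transfers because $\supp(\sigma)\cap((B\setminus A)\cup D)=\supp(\sigma)\cap B$. Conversely, $(\mu,\tau,A,D)$ can be read off from $(\mu',\tau')$ and $\rho$ up to $\OO(1)$ choices: the block of $\mu'$ meeting $\rho$ is determined, and so is the set of elements of $\rho$ that became singletons. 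So $K'=\OO(1)$.

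The main obstacle is that $\S^-[\mu',\tau']$ is larger than the restricted set. It also contains $\sigma$ with $\sigma|_D\ne0$, which in $\mu$ would violate $\tau(\{d\})=0$. In Element Distinctness this was avoided because a $0$-valued pair with support $<2$ forces both coordinates to vanish. My first attempt at a fix is to choose the recycled block so that the same forcing happens. I would merge $D$ with the elements of $A$ rather than with $B\setminus A$, so that the new block carries value $0$. I would then split off the $A$-part of the value $v$ as in the $\tau_a,\tau_b$ case of \rf{thm:EDAntiConcentration}, and check that the support bound $<k$ together with sum $0$ pins down $\sigma$ on $D$. If this exact set identity cannot be arranged, the fallback is to work with the difference $\beta^-[\mu',\tau']-(\text{restricted sum})$. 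That difference is itself a sum over $\sigma$ with nonzero entries on fresh positions, and I would absorb it by a second round of the same alteration, averaging over the $\Omega(n)$ choices of $D$.
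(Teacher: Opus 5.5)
Your proposal breaks down exactly at the step you flag as the main obstacle, and neither remedy closes it.

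The first fix fails for two reasons. Merging $D$ with the elements of $A$ changes the multiset of block sizes, so $\mu'$ generally leaves the orbit $M$, and its $\beta^-$ is then not a term of $\|\Upsilon^-\phi\|^2$. More importantly, for $k\ge 3$ the conditions ``block sum $0$ and fewer than $k$ nonzero entries'' do not force $\sigma$ to vanish on the block: take entries $c$ and $-c$. The Element Distinctness argument relied on $k=2$, where support $<2$ means at most one nonzero entry.

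The fallback is not an argument either. The difference between $\beta^-[\mu',\tau']$ and your restricted sum collects $\alpha_\sigma$ over $\sigma$ that are nonzero on the fresh positions $D$. Such $\sigma$ do not lie in $\mathfrak{S}[\mu,\tau]$ at all, so nothing about $(\mu,\tau)$ controls them. Averaging over $D$ does not make them small. Re-applying the alteration only moves this uncontrolled mass to new fresh positions, with no reason for the process to close.

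What is missing is an \emph{exact} cancellation identity that treats all alterations simultaneously. Take $|D|=k$. For every $k$-subset $J\subseteq\rho\cup D$, let $\mu_J$ be $\mu$ with $B$ replaced by $(B\setminus\rho)\cup J$ (value $v$), and with the elements of $(\rho\cup D)\setminus J$ turned into zero-valued singletons. Your $\mu'$ is the case $J=(\rho\setminus A)\cup D'$ with $D'\subseteq D$ and $|D'|=|A|$.

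Every $\beta^-[\mu_J,\tau_J]$ is a partial sum $\sum_{C\subseteq J,\,|C|<k}\gamma[C]$ of a single family. Here $\gamma[C]$ sums $\alpha_\sigma$ over those $\sigma$ that satisfy the constraints of the untouched blocks, have sum $v$ and fewer than $k$ nonzero entries on $B\cup D$, and have $\supp(\sigma)\cap(\rho\cup D)=C$. Then
\[
\sum_{i=0}^{k}\frac{(-1)^i}{\binom ki}\sum_{J\subseteq\rho\cup D\colon |J|=k,\ |J\cap D|=i}\beta^-[\mu_J,\tau_J]=0 .
\]
This holds because the coefficient of $\gamma[C]$, with $\ell_1=|C\cap\rho|$ and $\ell_2=|C\cap D|$, is a multiple of $(1-1)^{k-\ell_1-\ell_2}=0$ whenever $\ell_1+\ell_2<k$.

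The $J=\rho$ term is $\beta^-[\mu,\tau]$. So QM--AM yields the local-alteration estimate against the $\OO(1)$ pairs with $J\ne\rho$, with $K=\Omega(n^k)$ and $K'=\OO(n^{k-1})$. Once this identity is available, your outer inclusion--exclusion over $A\subseteq\rho$ is unnecessary.

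Your bookkeeping $K=\Omega(n)$, $K'=\OO(1)$ is also off. For $A=\rho$, no non-singleton block of $\mu'$ meets $\rho$, so the recycled block has to be guessed among up to $\OO(n)$ blocks of size $|B|$. This is affordable only because the labels $D$ then number $\Omega(n^{|A|})$, not $\Omega(n)$.
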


Fix $\rho\in R$.
It is a subset of $[n]$ of size $k$.
We again assume that $\phi\in \cX_{\le t}$ is given by~\rf{eqn:EDAnti phi},
$\beta^-[\mu,\tau] = \sum_{\sigma\in \S^-[\mu,\tau]} \alpha_\sigma$ as in~\rf{eqn:beta}, and we aim to use \rf{lem:alterations}.
The set $\S^-[\mu,\tau]$ is easy to describe.
It consists of those $\sigma\in\q^n$ that, for each $B\in \mu$, satisfy
\begin{equation}
\label{eqn:anticoncentrationLocalCondition}
\sum_{i\in B} \sigma(i) = \tau(B)
\qqand
\absA|\supp(\sigma) \cap B| < k.
\end{equation}
A nice property of these conditions is that they are independent between different $B$.

The main idea of the proof is as follows.
For every $(\mu,\tau)\in P$ with $\mu\in M_\rho$ and for every $\sigma\in \S^{-}[\mu,\tau]$, at least one $\sigma(i)$ with $i\in\rho$ is zero.
It is tempting, as we did in \rf{sec:EDAnti}, to exchange $i$ with a singleton $\{c\}\in \mu$ satisfying $\tau(\{c\})=0$.
Unfortunately, just by exchanging one element, it is hard to prove~\rf{eqn:alterationsEstimate}, as the exchange affects many different $\sigma$ in complicated ways.
Luckily, using a variant of the inclusion-exclusion principle, it is still possible to apply \rf{lem:alterations}, which we will demonstrate now.

Fix $\mu\in M_\rho$ and $\tau\in \q^\mu$ with $|\tau|\le t$.
Let $A\in\mu$ be the block that is a superset of $\rho$, and $v = \tau(A)$.
We define $\Delta_{\mu,\tau}$ as consisting of subsets $D = \{d_1,\dots,d_k\}\subseteq [n]$ of size $k$ such that every $\{d_i\}$ is a singleton of $\mu$ satisfying $\tau(\{d_i\}) = 0$.

Fix $D = \{d_1,\dots,d_k\} \in \Delta_{\mu,\tau}$.
Let $A_0 = A\setminus \rho$, and define a partition $\mu_0$ of $[n]\setminus (A\cup D)$ and a function $\tau_0\in \q^{\mu_0}$ on it by
\[
\mu_0 = \mu\setminus\sfigA{A, \{d_1\},\dots,\{d_k\}}
\qqand
\tau_0 = \tau\setminus \{A\mapsto v\}.
\]
For each $J\subseteq \rho \cup D$ of size $k$, we define $A_J = A_0\cup J$, a partition $\mu_J\in M$, and a function $\tau_J \in \q^{\mu_J}$ on it by  (see \rf{fig:Anticoncentration})
\[
\mu_J = \mu_0\cup \{A_J\} \cup \sfigA{\{b\}\mid b \in (\rho\cup D)\setminus J}
\qqand
\tau_J = \tau_0\cup \{A_J\mapsto v\}.
\]
(Additionally to $\mu_0$, the partition $\mu_J$ contains a single block of size $|A_J|=|A|$, and $\absA|(\rho\cup D)\setminus J| = k$ singletons; hence, it is a permutation of $\mu$; hence, indeed belongs to $M$.)
In particular, $\mu_\rho = \mu$ and $\tau_\rho = \tau$.

\def\drawabcd{
    \drawelementA0{a_1}
    \drawelementA1{a_2}
    \drawelementA2{a_3}
    \drawelementA3{a_4}
    \drawelementA4{a_5}
    \drawelementA5{d_1}
    \drawelementA6{d_2}
    \drawelementA7{d_3}
    \drawelementA8{d_4}
    \drawelementA9{d_5}    
}

\myfigure
\label{fig:Anticoncentration}
=====
\[
\begin{tikzpicture}
\begin{scope}[shift={(0,0)}]
    \node[blue, anchor=east] at (-0.8, 0) {$(\mu,\tau) = (\mu_\rho, \tau_\rho)$};
    \drawabcd
    \drawblockA 04v
    \drawblockA 550
    \drawblockA 660
    \drawblockA 770
    \drawblockA 880
    \drawblockA 990
\end{scope}
\draw[very thick, double distance=2pt, gray, arrows={-Latex[length=2 2.5 0]}] (4.5*\nodegapA,-0.9) to (4.5*\nodegapA, -2);
\begin{scope}[shift={(0,-2.7)}]
    \node[blue, anchor=east] at (-0.8, 0) {$(\mu_J,\tau_J)$};
    \drawabcd
    \drawblockA 000
    \drawblockA 110
    \drawblockA 220
    \drawblockA 37v
    \drawblockA 880
    \drawblockA 990
\end{scope}
\end{tikzpicture}
\]
\negbigskip
-----
An illustration to the proof of \rf{thm:anticoncentration}.
Here $k=5$, and $A = \rho = \{a_1,\dots,a_5\}$ (so that $A_0 = \emptyset$), while $D = \{d_1,\dots,d_5\}$.
We express $\beta^-[\mu,\tau] = \beta^-[\mu_\rho,\tau_\rho]$ through identity~\rf{eqn:anticoncentration1} that involves a linear combination of $\beta^-[\mu_J, \tau_J]$ over $k$-subsets of $J\subseteq \rho\cup D$ different from $\rho$.
The pair $(\mu_J, \tau_J)$ for one such $J=\{a_4,a_5,d_1,d_2,d_3\}$ is depicted below. 
Again, the blocks of partitions and the values of the functions on them are not depicted outside of $A\cup D$ because they are identical in all partitions.
=====

We will later prove the following identity
\begin{equation}
\label{eqn:anticoncentration1}
\sum_{i=0}^k \frac{(-1)^i}{\binom ki} \sum_{J\subseteq \rho\cup D\colon |J|=k,\; |J\cap D| = i} \beta^-[\mu_J, \tau_J] = 0,
\end{equation}
but, first, let us show how it implies \rf{thm:anticoncentration} via \rf{lem:alterations}.
Using the QM-AM inequality, and that $\beta^-[\mu_\rho, \tau_\rho] = \beta^-[\mu,\tau]$, we arrive at
\begin{equation}
\label{eqn:anticoncentration2}
\absA|\beta^-[\mu,\tau]|^2 \le 
\OO \sC[{\sum_{J\subseteq \rho\cup D\colon |J|=k,\; J\ne \rho} \absA|\beta^-[\mu_J, \tau_J]|^2 }].
\end{equation}
This is a variant of~\rf{eqn:alterationsEstimate} with 
\[
N_{\mu,\tau,D} = \{(\mu_J,\tau_J) \mid J\subseteq \rho\cup D,\; |J|=k,\; J\ne \rho\}.
\]
Let us estimate $K$ and $K'$.
For the first one, using our assumption on $\mu$ and $|\tau|\le t$, we get that the number of singletons $\{d\}\in\mu$ satisfying $\tau(\{d\})=0$ is at least $cn/2 = \Omega(n)$, hence, $K = \Omega(n^k)$.
For the second one, let $(\mu',\tau')\in N(\mu,\tau,D)$ for some $(\mu,\tau)$.
Denote by $\ell$ the number of $a\in \rho$ such that $\{a\}$ is a singleton of $\mu'$.
We necessarily have $\ell\ge 1$.
There are two cases.
\begin{itemize}\noseps
\item If $\ell < k$, then $\ell$ elements of $D$ are contained in the same block of $\mu'$ as the remaining $k-\ell$ elements of $\rho$.
For each of the $k-\ell$ remaining elements of $D$, there are at most $n$ choices.
Hence, the number of choices for $D$ is $\OO (n^{k-\ell})$.
\item If $\ell = k$, then $D$ is contained in one block of $\mu'$, hence, the number of choices for $D$ is $\OO (n)$.
\end{itemize}
(In both statements, we used that the maximal size of a block in $\mu'$ is $\OO (1)$).
Knowing $D$, we can uniquely reconstruct $(\mu,\tau)$, hence, $K' = \OO (n^{k-1})$.
The theorem now follows from \rf{lem:alterations}.

It remains to prove~\rf{eqn:anticoncentration1}.
The set $\S^-[\mu_J,\tau_J]$ consists of all $\sigma\in\q^n$ such that
\begin{itemize}\itemsep=0pt
\item $\sum_{i\in B} \sigma(i) = \tau(B)$ and $|\supp(\sigma) \cap B|<k$ for all $B\in \mu_0$;
\item $\sum_{i\in B} \sigma(i) = v$ and $|\supp(\sigma) \cap B|<k$ for $B = A\cup D$; and
\item $\sigma(i) = 0$ for all $i\in (\rho\cup D)\setminus J$.
\end{itemize}
For $C\subseteq \rho\cup D$, let
$\gamma[C] = \sum_{\sigma} \alpha_\sigma$ where the sum is over those $\sigma$ that satisfy the first two bullets above and $\supp(\sigma)\cap (\rho\cup D) = C$.
By~\rf{eqn:beta} and the third bullet above:
\begin{equation}
\label{eqn:anticoncentration5}
\beta^-[\mu_J,\tau_J] = \sum_{C\subseteq J\colon |C|< k} \gamma[C].
\end{equation}
Take any $C\subseteq \rho\cup D$ with $|C|<k$.
Denote $\ell_1 = |\rho\cap C|$ and $\ell_2 = |D\cap C|$.
Replacing $\beta^-[\mu_J,\tau_J]$ in~\rf{eqn:anticoncentration1} with~\rf{eqn:anticoncentration5}, the coefficient of $\gamma[C]$ becomes (where in both binomial coefficients below, we choose elements that do \emph{not} belong to $J$):
\begin{align*}
\sum_{i=0}^k \frac{(-1)^i}{\binom ki}{\binom {k-\ell_1}i} {\binom {k-\ell_2}{k-i}}
&=
\sum_{i=\ell_2}^{k-\ell_1} (-1)^i\frac{i! (k-i)!}{k!} \frac{(k-\ell_1)!}{i!(k-\ell_1-i)!}\frac{(k-\ell_2)!}{(k-i)! (i-\ell_2)!}\\
&=\frac{(k-\ell_1)!(k-\ell_2)!}{k!(k-\ell_1-\ell_2)!} 
\sum_{i=\ell_2}^{k-\ell_1} (-1)^i
\frac{(k-\ell_1-\ell_2)!}{(k-\ell_1-i)!(i-\ell_2)!}\\
&=\frac{(k-\ell_1)!(k-\ell_2)!}{k!(k-\ell_1-\ell_2)!} \sA[1+(-1)]^{k-\ell_1-\ell_2} = 0
\end{align*}
since $\ell_1+\ell_2<k$.
This proves~\rf{eqn:anticoncentration1} and finishes the proof of \rf{thm:anticoncentration}.

\def\nodegap{0.7}
\def\rowgap{0.9}
\def\marginsize{0.25}

\def\drawelement#1#2{ % row, column
    \draw (#2*\nodegap, #1*\rowgap) node[draw, circle, inner sep=2] {};
}
\def\partitionblock#1#2#3{ %Internal function
    (#2*\nodegap-\marginsize, #1*\rowgap-\marginsize) rectangle (#3*\nodegap+\marginsize, #1*\rowgap+\marginsize);
}
\def\drawblock#1#2#3{ % row, column start, column end
    \draw\partitionblock{#1}{#2}{#3}
}
\def\fillblock#1#2#3{ % row, column start, column end
    \draw[fill=gray!50]\partitionblock{#1}{#2}{#3}
}
\def\drawbrace#1#2#3#4{
    \draw [decorate, decoration={brace,amplitude=5pt}] (#2*\nodegap-0.3, #1*\rowgap + 0.3) --node[above=4pt]{$#4$} (#3*\nodegap+0.3, #1*\rowgap +0.3);
}

\section{
\texorpdfstring
	{$k$-Distinctness Lower Bound: Query Gain}
	{k-Distinctness Lower Bound: Query Gain}
}
\label{sec:kDistQueryGain}

In this section, we finish the proof of the $k$-Distinctness lower bound by proving the query gain bound.
We will use several collections of types/partitions at once.
To distinguish between them, we will use subscripts, which also carry over to all collection-related notations, like $\cY$, $\Upsilon$, $\Upsilon^+$, and so on.

An important distinction with \SuperRef Sections\ref{sec:equalElements} and~\ref{sec:kDistAntiConcentration} is that our main type will be a \emph{highlighted partition} rather than an ordinary partition.
A highlighted partition $\mu$ is defined similarly to a partition from \rf{sec:settings} but exactly one of its blocks is highlighted.
The set $\q^\mu$ is exactly the same for highlighted and usual partitions.
This also does not change $\Upsilon_\mu$ nor $\S[\mu,\tau]$ for any $\tau\in \q^\mu$.
What is affected by highlighting is the knowledge system: $L_\mu^+$ consists of the supersets of the highlighted block.

\subsection{Highlighted Partitions and Our Collections}
\label{sec:kDistPartitions}

In this section, we define several collections of partitions: $M_k,\dots, M_1$ and $M_{\circ(k-1)},\dots, M_{\circ 1}$.
Moreover, the first ones will consist of highlighted partitions, a notion we introduce in \rf{defn:highlighted}.
The collection we care about the most is $M_k$ that directly corresponds to the $k$-Distinctness problem.
All other collections are auxiliary and their only raison d'être is estimation of the query gain of $M_k$.

Thus, our main collection is $M_k$ as in the \rf{defn:kDist} of the $k$-Distinctness problem.
More precisely, we will restrict it to an orbit of a single partition:
$M_k$ is the orbit (in the sense of \rf{defn:orbit}) of a partition $\mu_k$ that contains exactly one $k$-block, and $\Omega(n)$ $\ell$-blocks for all $\ell\in [k-1]$.
Following \rf{defn:findKEqual}, the set $R$ consists of all subsets of $[n]$ of size $k$, and for $\rho = \{a_1,\dots,a_k\}\in R$, the subset $M_\rho$ consists of all partitions from $M$ with the unique $k$-block equal to $\{a_1,\dots,a_k\}$.

The corresponding knowledge system $L^+_k$ is as in \rf{sec:equalElementsKnowledge}, which was also used for the anti-concentration \rf{thm:anticoncentration}.
For this particular collection, $L^+_{\mu}$ consists of all supersets of the only $k$-block of $\mu$.
(We do not add $k$ to the notation $L^+_\mu$ as every partition will belong to at most one collection, so there will be no risk of confusion.)
Knowledge systems for other collections will be organised similarly, as explained in the next definition.

\begin{defn}
[Highlighted partitions and knowledge systems]
\label{defn:highlighted}
A \emph{highlighted partition} is a partition $\mu$ that has exactly one block $B$ that has been \emph{highlighted}.
If $\mu$ is a highlighted partition, then the set $L^+_\mu$ consists of all subsets of $[n]$ that are simultaneously supersets of the highlighted block of $\mu$.
\end{defn}

More formally, we can define a highlighted partition as a pair $\mu = \sA[B_1, \{B_2,\dots,B_m\}]$ where $\{B_1,\dots,B_m\}$ is the corresponding partition in the usual sense of \rf{defn:partition}, and $B_1$ is the highlighted block.

\begin{defn}
[Unhighlighting]
\label{defn:unhighlight}
The \emph{unhighlighting} operation takes a highlighted partition and removes the highlighting, turning it into an ordinary partition in the sense of \rf{defn:partition}.
In the above formalisation, it acts as $\sA[B_1, \{B_2,\dots,B_m\}] \mapsto \{B_1,\dots,B_m\}$.
\end{defn}

The symmetric group in Point~1 of \rf{defn:orbit} acts on highlighted partitions in the obvious way: $\pi(\mu) = \sA[\pi(B_1), \sfigA{\pi(B_2),\dots,\pi(B_m)}]$.
The orbit of the highlighted partition $\mu$ is larger than the orbit of the corresponding unhighlighted partition (the ratio of their sizes being equal to the number of blocks with size $|B_1|$).

The collection $M_k$ follows the convention of \rf{defn:highlighted} if we highlight the only $k$-block of each partition.
For this collection, there is really no difference between the highlighted and the unhighlighted versions.
We construct other collections from $M_k$ using the following operation.

\begin{defn}[Splitting]
\label{defn:splitting}
Let $\mu_2$ be a highlighted partition, $B$ its highlighted block of size at least 2, and $i\in B$.
We say that a highlighted partition $\mu_1$ is obtained from $\mu_2$ by \emph{splitting off} the element $i$, if $\mu_1$ is obtained from $\mu_2$ by removing $B$, and adding two blocks $B\setminus \{i\}$ and $\{i\}$ instead, of which the first one is highlighted (see \rf{fig:splitting}).
In the above formalisation, it acts as 
$\sA[B, \{B_2,\dots,B_m\}] \mapsto \sA[B\setminus\{i\}, \sfigA{\{i\}, B_2,\dots,B_m}]$.
\end{defn}

The motivation behind this construction is the following result.
(Recall that $L^{\partial i}_\mu$ consists of those $S\in L^-_\mu$ that satisfy $S\cup\{i\}\in L^+_\mu$, see \rf{defn:queryGain}.)
\begin{clm}
\label{clm:splitting}
If $\mu_1$ is obtained from $\mu_2$ by splitting off $i$, then, for all $\phi\in \cX$:
\[
\|\Upsilon^{\partial i}_{\mu_2} \phi\| \le 
\|\Upsilon^{+}_{\mu_1} \phi\|.
\]
\end{clm}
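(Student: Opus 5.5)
The plan is to compute both sides through \rf{clm:UpsilonTriangle} and compare them coefficientwise. Write $\phi = \sum_\sigma \alpha_\sigma \ket X|\chit\sigma>$. Let $B$ be the highlighted block of $\mu_2$. Then
\[
\|\Upsilon^{\partial i}_{\mu_2}\phi\|^2 = \sum_{\tau\in\q^{\mu_2}} \absB|\sum_{\sigma\in\S^{\partial i}[\mu_2,\tau]}\alpha_\sigma|^2
\qqand
\|\Upsilon^{+}_{\mu_1}\phi\|^2 = \sum_{\tau'\in\q^{\mu_1}} \absB|\sum_{\sigma\in\S^{+}[\mu_1,\tau']}\alpha_\sigma|^2 ,
\]
by Parseval's identity~\rf{eqn:YParseval}.

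First, I identify the sets involved. By \rf{defn:highlighted}, $L^+_{\mu_2}$ consists of the supersets of $B$. Hence $S\in L^{\partial i}_{\mu_2}$ exactly when $B\not\subseteq S$ and $B\subseteq S\cup\{i\}$, that is, when $B\setminus\{i\}\subseteq S$ and $i\notin S$. In other words, $\sigma\in\S^{\partial i}[\mu_2,\tau]$ if and only if three things hold: $\sigma(i)=0$, the support of $\sigma$ contains $B\setminus\{i\}$, and the block sums of $\sigma$ match $\tau$. Likewise, $\sigma\in\S^+[\mu_1,\tau']$ if and only if the support of $\sigma$ contains $B\setminus\{i\}$ and the block sums of $\sigma$ over $\mu_1$ match $\tau'$.

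Next, I define an injection $\tau\mapsto\tau'$ from $\q^{\mu_2}$ to $\q^{\mu_1}$. It keeps every block other than $B$ unchanged, sets $\tau'(B\setminus\{i\})=\tau(B)$, and sets $\tau'(\{i\})=0$. I claim that $\S^{\partial i}[\mu_2,\tau]=\S^+[\mu_1,\tau']$.

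For the forward inclusion, take $\sigma$ in the left set. Since $\sigma(i)=0$, we get $\sum_{j\in B\setminus\{i\}}\sigma(j)=\tau(B)$ and the sum over $\{i\}$ is $0$, so $\sigma$ lies in the right set. For the reverse inclusion, take $\sigma$ in the right set. Then $\sigma(i)=\tau'(\{i\})=0$, and the sum over $B$ equals $\tau(B)$, so $\sigma$ lies in the left set.

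It follows that each term of $\|\Upsilon^{\partial i}_{\mu_2}\phi\|^2$ equals a distinct term of $\|\Upsilon^+_{\mu_1}\phi\|^2$. The remaining terms of the latter, those with $\tau'(\{i\})\ne0$, are nonnegative, and dropping them gives the inequality.

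The main point needing care is the identity $\S^{\partial i}[\mu_2,\tau]=\S^+[\mu_1,\tau']$, in particular checking that $\sigma(i)=0$ is forced on both sides. The rest is bookkeeping.
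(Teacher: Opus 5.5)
Your proposal is correct and matches the paper's proof. Both use the same injection $\tau\mapsto\tau'$, which replaces $B\mapsto\tau(B)$ by $(B\setminus\{i\})\mapsto\tau(B)$ and leaves $\{i\}$ with value $0$. Both prove the same set identity $\mathfrak{S}^{\partial i}[\mu_2,\tau]=\mathfrak{S}^{+}[\mu_1,\tau']$ and then conclude via Parseval, dropping the nonnegative terms with $\tau'(\{i\})\neq 0$.
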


\begin{proof}
Let $\phi = \sum_{\sigma\in \q^n} \alpha_\sigma \ket X|\chit \sigma>$ with $\alpha_\sigma\in\bC$, and $B$ be the highlighted block of $\mu_2$.
For $\tau_2\in \q^{\mu_2}$, we define $\tau_1\in \q^{\mu_1}$ by (see also \rf{fig:splitting}):
\[
\tau_1 = \sA[\tau_2 \setminus\sfigA{ B\mapsto v }]\cup \sfigA{(B\setminus\{i\}) \mapsto v},
\]
where $v=\tau_2(B)$.
Note that $\sigma\in \S[{\mu_1,\tau_1}]$ if and only if $\sigma\in \S[{\mu_2,\tau_2}]$ and $\sigma(i)=0$.
Hence, $\S^+[{\mu_1,\tau_1}] = \S^{\partial i}[{\mu_2,\tau_2}]$.
The mapping $\tau_2\mapsto\tau_1$ is injective, thus, using \rf{clm:UpsilonTriangle} and Parseval's identity~\rf{eqn:YParseval}:
\[
\|\Upsilon^{\partial i}_{\mu_2} \phi\|^2
=
\sum_{\tau_2\in \q^{\mu_2}} \absC|\sum_{\sigma\in \S^{\partial i}[{\mu_2,\tau_2}]} \alpha_\sigma |^2
\le
\sum_{\tau_1\in \q^{\mu_1}} \absC|\sum_{\sigma\in \S^+[{\mu_1,\tau_1}]} \alpha_\sigma |^2
=
\|\Upsilon^{+}_{\mu_1} \phi\|^2.
\qedhere
\]
\end{proof}

\def\drawabcd{
    \drawelementA0{b_1}
    \drawelementA1{b_2}
    \drawelementA2{b_3}
    \drawelementA3{b_4}
    \drawelementA4{i}
}
\def\nodegapA{1.1}

\myfigure
\label{fig:splitting}
=====
\[
\begin{tikzpicture}
\begin{scope}[shift={(0,0)}]
    \node[blue, anchor=east] at (-0.8, 0) {$(\mu_2,\tau_2)$};
    \drawblockF 04v
    \drawabcd
\end{scope}
\draw[very thick, double distance=2pt, gray, arrows={-Latex[length=2 2.5 0]}] (2*\nodegapA,-0.9) to (2*\nodegapA, -2);
\begin{scope}[shift={(0,-2.7)}]
    \node[blue, anchor=east] at (-0.8, 0) {$(\mu_1,\tau_1)$};
    \drawblockF 03v
    \drawblockA 440
    \drawabcd
\end{scope}
\end{tikzpicture}
\]
\negbigskip
-----
A transformation from the pair $(\mu_2,\tau_2)$ to $(\mu_1,\tau_1)$ from the proof of \rf{clm:splitting}.
The highlighted blocks are filled in grey.
Here, $B = \{b_1,b_2,b_3,b_4,i\}$.
The set $\S^{\partial i}[{\mu_2,\tau_2}]$ is equal to the set $\S^+[{\mu_1,\tau_1}]$.
=====

Starting from $M_k$, we define two hierarchies of collections $M_k,\dots, M_1$ and $M_{\circ(k-1)},\dots, M_{\circ 1}$.
The former consists of highlighted partitions, and the latter of usual unhighlighted partitions.
(See \SuperRef Figures\ref{fig:ED} and~\ref{fig:3Dist} for schematic depictions of these collections for the special cases of $k=2$ and $k=3$.)
\begin{itemize}
\item The collections $M_{k-1},\dots, M_1$ are defined by the following inductive construction.
If $M_\ell$ is defined and $\ell>1$, we define $M_{\ell-1}$ as the orbit of $\mu_{\ell-1}$, where $\mu_{\ell-1}$ is obtained from $\mu_\ell\in M_\ell$ by splitting off one of the elements of its highlighted block.
In other words, $\mu_{\ell}$ is obtained from $\mu_k$ by removing the highlighted $k$-block of $\mu_k$ and replacing it with one highlighted $\ell$-block and $k-\ell$ singletons.
We will use knowledge of $M_\ell$ to estimate the query gain of $M_{\ell+1}$, see \rf{cor:querySpecial} below.

\item
For $\ell\in[k-1]$, we define $M_{\circ\ell}$ as the orbit of $\mu_{\circ\ell}$ that is obtained from $\mu_\ell\in M_\ell$ by unhighlighting.
It is an ordinary partition, and it does not have an associated knowledge system.
We will make use of it via its norm, see~\rf{eqn:UpsilonNorms}.
\end{itemize}

\begin{clm}
\label{clm:setSizes}
For every $\ell\in [k-1]$, we have $|M_{\ell+1}| \ge \Omega(n) |M_{\ell}|$ and $|M_{\ell}| \ge \Omega(n) |M_{\circ\ell}|$.
\end{clm}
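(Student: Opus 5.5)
We will prove both inequalities by double counting, using that each collection is an orbit under $S_n$. Then the size of a collection equals the number of ways of placing its blocks in $[n]$, i.e.\ $n!$ divided by the order of the stabiliser of a representative. Recall the structure of the representatives. The partition $\mu_\ell$ consists of the blocks of $\mu_k$ other than the $k$-block, one highlighted $\ell$-block, and $k-\ell$ additional singletons. By assumption, $\mu_k$ has $m_j=\Omega(n)$ blocks of size $j$ for every $j\in[k-1]$, and $k=\OO(1)$.

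For the first inequality, we use a bipartite incidence between $M_{\ell+1}$ and $M_\ell$. Connect $\nu\in M_{\ell+1}$ to $\nu'\in M_\ell$ if $\nu'$ is obtained from $\nu$ by splitting off some element of the highlighted block, as in \rf{defn:splitting}. Each $\nu\in M_{\ell+1}$ has exactly $\ell+1$ neighbours, one for each element of its highlighted $(\ell+1)$-block. All of them lie in $M_\ell$, since splitting commutes with the $S_n$-action.

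Conversely, a neighbour $\nu$ of a given $\nu'\in M_\ell$ is recovered by choosing a singleton $\{i\}$ of $\nu'$, merging it into the highlighted $\ell$-block, and keeping the highlight. The number of singletons of $\nu'$ is $m_1+(k-\ell)$, which is $\OO(n)$. Merging with an arbitrary singleton does give an element of $M_{\ell+1}$, because all singletons are interchangeable under the stabiliser structure, so the result is in the orbit. Hence each $\nu'$ has $\Theta(n)$ neighbours; only the upper bound $\OO(n)$ is needed. Counting edges gives
\[
(\ell+1)|M_{\ell+1}| \le \OO(n)\,|M_\ell|.
\]
This is the wrong direction, so I will instead count via stabilisers. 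Passing from $\mu_{\ell+1}$ to $\mu_\ell$ changes the multiset of block sizes as follows. One singleton is added, changing the factor $(m_1+k-\ell-1)!$ to $(m_1+k-\ell)!$. The highlighted block shrinks from size $\ell+1$ to size $\ell$, changing the internal factor $(\ell+1)!$ to $\ell!$. The highlighted block is distinguished, so it does not enter the counts $m_j$. Therefore
\[
\frac{|M_{\ell+1}|}{|M_\ell|}=\frac{(m_1+k-\ell)\,\ell!}{(\ell+1)!}=\frac{m_1+k-\ell}{\ell+1}=\Omega(n).
\]
Equivalently, in the edge count each $\nu'$ has exactly $m_1+k-\ell$ neighbours, which yields the same identity.

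For the second inequality, unhighlighting maps $M_\ell$ onto $M_{\circ\ell}$. The fibre over $\nu\in M_{\circ\ell}$ consists of the choices of which $\ell$-block to highlight. When $\ell\ge2$, the number of $\ell$-blocks of $\nu$ is $m_\ell+1=\Omega(n)$. When $\ell=1$, it is the number of singletons, $m_1+k$, which is also $\Omega(n)$. Thus $|M_\ell| = \Omega(n)\,|M_{\circ\ell}|$.

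The main obstacle is correctly tracking the stabiliser factors. The step from $\ell$ to $\ell-1$ adds a singleton, so the singleton count changes and must stay $\Omega(n)$. The edge case $\ell=1$ requires care, because there the highlighted block is itself a singleton and competes with the ordinary singletons. All of this is handled because every $m_j=\Omega(n)$ and $k=\OO(1)$.
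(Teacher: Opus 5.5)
Your proof is correct and is essentially the paper's argument. The paper double counts the same splitting correspondence, restricted to splits of the fixed element $1$, which is your edge identity $(\ell+1)|M_{\ell+1}|=(m_1+k-\ell)|M_\ell|$ divided by $n$. It treats $M_{\circ\ell}$ by the same count of which $\ell$-block to highlight. The stabiliser detour was unnecessary: your edge count already finishes the job once you use the \emph{lower} bound $m_1+k-\ell=\Omega(n)$ on the number of neighbours of $\nu'$, not the upper bound you mistakenly said was all that was needed.
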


\begin{proof}
Let us start with the first statement.
Consider the following partially defined map from $M_{\ell+1}$ to $M_\ell$.
If $\mu_{\ell+1}\in M_{\ell+1}$ contains 1 in the highlighted block, split it off and obtain a highlighted partition $\mu_\ell\in M_\ell$.
This operation is reversible, hence, it is a bijection between the subset of $\mu_{\ell+1}\in M_{\ell+1}$ containing 1 in the highlighted block and the subset of $\mu_\ell\in M_\ell$ where $\{1\}$ is a non-highlighted singleton.
This shows that these two subsets are of equal size.
The probability that 1 is in the highlighted block of a uniformly random $\mu_{\ell+1}\in M_{\ell+1}$ is $\OO (1/n)$, and the probability that $\{1\}$ is a non-highlighted singleton of a uniformly random $\mu_\ell\in M_{\ell}$ is $\Omega(1)$.
This proves that $|M_{\ell+1}| \ge \Omega(n) |M_{\ell}|$.

To prove the second statement, note that each $\mu_\circ\in M_{\circ\ell}$ gives rise to $\Omega(n)$ highlighted partitions in $M_{\ell}$ by highlighting different $\ell$-blocks of $\mu_{\circ}$.
\end{proof}

As in~\rf{eqn:pseudoKnowledgeOperator}, for each $\ell\in \{1\dots, k\}\cup\{\circ 1,\dots,\circ (k-1)\}$, we define
\begin{equation}
\label{eqn:Upsilon ell}
\Upsilon_\ell^\bigtriangleup\colon \cX\to\cY_\ell\colon\quad
\phi \mapsto \frac{1}{\sqrt{|M_\ell|}} \bigoplus_{\mu\in M_\ell} \Upsilon_\mu^\bigtriangleup \phi,
\end{equation}
where $\bigtriangleup$ is either empty (works for all $\ell$) or stands for $+$ or $\partial i$ (only works for highlighted collections).
We also define for $\ell\in[k]$:
\begin{equation}
\label{eqn:Psi ell}
\Psi^{\partial}_\ell \colon \cX\otimes \cI \to \cY_\ell\otimes \cI
\colon\quad \ket X|\phi> \ket I|i> \longmapsto 
\ketA X|\Upsilon^{\partial i}_\ell \phi> \ket I|i>
\end{equation}
as in \rf{defn:queryGain}.

Let us make one final remark about highlighting.
Since highlighting does not change the partition as such, if a highlighted partition $\mu_{\bullet}$ is obtained from $\mu_\circ$ by highlighting one of its blocks, then $\cY_{\mu_\bullet}$ and $\cY_{\mu_\circ}$ are canonically isomorphic, and, under this isomorphism, $\Upsilon_{\mu_\bullet} = \Upsilon_{\mu_\circ}$.  
In particular, $\|\Upsilon_{\mu_\bullet} \phi\| = \|\Upsilon_{\mu_\circ} \phi\|$ for every $\phi\in \cX$.
The difference between $\mu_\bullet$ and $\mu_\circ$ manifests itself in two aspects.
The first one is knowledge, of which the latter is bereft. 
The second one is the normalisation factor in front of the direct sum in~\rf{eqn:Upsilon ell}.
If $M_\bullet$ and $M_\circ$ are the orbits of $\mu_\bullet$ and $\mu_\circ$, respectively, then $|M_\bullet|$ will be larger than $|M_\circ|$ and the coefficient will be smaller, see \rf{clm:setSizes}.

\subsection{Two Lemmata}

In this section, we prove two lemmata that hold for collections that are orbits of a highlighted partition, and which are crucial in our $k$-Distinctness lower bound proof.

\mycutecommand{\H}{\mathrm{2}}
\mycutecommand{\L}{\mathrm{1}}

\begin{lem}
\label{lem:query}
Assume $M_{\H}$ and $M_\L$ are collections that are orbits of highlighted partitions $\mu_\H$
and $\mu_\L$, respectively.
Moreover, $\mu_\L$ is obtained from $\mu_\H$ by splitting off one element of its highlighted block.
Then, for every state $\psi\in \cX\otimes \cA$, we have
\[
\|\Psi^\partial_\H \psi \| \le \sqrt{\frac{|M_\L|}{|M_\H|}} \|\Upsilon^+_\L \psi \|.
\]
\end{lem}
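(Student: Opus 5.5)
Expand the norm of $\Psi^\partial_\H\psi$ over the index register and over the partitions, bound each term by \rf{clm:splitting}, and then count via a double counting argument. Write $\psi = \sum_i \ket I|i>\otimes \psi_i$ with $\psi_i\in\cX\otimes\cC\otimes\cW$. By \rf{eqn:Psi ell} and \rf{eqn:Upsilon ell},
\[
\|\Psi^\partial_\H\psi\|^2 = \sum_{i\in[n]} \|\Upsilon^{\partial i}_\H \psi_i\|^2 = \frac{1}{|M_\H|}\sum_{i\in[n]}\sum_{\mu\in M_\H} \|\Upsilon^{\partial i}_\mu \psi_i\|^2 ,
\]
where the operators act as identity on the remaining registers. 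The statement of \rf{clm:splitting} is for $\phi\in\cX$, but it extends immediately to $\cX\otimes\cA'$ by expanding in an orthonormal basis of $\cA'$ and summing squares, as in \rf{clm:strongAnticoncentration}.

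Next, we classify the pairs $(i,\mu)$. If $i$ is not in the highlighted block $B$ of $\mu$, then $L^{\partial i}_\mu=\emptyset$, because adding $i$ cannot make a set a superset of $B$ unless it already was one. Hence those terms vanish. If $i\in B$, let $s(\mu,i)$ be the highlighted partition obtained by splitting off $i$. Since $M_\L$ is the orbit of $\mu_\L$ and $\mu_\L$ is a split of $\mu_\H$, permutation-equivariance of splitting gives $s(\mu,i)\in M_\L$. By \rf{clm:splitting}, $\|\Upsilon^{\partial i}_\mu\psi_i\|^2\le \|\Upsilon^+_{s(\mu,i)}\psi_i\|^2$.

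The key counting step is to show that the map $(\mu,i)\mapsto (s(\mu,i),i)$ is injective. Given $\mu_1=s(\mu,i)$ and $i$, we recover $\mu$ by merging the singleton $\{i\}$ into the highlighted block of $\mu_1$ and highlighting the union. Thus
\[
\|\Psi^\partial_\H\psi\|^2 \le \frac{1}{|M_\H|}\sum_{i}\sum_{\mu_1\in M_\L}\|\Upsilon^+_{\mu_1}\psi_i\|^2 = \frac{|M_\L|}{|M_\H|}\sum_i \|\Upsilon^+_\L\psi_i\|^2 = \frac{|M_\L|}{|M_\H|}\|\Upsilon^+_\L\psi\|^2 .
\]
The last equality holds because $\Upsilon^+_\L$ acts only on $\cX$, so it commutes with the decomposition along $\cI$ and the pieces remain orthogonal. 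Taking square roots gives the lemma.

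The main point needing care is this injectivity/reconstruction step, together with making sure the sums over $i$ do not collide. The latter is automatic here because $i$ is kept as a label, so the direct sum over $\cI$ keeps the terms for different $i$ orthogonal. No factor equal to the block size appears, which is why the bound has exactly the ratio $|M_\L|/|M_\H|$.
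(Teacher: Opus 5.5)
Your proof is correct and follows the paper's argument essentially step by step. Both kill the terms with $i$ outside the highlighted block, bound the remaining ones by the splitting claim using injectivity of $\mu\mapsto s(\mu,i)$ for fixed $i$, and then sum over the decomposition of $\psi$ along the index register. The only cosmetic difference is that you decompose over $\cI$ alone and extend the splitting claim to $\cX\otimes\cA'$, whereas the paper expands over an orthonormal basis of $\cI\otimes\cC\otimes\cW$ and applies the claim to each $\phi_{i,c,w}\in\cX$.
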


Returning briefly to the settings of \rf{sec:kDistPartitions}, we can combine this with \rf{clm:setSizes} to obtain the following corollary.
\begin{cor}
\label{cor:querySpecial}
For collections $M_\ell$ defined in \rf{sec:kDistPartitions}, every $\ell\in[k-1]$ and $\psi\in \cX\otimes \cA$:
\[
\|\Psi^\partial_{\ell+1} \psi \| 
\le 
\OO \s[\frac{1}{\sqrt n}]
\|\Upsilon^+_\ell \psi \|.
\]
\end{cor}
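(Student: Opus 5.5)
The corollary is a direct combination of \rf{lem:query} with \rf{clm:setSizes}, so the plan is to check the hypotheses of the lemma for the pair $(M_{\ell+1}, M_\ell)$ and then bound the size ratio.

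First, take $M_\H = M_{\ell+1}$ and $M_\L = M_\ell$ for a fixed $\ell\in[k-1]$. By the construction in \rf{sec:kDistPartitions}, $M_{\ell+1}$ is the orbit (in the sense of \rf{defn:orbit}) of a highlighted partition $\mu_{\ell+1}$. Likewise, $M_\ell$ is the orbit of $\mu_\ell$, which is obtained from $\mu_{\ell+1}$ by splitting off one element of its highlighted block. That block has size $\ell+1\ge 2$, so \rf{defn:splitting} applies. One small point to verify is that the orbit does not depend on which element is split off, or on which representative $\mu_{\ell+1}$ is chosen. This holds because any two such choices differ by a permutation in $S_n$, so their orbits coincide. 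Hence the hypotheses of \rf{lem:query} hold, and for every $\psi\in\cX\otimes\cA$ we get
\[
\|\Psi^\partial_{\ell+1}\psi\| \le \sqrt{\frac{|M_\ell|}{|M_{\ell+1}|}}\,\|\Upsilon^+_\ell\psi\|.
\]

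Second, \rf{clm:setSizes} gives $|M_{\ell+1}|\ge \Omega(n)|M_\ell|$, that is, $|M_\ell|/|M_{\ell+1}| = \OO(1/n)$. Taking square roots gives the factor $\OO(1/\sqrt n)$ and hence the stated bound. Since $k=\OO(1)$, the constants hidden in \rf{clm:setSizes} are uniform over $\ell\in[k-1]$.

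The argument has no real obstacle. The only delicate point is matching the definitions: the normalisation $1/\sqrt{|M_\ell|}$ in~\rf{eqn:Upsilon ell} must be the uniform weight $\sqrt{p_\mu}$ that \rf{lem:query} assumes for orbits, and the operators $\Psi^\partial_{\ell+1}$ and $\Upsilon^+_\ell$ must be built from the knowledge systems of the highlighted partitions. Both hold by construction.
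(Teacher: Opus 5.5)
Your proposal is correct and is exactly the paper's argument. Apply \rf{lem:query} to the pair $(M_{\ell+1}, M_\ell)$, whose orbit and splitting hypotheses hold by the construction in \rf{sec:kDistPartitions}, and then use $|M_{\ell+1}|\ge\Omega(n)|M_\ell|$ from \rf{clm:setSizes} to turn $\sqrt{|M_\ell|/|M_{\ell+1}|}$ into $\OO(1/\sqrt n)$.
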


\begin{proof}[Proof of \rf{lem:query}]
We argue upwards, starting with $\Upsilon^{\partial i}_\H$ acting on $\cX$ and ending with $\Psi^\partial_\H$ acting on $\cX\otimes \cA$.
Fix $i\in [n]$, and let $\phi = \sum_{\sigma\in \q^n} \alpha_\sigma \ket X|\chit \sigma>$ be an arbitrary vector in $\cX$ with $\alpha_\sigma\in\bC$.

\begin{itemize}\noseps
\item 
For each $\mu_\H\in M_\H$ that does not contain $i$ in the highlighted block, we have 
$L^{\partial i}_{\mu_\H} = \emptyset$, hence, $\Upsilon^{\partial i}_{\mu_\H} \phi = 0$.

\item
For each $\mu_\H\in M_\H$ that contains $i$ in the highlighted block, we construct $\mu_\L\in M_\L$ by splitting off $i$.
By \rf{clm:splitting}, $\|\Upsilon_{\mu_\H}^{\partial i} \phi\| \le \|\Upsilon_{\mu_\L}^+ \phi\|$.
Moreover, the mapping $\mu_\H\mapsto\mu_\L$ is injective.
\end{itemize}
Using this, we obtain via~\rf{eqn:Upsilon ell}:
\begin{equation}
\label{eqn:queryProof1}
\|\Upsilon^{\partial i}_\H \phi\|^2 = 
\frac1{|M_\H|} \sum_{\mu_\H \in M_\H} \|\Upsilon^{\partial i}_{\mu_\H} \phi \|^2
\le
\frac1{|M_\H|} \sum_{\mu_\L \in M_\L} \|\Upsilon^{+}_{\mu_\L} \phi \|^2
=
\frac{|M_\L|}{|M_\H|} \| \Upsilon^{+}_\L \phi \|^2.
\end{equation}
Now we can finish the proof.  Write
\[
\psi = \sum_{i,c,w} \ket X|\phi_{i,c,w}> \ket I |i>\ket C|c>\ket W|w>,
\]
where $w$ ranges over an arbitrary orthonormal basis of $\cW$, and $\phi_{i,c,w}$ are some non-normalised vectors.
By the definition~\rf{eqn:Psi ell} of $\Psi^\partial_\H$ and~\rf{eqn:queryProof1}, we get
\[
\normA|\Psi^\partial_\H \psi|^2
=
\sum_{i,c,w} \normA|\Upsilon^{\partial i}_\H \phi_{i,c,w}|^2
\le
\frac{|M_\L|}{|M_\H|}\sum_{i,c,w} \normA|\Upsilon^{+}_\L \phi_{i,c,w}|^2
=
\frac{|M_\L|}{|M_\H|}\normA|\Upsilon^+_\L \psi|^2.
\qedhere
\]
\end{proof}

To estimate $\|\Upsilon^+_1\psi\|$, which is at the bottom of our hierarchy of \rf{sec:kDistPartitions}, we use the following result.

\begin{lem}
\label{lem:Upsilon1Norm}
Assume $M_\circ$ and $M_\bullet$ are collections that are orbits of partitions $\mu_\circ$
and $\mu_\bullet$, respectively, where 
$\mu_\circ$ is a partition having $\Omega(n)$ singletons, and $\mu_\bullet$ is the same partition with one of the singletons highlighted.
Then, for every $\phi\in \cX_{\le t}$:
\[
\|\Upsilon^+_\bullet \phi\| \le \OO \s[\sqrt{\frac{t}{n}}] \|\Upsilon_\circ \phi\|.
\]
\end{lem}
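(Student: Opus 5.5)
The plan is to compute both sides directly in the Fourier basis of $\cY_\mu$ via \rf{clm:UpsilonTriangle} and Parseval's identity~\rf{eqn:YParseval}, and to compare them partition by partition. Each highlighted partition $\mu_\bullet\in M_\bullet$ is obtained from some $\mu_\circ\in M_\circ$ by highlighting one of its singletons $\{j\}$. Conversely, each $\mu_\circ$ gives rise to exactly $s$ highlighted partitions in $M_\bullet$, where $s=\Omega(n)$ is the number of singletons of $\mu_\circ$. Hence $|M_\bullet| = s|M_\circ|$, which mirrors the second part of \rf{clm:setSizes}.

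First I will fix $\mu_\bullet$ with highlighted singleton $\{j\}$, and let $\mu_\circ$ be its unhighlighting. Then $L^+_{\mu_\bullet}$ consists of the sets containing $j$, so $\supp(\sigma)\in L^+_{\mu_\bullet}$ iff $\sigma(j)\ne 0$. Since $\{j\}$ is a block, every $\sigma\in\S[\mu,\tau]$ satisfies $\sigma(j)=\tau(\{j\})$. Therefore $\S^+[\mu_\bullet,\tau]$ equals $\S[\mu,\tau]$ if $\tau(\{j\})\ne0$, and is empty otherwise. Writing $\phi=\sum_\sigma\alpha_\sigma\ket X|\chit\sigma>$ and $\beta[\mu,\tau]=\sum_{\sigma\in\S[\mu,\tau]}\alpha_\sigma$, this gives
\[
\|\Upsilon^+_{\mu_\bullet}\phi\|^2=\sum_{\tau\in\q^{\mu}\colon \tau(\{j\})\ne 0}|\beta[\mu,\tau]|^2,
\qquad
\|\Upsilon_{\mu_\circ}\phi\|^2=\sum_{\tau\in\q^{\mu}}|\beta[\mu,\tau]|^2 .
\]

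Next I will sum over the $s$ highlightings of a fixed $\mu_\circ$. Each $\tau$ is then counted once for every singleton $\{j\}$ with $\tau(\{j\})\neq 0$, i.e.\ at most $|\tau|$ times. Because $\phi\in\cX_{\le t}$ and every $\sigma\in\S[\mu,\tau]$ has $|\sigma|\ge|\tau|$, we have $\beta[\mu,\tau]=0$ whenever $|\tau|>t$. So the multiplicity of any $\tau$ with $\beta[\mu,\tau]\ne0$ is at most $t$, and
\[
\sum_{\mu_\bullet \text{ over } \mu_\circ}\|\Upsilon^+_{\mu_\bullet}\phi\|^2\le t\,\|\Upsilon_{\mu_\circ}\phi\|^2 .
\]

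Finally I will sum over $\mu_\circ\in M_\circ$ and apply the normalisations from~\rf{eqn:Upsilon ell}:
\[
\|\Upsilon^+_\bullet\phi\|^2\le \frac{t}{|M_\bullet|}\sum_{\mu_\circ\in M_\circ}\|\Upsilon_{\mu_\circ}\phi\|^2=\frac{t\,|M_\circ|}{|M_\bullet|}\,\|\Upsilon_\circ\phi\|^2=\frac{t}{s}\|\Upsilon_\circ\phi\|^2 .
\]
Since $s=\Omega(n)$, the right-hand side is $\OO(t/n)\,\|\Upsilon_\circ\phi\|^2$, and taking square roots gives the lemma. The only delicate point is the observation that highlighting a singleton makes the knowledge condition depend on $\tau$ alone, namely on $\tau(\{j\})\neq0$. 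Once that is in place, the rest is counting.
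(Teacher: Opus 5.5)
Your proof is correct and is essentially the paper's argument. For a fixed $\mu_\circ$ you sum $\|\Upsilon^+_{\mu_\bullet}\phi\|^2$ over its $s=\Omega(n)$ highlightings, reduce the knowledge condition to $\tau(\{j\})\neq 0$, bound the multiplicity of each $\tau$ with nonzero coefficient by $|\tau|\le t$ (giving $\sum_{\mu_\bullet}\|\Upsilon^+_{\mu_\bullet}\phi\|^2\le t\,\|\Upsilon_{\mu_\circ}\phi\|^2$), and finish with the normalisation $|M_\bullet|=s|M_\circ|$, exactly as the paper does.
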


\begin{proof}
Fix arbitrary $\phi = \sum_{\sigma\in \q^n} \alpha_\sigma \ket X|\chit \sigma> \in \cX_{\le t}$ with $\alpha_\sigma\in\bC$ satisfying $\alpha_\sigma = 0$ if $|\sigma|>t$.
Denote by $S(\mu_{\circ})$ the set of singletons of $\mu_{\circ}\in M_{\circ}$, and let $|S| = |S(\mu_{\circ})| = \Omega(n)$ stand for their common size.
For every $\mu_{\circ}\in M_{}$, there are exactly $|S|$ highlighted partitions $\mu_\bullet\in M_\bullet$ that are obtained from $\mu_{\circ}$ by highlighting one of its singletons.
Denote the latter set by $H(\mu_{\circ})$.
Also, $|M_\bullet| = |S|\cdot|M_{\circ}|$.

Fix temporarily $\mu_{\circ}\in M_{\circ}$ and $\mu_\bullet \in H(\mu_\circ)$.
For $\tau \in \q^{\mu_{\circ}} = \q^{\mu_{\bullet}}$, let 
\[
\beta_{\tau} 
=\sum_{\sigma\in \S[{\mu_{\circ},\tau}]} \alpha_\sigma
=\sum_{\sigma\in \S[{\mu_{\bullet},\tau}]} \alpha_\sigma.
\]
For a highlighted partition $\mu_\bullet \in M_\bullet$ with the highlighted block $\{a\}$, we have 
\[
\S^+[\mu_\bullet, \tau] = 
\begin{cases}
\S[\mu_\bullet, \tau], &\text{if $\tau\sA[\{a\}]\ne 0$};\\
\emptyset, & \text{otherwise}.
\end{cases}
\]
Hence, using \rf{clm:UpsilonTriangle}:
\[
\sum_{\mu_\bullet\in H(\mu_{\circ})} \!\!\!\!\!\|\Upsilon_{\mu_\bullet}^+\phi\|^2
=
\!\!\!\!\sum_{\mu_\bullet\in H(\mu_{\circ})} \sum_{\tau} \absC|\sum_{\sigma\in \S^+[\mu_\bullet, \tau]} \!\!\!\!\alpha_\sigma|^2
=
\!\!\!\!\sum_{a\in S(\mu_{\circ})}\; \sum_{\tau\colon \tau(\{a\})\ne 0} \!\!\!\!\! |\beta_\tau|^2
\le
t \sum_{\tau} |\beta_\tau|^2
= t \|\Upsilon_{\mu_{\circ}} \phi\|^2,
\]
where, in the only inequality, we used that each $\tau$ with non-zero $\beta_\tau$ has support size at most $t$, hence, satisfies $\tau(\{a\})\ne 0$ for at most $t$ values of $a$.

Summing over all $\mu_{\circ}\in M_{\circ}$, we get
\[
\|\Upsilon_\bullet^+ \phi\|^2
= \frac{1}{|M_{\bullet}|} \sum_{\mu_\bullet \in M_\bullet} \|\Upsilon_{\mu_\bullet}^+\phi\|^2
\le
\frac{t}{|M_{\bullet}|} \sum_{\mu_\circ \in M_\circ} \|\Upsilon_{\mu_\circ}\phi\|^2
= \frac{t}{|S|} \|\Upsilon_{\circ}\phi\|^2
\le \OO \s[\frac{t}{n}] \|\Upsilon_{\circ} \phi\|^2.
\qedhere
\]
\end{proof}

\subsection{Element Distinctness}
\label{sec:ED}
Now we are able to prove a tight lower bound for Element Distinctness, which is the special case of $k$-Distinctness for $k=2$.
Following \rf{sec:kDistPartitions}, we have three collections of partitions, see \rf{fig:ED} for a schematic depiction.

\myfigure
\label{fig:ED}
=====
\[
\def\nodegap{1.1}
\begin{tikzpicture}
\node at (-0.5,2*\rowgap) {$M_{2}$};
\node at (-0.5,\rowgap) {$M_{1}$};
\node at (-0.5,0) {$M_{\circ 1}$};
\fillblock212
\fillblock111
\drawblock122
\drawblock011
\drawblock022
\foreach \y in {0,1,2}
{
    \foreach \x in {3,...,5, 7} {\drawblock \y\x\x};
    \draw (6*\nodegap, \y*\rowgap) node {$\cdots$};
    \foreach \x in {1,...,5, 7} {\drawelement \y\x};
}
\end{tikzpicture}
\]
-----
Schematic depiction of partitions whose orbits under the symmetric group give collections used in the query gain bound for the Element Distinctness problem.
The highlighted blocks are filled in grey.
=====

\begin{itemize}\noseps
\item[$M_2$] 
Each highlighted partition $\mu\in M_2$ has exactly one highlighted pair, and all other elements are singletons.

\item[$M_1$]
Each highlighted partition $\mu\in M_1$ consists only of singletons.
Exactly one singleton is highlighted.

\item[$M_{\circ1}$]
This collection consists of a single partition $\mu = \sfigA{\{1\},\dots,\{n\}}$ made out of singletons.
There is no highlighted block, and no associated knowledge system.
\end{itemize}

In particular, we have $|M_2| = \binom n2$, $|M_1| = n$, and $|M_{\circ1}|=1$.
The main result concerning query gain for Element Distinctness is as follows.

\begin{thm}[Element Distinctness, Query Gain]
\label{thm:EDQueryGain}
For the collections defined above, 
\[
\| \Upsilon_2^+\psi_t \| = \|\Upsilon_2^+\psi'_t\| = \OO \s[\frac{t^{3/2}}n]
\]
 for every $t$.
\end{thm}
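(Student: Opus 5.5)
The equality $\|\Upsilon_2^+\psi_t\| = \|\Upsilon_2^+\psi'_t\|$ is immediate from \rf{prp:knowledgeProperties}(b), since $\psi'_t = U_t\psi_t$ and $U_t$ acts only on $\cA$. For the bound itself, I would start from the Knowledge Upper Bound, \rf{cor:knowledgeUpperBound}, applied to the collection $M_2$:
\[
\|\Upsilon_2^+\psi_t\| \le \sum_{j=0}^{t-1}\s[\normA|\Psi^\partial_2\psi'_j| + \normA|\Psi^\partial_2\psi_{j+1}|].
\]
This reduces the statement to bounding each query-gain term by $\OO(\sqrt{j+1}/n)$. Summing then gives $\OO\sA[\sum_{j\le t}\sqrt{j}/n] = \OO(t^{3/2}/n)$.

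To bound a single term, I would chain the two lemmata. First, \rf{cor:querySpecial} with $\ell=1$ gives
\[
\|\Psi^\partial_2\psi\| \le \sqrt{|M_1|/|M_2|}\,\|\Upsilon^+_1\psi\| = \OO(1/\sqrt n)\,\|\Upsilon_1^+\psi\|,
\]
using $|M_1|=n$ and $|M_2|=\binom n2$. Second, I would apply \rf{lem:Upsilon1Norm} with $M_\bullet = M_1$ and $M_\circ = M_{\circ1}$. Here the partition $\mu_\circ$ consists of $n$ singletons, and $\mu_\bullet$ is the same partition with one singleton highlighted, so the hypotheses hold. The lemma is stated for $\phi\in\cX_{\le t}$, so I need to lift it to $\psi\in\cX_{\le s}\otimes\cA$. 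I would do this exactly as at the end of the proof of \rf{lem:query}: expand $\psi$ in an orthonormal basis of $\cA$, note that every component lies in $\cX_{\le s}$, and sum the squared bounds. This gives
\[
\|\Upsilon_1^+\psi\| \le \OO\sA[\sqrt{s/n}]\,\|\Upsilon_{\circ1}\psi\|.
\]

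Finally, \rf{prp:inH} gives $\psi'_j\in\cX_{\le j}\otimes\cA$ and $\psi_{j+1}\in\cX_{\le j+1}\otimes\cA$. By \rf{eqn:UpsilonNorms}, $\|\Upsilon_{\circ1}\psi'_j\| = \|\Upsilon_{\circ1}\psi_{j+1}\| = 1$, because $\Upsilon_{\circ1}$ is a genuine transfer operator for the orbit $M_{\circ1}$. Combining the three steps,
\[
\|\Psi^\partial_2\psi_{j+1}\| = \OO\sA[\sqrt{(j+1)/n}\cdot 1/\sqrt n] = \OO\sA[\sqrt{j+1}/n],
\]
and the same bound holds for $\psi'_j$. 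Summation finishes the proof.

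I expect no real obstacle, since this is a direct assembly of earlier results. The only points needing care are the extension of \rf{lem:Upsilon1Norm} from $\cX$ to $\cX\otimes\cA$ and checking that the normalisation in \rf{eqn:Upsilon ell} for $M_{\circ1}$ (a single partition) matches \rf{eqn:UpsilonNorms}. Both are routine.
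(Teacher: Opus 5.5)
Your proposal is correct and matches the paper's proof step for step. Lemma~\ref{lem:Upsilon1Norm}, together with $\|\Upsilon_{\circ1}\psi_t\|=1$, gives $\|\Upsilon_1^+\psi_t\| = O(\sqrt{t/n})$; Corollary~\ref{cor:querySpecial} turns this into an $O(\sqrt{t}/n)$ query-gain bound; and Corollary~\ref{cor:knowledgeUpperBound} sums these bounds. Your explicit lifting of Lemma~\ref{lem:Upsilon1Norm} from $\cX$ to $\cX\otimes\cA$, component-wise over an orthonormal basis of $\cA$, fills in a detail the paper leaves implicit, and you handle it correctly.
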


\begin{proof}
By~\rf{eqn:UpsilonNorms}, we have $\|\Upsilon_{\circ 1} \psi_t\| = 1$ for all $t$.
From this, using \rf{lem:Upsilon1Norm} together with the observation that $\psi_t\in \cX_{\le t}$ by \rf{prp:inH}:
\begin{equation}
\label{eqn:EDQueryGain1}
\|\Upsilon^+_1 \psi'_t \| = \|\Upsilon^+_1 \psi_t \| \le \OO \s[\sqrt{\frac{t}{n}}] \|\Upsilon_{\circ 1} \psi_t\| = \OO \s[\sqrt{\frac{t}{n}}],
\end{equation}
where we also used \rf{prp:knowledgeProperties}(b).
Hence, by~\rf{cor:querySpecial}, we get
\[
\|\Psi_2^\partial\psi_t '\|, \|\Psi_2^\partial\psi_t \|  =
\OO \s[\frac{1}{\sqrt n}]
\|\Upsilon^+_\ell \psi_t \|
= \OO \s[\frac{\sqrt t}{n}].
\]
Thus, by \rf{cor:knowledgeUpperBound}:
\begin{equation}
\label{eqn:EDQueryGainLast}
\|\Upsilon^+_2 \psi_t' \| = \|\Upsilon^+_2 \psi_t \| \le 
\sum_{j=0}^{t-1} \sA[\|\Psi_2^\partial\psi_j' \|+ \|\Psi_2^\partial\psi_{j+1} \|]
\le \OO \sC[\sum_{j=1}^t \frac{\sqrt j}{n}] = \OO \s[\frac{t^{3/2}}n].
\qedhere
\end{equation}
\end{proof}

Combining this with the anti-concentration \rf{thm:EDAntiConcentration}, we get the following result via our lower bound framework \rf{lem:framework} and \rf{prp:relaxedToStrict}.

\begin{thm}
Assuming the alphabet size $q = \Omega(n^2)$, any algorithm solving the search version of the Element Distinctness problem and making $T$ queries has success probability
\[
\OO \s[\frac{1}{n^2} + \frac{T^3}{n^2}].
\]
\end{thm}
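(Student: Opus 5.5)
The plan is to combine the two halves of the framework through \rf{lem:framework}, applied to the relaxed version of (unique) Element Distinctness, where $M=M_2$ is the orbit of a partition with one pair and $n-2$ singletons under the uniform distribution. At the end, \rf{prp:relaxedToStrict} transfers the bound to the strict version. The strict problem has $q=\Omega(n^2)$, so this costs only a constant factor. A bound for the average case also bounds the worst-case success probability.

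First, the anti-concentration condition. We may assume $T\le n/2$: if $T>n/2$, the claimed bound $\OO(T^3/n^2)$ already exceeds a constant and there is nothing to prove. Under this assumption, \rf{thm:EDAntiConcentration} says that $\Upsilon^-\cX_{\le T}$ is $\gamma$-anti-concentrated with $\gamma=\OO(1/n)$. Here $M$, $M_\rho$ and the knowledge system are those of \rf{exm:ED}, and they coincide with the highlighted collection $M_2$ of \rf{sec:ED}. Since every partition has a unique pair, highlighting it changes nothing, so $\Upsilon^\pm$ computed in the two settings agree.

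Second, the small-knowledge condition. \rf{thm:EDQueryGain} applied with $t=T$ gives $\|\Upsilon^+_2\psi'_T\| = \OO(T^{3/2}/n)$, so we take $\delta=\OO(T^{3/2}/n)$. Then \rf{lem:framework} bounds the success probability of the relaxed problem by
\[
(\gamma+\delta)^2 \le 2\gamma^2+2\delta^2 = \OO\s[\frac1{n^2}+\frac{T^3}{n^2}].
\]
Multiplying by the factor $1+\OO(n^2/q)=\OO(1)$ from \rf{prp:relaxedToStrict} gives the same bound for the strict problem.

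The main point that needs care is bookkeeping rather than new mathematics. We must check that the collections and normalisations used in \rf{thm:EDAntiConcentration} and in \rf{thm:EDQueryGain} are the same $\Upsilon^+,\Upsilon^-$. Both use the uniform distribution $p_\mu=1/\binom n2$, which matches the $1/\sqrt{|M_2|}$ normalisation in~\rf{eqn:Upsilon ell}. We also need the decomposition $\Upsilon=\Upsilon^++\Upsilon^-$ to hold for this common choice, so that the hypotheses of \rf{lem:framework} refer to a single pair of operators. Finally, the case split on $T\le n/2$ has to be handled as above, since \rf{thm:EDAntiConcentration} does not apply for larger $T$.
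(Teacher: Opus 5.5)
Your proposal is correct and follows the paper's own argument. You take $\gamma=\OO(1/n)$ from \rf{thm:EDAntiConcentration} and $\delta=\OO(T^{3/2}/n)$ from \rf{thm:EDQueryGain}, combine them via \rf{lem:framework}, and transfer the bound to the strict version with \rf{prp:relaxedToStrict}. Your explicit reduction to $T\le n/2$, where the claimed bound is otherwise trivial, is a detail the paper leaves implicit, and you handle it correctly.
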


Thus, to achieve constant success probability, the algorithm has to make $\Omega(n^{2/3})$ queries, matching the upper~\cite{ambainis:distinctness} and the best known lower bound~\cite{shi:collisionLower} (ignoring the alphabet size issues).

\subsection{3-Distinctness: First Attempt}
\label{sec:3DPartI}

Let us perform similar estimations as in \rf{sec:ED} for the 3-Distinctness problem, and see what we obtain.
The collections $M_3$, $M_2$, $M_1$, $M_{\circ 2}$, and $M_{\circ 1}$ were already described in \rf{sec:kDistPartitions}, see also \rf{fig:3Dist}.

\myfigure
\label{fig:3Dist}
=====
\[
\begin{tikzpicture}
\node at (-0.5,4*\rowgap) {$M_{3}$};
\node at (-0.5,3*\rowgap) {$M_{2}$};
\node at (-0.5,2*\rowgap) {$M_{1}$};
\node at (-0.5,\rowgap) {$M_{\circ 2}$};
\node at (-0.5,0) {$M_{\circ 1}$};
\fillblock413
\fillblock312
\drawblock333
\fillblock211
\drawblock222
\drawblock233
\drawblock112
\drawblock133
\drawblock011
\drawblock022
\drawblock033
\foreach \y in {0,...,4}
{
    \foreach \x in {1,...,7,9,10,11,12,13,15} {\drawelement \y\x};
    \foreach \x/\z in {4/5,6/7,9/10} {\drawblock\y\x\z};
    \foreach \x in {11,12,13,15} {\drawblock\y\x\x};
    \draw (8*\nodegap, \y*\rowgap) node {$\cdots$};
    \draw (14*\nodegap, \y*\rowgap) node {$\cdots$};
}
\drawbrace44{10}{\Omega(n)}
\drawbrace4{11}{15}{\Omega(n)}
\end{tikzpicture}
\]
-----
Schematic depiction of partitions whose orbits under the symmetric group give collections used in the query gain bound for the 3-Distinctness problem.
The highlighted blocks are filled in grey.
=====

Mimicking the proof of \rf{thm:EDQueryGain}, we obtain in exactly the same way, for all $t$:
\begin{equation}
\label{eqn:3D1}
\|\Upsilon_{\circ 1} \psi_t\|=1,
\qquad
\|\Upsilon^+_1 \psi'_t \| = \|\Upsilon^+_1 \psi_t \| = \OO \s[\sqrt{\frac{t}{n}}],
\qquad
\|\Psi_2^\partial\psi_t \|, \|\Psi_2^\partial\psi_t' \|  = \OO \s[\frac{\sqrt t}{n}]
\end{equation}
and
\begin{equation}
\label{eqn:3D2}
\norm|\Upsilon_2^+ \psi_t'| = 
\norm|\Upsilon_2^+ \psi_t| \le 
\sum_{j=0}^{t-1} \sA[\|\Psi_2^\partial\psi_j' \|+ \|\Psi_2^\partial\psi_{j+1} \|]
\le \OO \sC[\sum_{j=1}^t \frac{\sqrt j}{n}] = \OO \s[\frac{t^{3/2}}n].
\end{equation}
Another application of~\rf{cor:querySpecial} gives us
\[
\|\Psi_3^\partial\psi_t \|, \|\Psi_3^\partial\psi_t' \|  = \OO \s[\frac{t^{3/2}}{n^{3/2}}].
\]
Which, by \rf{cor:knowledgeUpperBound}, yields:
\[
\|\Upsilon^+_3 \psi_t' \| = 
\|\Upsilon^+_3 \psi_t \| \le 
\sum_{j=0}^{t-1} \sA[\|\Psi_3^\partial\psi_j' \|+ \|\Psi_3^\partial\psi_{j+1} \|]
\le \OO \sC[\sum_{j=1}^t \frac{j^{3/2}}{n^{3/2}}] = \OO \s[\frac{t^{5/2}}{n^{3/2}}].
\]
This proves a lower bound of $\Omega(n^{3/5})$ on the number of queries to achieve knowledge $\Omega(1)$.
This is disappointing  as it is even worse than the lower bound $\Omega(n^{2/3})$ obtained in \rf{sec:ED} for Element Distinctness.
Obviously, our analysis is suboptimal.
In retrospect, this is not surprising at all as we have never used that the input partition has $\Omega(n)$ pairs --- a crucial requirement for a tight lower bound.
\medskip

To get some intuition what we are doing wrong here, let us take a step back and reconsider the Element Distinctness problem we analysed in \rf{sec:ED}.
More precisely, consider the bound~\rf{eqn:EDQueryGain1}.
Instead of the direct proof we performed in \rf{lem:Upsilon1Norm}, we could use a proof strategy similar to the one used in~\rf{eqn:EDQueryGainLast}.
First, reasoning as in \rf{lem:query}, it is not hard to show that
\begin{equation}
\label{eqn:3DIntution2}
\|\Psi^\partial_1 \psi_t \| \le \OO \sC[\frac1{\sqrt n}] \|\Upsilon_0 \psi_t \| = \OO \sC[\frac1{\sqrt n}].
\end{equation}
Then, using a similar argument as in~\rf{eqn:EDQueryGainLast}, we obtain
\begin{equation}
\label{eqn:3DIntution3}
\|\Upsilon^+_1 \psi_t' \| = 
\|\Upsilon^+_1 \psi_t \| \le 
\sum_{j=0}^{t-1} \sA[\|\Psi_1^\partial\psi_j' \|+ \|\Psi_1^\partial\psi_{j+1} \|]
\le \OO \sC[\sum_{j=1}^t \frac{1}{\sqrt{n}}] = \OO \s[\frac{t}{\sqrt n}],
\end{equation}
much worse a bound than the one in~\rf{eqn:EDQueryGain1}.
It is easy to construct an algorithm that saturates the bound~\rf{eqn:3DIntution2}.
For instance, take an algorithm that blindly queries one variable after the other: first $x_1$, then $x_2$, and so on.
It satisfies $\|\Psi^\partial_1\psi_t\|=1/\sqrt{n}$ for all $t$, but~\rf{eqn:EDQueryGain1} still holds, because the new part added to $\Upsilon_1^+\psi_t$ is \emph{orthogonal} to the existing state:
First, we add the part with the singleton $\{1\}$ highlighted, then the part with the singleton $\{2\}$ highlighted, and so on.
Reasoning by analogy, this might make us suggest that the bounds in~\rf{eqn:3D1} are tight, but the one in~\rf{eqn:3D2} is loose because the query gain bound from \rf{cor:queryGainSimple} is inadequate in this settings.
We will show that this is indeed the case.

\subsection{Refined Query Gain Bound}

Our next goal is to prove a more precise version of the query gain bound tailored to our case of highlighted partitions.
For this section, we will assume that $M$ is some collection of highlighted partitions and the corresponding knowledge systems are as in \rf{defn:highlighted}.
We do \emph{not} require that they are permutation invariant or have uniform probability distribution.
Similarly to \rf{sec:framework}, all notation related to $M$ (like $\cY$, $\Upsilon^+$) will be without subindices in this section.
The main observation (already stated in the first bullet of the proof of \rf{lem:query}) is that each query can change only a small part of the state.

\begin{defn}
\label{defn:queryProfile}
The \emph{query profile} operator $\Xi\colon \cY \otimes \cI \mapsto \cY\otimes \cI$ is defined as the orthogonal projector onto the span of all states $\ket Y|\mu, \chit{\tau}>\ket I|i>$, where $i$ is contained in the highlighted block of $\mu$.
\end{defn}

As the name suggests, query gain only happens in the image of $\Xi$.

\begin{clm}
\label{clm:queryProfileKnowledge}
We have $\Xi \Psi^\partial = \Psi^\partial$.
Also, $O\Xi = \Xi O$.
\end{clm}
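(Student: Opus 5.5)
Both claims will be checked directly on basis vectors: $\Xi$ is diagonal in the basis $\ket Y|\mu,\chit\tau>\ket I|i>$ of $\cY\otimes\cI$, so it suffices to see how $\Psi^\partial$ and $O$ interact with this basis.

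For the first identity $\Xi\Psi^\partial=\Psi^\partial$, take a basis vector $\ket X|\chit\sigma>\ket I|i>$. By \rf{defn:queryGain}, $\Psi^\partial$ sends it to $\bigoplus_\mu\sqrt{p_\mu}\,\Upsilon^{\partial i}_\mu\ket X|\chit\sigma>\otimes\ket I|i>$. Fix $\mu$, and let $B$ be its highlighted block, so $L^+_\mu$ consists of the supersets of $B$. If $i\notin B$, then $S\cup\{i\}\supseteq B$ forces $S\supseteq B$; hence no $S\in L^-_\mu$ has $S\cup\{i\}\in L^+_\mu$. So $L^{\partial i}_\mu=\emptyset$ and $\Upsilon^{\partial i}_\mu=0$ (this is the first bullet in the proof of \rf{lem:query}). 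Therefore every surviving component has $i\in B$. Each such component lies in $\cY_\mu$ tensored with $\ket I|i>$, i.e.\ in the span of $\ket Y|\mu,\chit\tau>\ket I|i>$ with $i$ in the highlighted block of $\mu$, which is precisely the image of $\Xi$. By linearity, $\Xi\Psi^\partial=\Psi^\partial$.

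For the commutation $O\Xi=\Xi O$ (with $O$ acting on $\cY\otimes\cI\otimes\cC$, and $\Xi$ extended by the identity on $\cC$), decompose $O=\bigoplus_{i,c}O_{i,c}$ as in~\rf{eqn:YOic}. $O$ is block diagonal with respect to $\ket I|i>\ket C|c>$ and, on the $(i,c)$ block, acts as $O_{i,c}$ on $\cY$. Likewise $\Xi$ is block diagonal in $i$, acting on $\cY$ as the projector $\Xi_i$ onto $\bigoplus_{\mu:\,i\in B_\mu}\cY_\mu$. So it is enough to show that $O_{i,c}$ preserves each $\cY_\mu$. This holds because $O_{i,c}$ is diagonal in the standard basis $\ket Y|\mu,z>$, and also because, in the Fourier basis, it maps $\ket Y|\mu,\chit\tau>$ to $\ket Y|\mu,\chit{\tau+\{B\ni i\mapsto c\}}>$ without changing $\mu$. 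Hence $O_{i,c}$ commutes with the projector onto any direct sum of the $\cY_\mu$, in particular with $\Xi_i$, and summing over blocks gives $O\Xi=\Xi O$.

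No step is a real obstacle. The only point needing care is the emptiness of $L^{\partial i}_\mu$ when $i\notin B$, which follows directly from the upward closure of the highlighted-block knowledge system.
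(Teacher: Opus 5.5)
Your proposal is correct and follows essentially the same route as the paper. You show that $L^{\partial i}_\mu=\emptyset$ whenever $i$ lies outside the highlighted block, so $\Upsilon^{\partial i}_\mu=0$ and only components in the range of $\Xi$ survive. You obtain the commutation because $O$ and $\Xi$ are both block diagonal with respect to $(\mu,i)$, with $O$ preserving each $\mathcal{Y}_\mu$ and $\Xi$ acting as either $0$ or the identity on each block.
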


\begin{proof}
We start with the second identity.
Recall that $O$ is decomposable into a direct sum $O = \bigoplus_{\mu, i,c} O_{\mu, i,c}$ with $O_{\mu, i,c}\colon \cY_\mu\to\cY_\mu$.
(For clarity, we write $O_{\mu,i,c}$ for what was $O_{i,c}$ in~\rf{eqn:YOic}).
We have a similar decomposition $\Xi = \bigoplus_{\mu,i} \Xi_{\mu, i}$ with $\Xi_{\mu,i}\colon \cY_\mu\to\cY_\mu$ being either the identity or the zero operator.
Hence, $O_{\mu, i,c}$ commutes with $\Xi_{\mu,i}$, implying that $O$ commutes with $\Xi$.

To prove the first statement, note that $\Xi_{\mu,i}$ is zero if and only if $i$ is not contained in the highlighted block of $\mu$.
If this holds, then $L^{\partial i}_{\mu} = \emptyset$, thus, $\Upsilon^{\partial i}_\mu = 0$.
Therefore, $\Xi_{\mu, i} \Upsilon^{\partial i}_\mu = \Upsilon^{\partial i}_\mu$.
Taking the direct sum over $i$ and $\mu$, we get $\Xi \Psi^\partial = \Psi^\partial$.
\end{proof}

We now start working towards our refinement of the Query Gain Bound, \rf{cor:queryGainSimple}.
Denote the common value from the Query Identity~\rf{eqn:queryIdentity} by $w$:
\begin{equation}
\label{eqn:RQG Identity}
w = \Upsilon^+\psi_{t+1} - O\Upsilon^+\psi'_t = O \Psi^\partial \psi'_t - \Psi^\partial \psi_{t+1}.
\end{equation}

Applying $\Xi$ to the right-hand side and using \rf{clm:queryProfileKnowledge}, we get
\[
\Xi\sA[O \Psi^\partial \psi'_t - \Psi^\partial \psi_{t+1}]
=
O \Xi\Psi^\partial \psi'_t - \Xi\Psi^\partial \psi_{t+1}
=
O \Psi^\partial \psi'_t - \Psi^\partial \psi_{t+1},
\]
i.e., $\Xi w = w$.
Denote $\Xi^\perp = I_{\cY\otimes \cA} - \Xi$, and consider the following three vectors
\[
u = \Xi^\perp \Upsilon^+ \psi_{t+1} = \Xi^\perp O \Upsilon^+ \psi'_{t},
\qquad
v_2 = \Xi \Upsilon^+ \psi_{t+1}
\qqand
v_1 = \Xi O \Upsilon^+ \psi_{t}'.
\]
From $w = v_2 - v_1$, we get $\|v_2\|\le \|v_1\|+\|w\|$, which gives
$\|v_2\|^2 \le \|v_1\|^2 + 2\|v_1\|\cdot\|w\|+\|w\|^2$, or
\[
\|\Upsilon^+\psi_{t+1}\|^2
=
\|u\|^2 +\|v_2\|^2
\le
\|u\|^2 + \|v_1\|^2 +2\|v_1\|\cdot\|w\|+\|w\|^2
= \|O \Upsilon^+ \psi'_{t} \|^2 + 2\|v_1\|\cdot\|w\|+\|w\|^2.
\]
For the individual terms above, we have the following identities.
First, using that $O$ is unitary and \rf{prp:knowledgeProperties}(b) with~\rf{eqn:zhandryStates}, $\|O \Upsilon^+ \psi'_{t} \| = \|\Upsilon^+ \psi_t\|$.
Next, by \rf{clm:queryProfileKnowledge} and unitarity of $O$, $\|v_1\| = \|\Xi \Upsilon^+ \psi'_t\|$.
Also, $\|w\| \le \|\Psi^\partial \psi'_t\| + \|\Psi^\partial \psi_{t+1}\|$ by~\rf{eqn:RQG Identity}.
Putting everything together, we obtain the following result.

\begin{lem}
[Refined Query Gain Bound]
\label{lem:queryGainRefined}
It holds that
\[
\|\Upsilon^+\psi_{t+1}\|^2 - \|\Upsilon^+\psi_{t}\|^2 
\le 
2 \|\Xi \Upsilon^+ \psi'_t\| \sB[\|\Psi^\partial \psi'_t\| + \|\Psi^\partial \psi_{t+1}\|] + \sB[\|\Psi^\partial \psi'_t\| + \|\Psi^\partial \psi_{t+1}\|]^2.
\]
\end{lem}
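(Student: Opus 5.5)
The plan is to start from the Query Identity, \rf{cor:queryIdentity}, and split the knowledge state along the query profile projector $\Xi$ of \rf{defn:queryProfile}. The target is a bound on the difference of \emph{squared} norms, rather than on the difference of norms as in \rf{cor:queryGainSimple}.

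Denote by $w$ the common value of both sides of~\rf{eqn:queryIdentity}. First I show that $w$ lies in the image of $\Xi$. By \rf{clm:queryProfileKnowledge}, $\Xi\Psi^\partial=\Psi^\partial$ and $O\Xi=\Xi O$. Applying $\Xi$ to $O\Psi^\partial\psi'_t-\Psi^\partial\psi_{t+1}$ therefore returns the same vector, so $\Xi w=w$ and $\Xi^\perp w=0$, where $\Xi^\perp=I-\Xi$.

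Next I use $\Xi^\perp w=0$ on the left-hand side of~\rf{eqn:queryIdentity}. It gives $\Xi^\perp\Upsilon^+\psi_{t+1}=\Xi^\perp O\Upsilon^+\psi'_t$; call this vector $u$. Put $v_2=\Xi\Upsilon^+\psi_{t+1}$ and $v_1=\Xi O\Upsilon^+\psi'_t$, so that $w=v_2-v_1$. By Pythagoras,
\[
\|\Upsilon^+\psi_{t+1}\|^2=\|u\|^2+\|v_2\|^2
\qquad\text{and}\qquad
\|O\Upsilon^+\psi'_t\|^2=\|u\|^2+\|v_1\|^2 .
\]
The triangle inequality gives $\|v_2\|\le\|v_1\|+\|w\|$. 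Squaring yields $\|v_2\|^2\le\|v_1\|^2+2\|v_1\|\,\|w\|+\|w\|^2$, and hence
\[
\|\Upsilon^+\psi_{t+1}\|^2\le\|O\Upsilon^+\psi'_t\|^2+2\|v_1\|\,\|w\|+\|w\|^2 .
\]

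Finally I identify each term.
\begin{itemize}
\item Since $O$ is unitary and $\psi'_t=U_t\psi_t$, \rf{prp:knowledgeProperties}(b) gives $\|O\Upsilon^+\psi'_t\|=\|\Upsilon^+\psi_t\|$.
\item Commuting $\Xi$ past $O$ and using unitarity again gives $\|v_1\|=\|\Xi\Upsilon^+\psi'_t\|$.
\item From the right-hand side of~\rf{eqn:queryIdentity}, $\|w\|\le\|\Psi^\partial\psi'_t\|+\|\Psi^\partial\psi_{t+1}\|$.
\end{itemize}
Substituting these three facts gives the stated inequality.

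The only delicate point is the commutation $O\Xi=\Xi O$ on $\cY\otimes\cA$. It works because $\Xi$ is block-diagonal in $(\mu,i)$ and acts as the identity or zero on each block, while $O$ acts within each $(\mu,i,c)$ block. This is already recorded in \rf{clm:queryProfileKnowledge}, so I expect no real obstacle; the main conceptual step is recognising that $\Xi^\perp w=0$ makes the $\Xi^\perp$-parts of the two knowledge states coincide.
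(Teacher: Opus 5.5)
Your proposal is correct and follows the paper's own argument essentially step for step. You use the same vector $w$ from the Query Identity, show $\Xi w = w$ via \rf{clm:queryProfileKnowledge}, and split into $u$, $v_1$, $v_2$. You then square the triangle inequality $\|v_2\|\le\|v_1\|+\|w\|$ and identify the three terms exactly as the paper does.
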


\subsection{k-Distinctness}
\label{sec:kDist}

In this section, we use the above machinery to prove a tight lower bound on the quantum query complexity of the $k$-Distinctness problem.
We return to the collections $M_k,\dots, M_1$ and $M_{\circ(k-1)},\dots, M_{\circ 1}$ defined in \rf{sec:kDistPartitions}.
Our main improvement compared to \rf{sec:3DPartI} is the Refined Query Gain Bound,~\rf{lem:queryGainRefined}.
Thus, we first show that the new player of that lemma, $\Xi \Upsilon^+ \psi'_t$, has small norm.

\begin{lem}
\label{lem:Xik}
For each $\ell=2,\dots,k-1$, it holds that $\normA|\Xi \Upsilon^+_\ell \psi'_t| = \OO \s[\frac1{\sqrt n}]$.
\end{lem}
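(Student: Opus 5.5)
The plan is to drop the knowledge restriction altogether and compare with the unhighlighted collection $M_{\circ\ell}$, whose transfer operator preserves norm by~\rf{eqn:UpsilonNorms}. The factor $\frac1{\sqrt n}$ should come purely from the normalisation ratio $|M_{\circ\ell}|/|M_\ell|$ of \rf{clm:setSizes}. The point is that $\Xi$ retains only those highlighted partitions whose highlighted block contains the queried index $i$, and for a fixed $i$ there are few of them.

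First, decompose the state along the index register: write $\psi'_t = \sum_i \ket X|\phi_i>\ket I|i>$, where each $\phi_i \in \cX\otimes\cC\otimes\cW$. I will suppress $\cC\otimes\cW$ as in the proof of \rf{lem:query}, by expanding in an orthonormal basis and summing at the end. By \rf{defn:queryProfile} and~\rf{eqn:Upsilon ell},
\[
\normA|\Xi\Upsilon^+_\ell\psi'_t|^2 = \frac{1}{|M_\ell|}\sum_{i\in[n]}\;\sum_{\mu\in M_\ell\colon i\in \text{highlighted block of }\mu} \normA|\Upsilon^+_\mu \phi_i|^2 \le \frac{1}{|M_\ell|}\sum_{i}\;\sum_{\mu\in M_\ell\colon i\in\text{hl.\ block}} \normA|\Upsilon_\mu \phi_i|^2 .
\]
The inequality holds because $\Upsilon^+_\mu$ is $\Upsilon_\mu$ followed by discarding Fourier components, so $\|\Upsilon^+_\mu\phi\|\le\|\Upsilon_\mu\phi\|$ by \rf{clm:UpsilonTriangle} and Parseval~\rf{eqn:YParseval}. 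This holds blockwise for each $\tau$, since the sets $\S^+[\mu,\tau]$ for distinct $\tau$ partition a subset of all $\sigma$.

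Next, map each highlighted $\mu$ with $i$ in its highlighted block to its unhighlighting $\mu_\circ\in M_{\circ\ell}$. For a fixed $i$ this map is injective, because the highlighted block must be the block of $\mu_\circ$ that contains $i$. Moreover, $\|\Upsilon_\mu\phi\|=\|\Upsilon_{\mu_\circ}\phi\|$ by the remark following~\rf{eqn:Psi ell}. Hence the inner sum is at most $\sum_{\mu_\circ\in M_{\circ\ell}}\|\Upsilon_{\mu_\circ}\phi_i\|^2 = |M_{\circ\ell}|\,\|\Upsilon_{\circ\ell}\phi_i\|^2$.

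Summing over $i$ then gives
\[
\normA|\Xi\Upsilon^+_\ell\psi'_t|^2 \le \frac{|M_{\circ\ell}|}{|M_\ell|}\normA|\Upsilon_{\circ\ell}\psi'_t|^2 = \frac{|M_{\circ\ell}|}{|M_\ell|} = \OO\s[\frac1n],
\]
using~\rf{eqn:UpsilonNorms} for the collection $M_{\circ\ell}$ and \rf{clm:setSizes}. The only step needing care is the injectivity and counting in the middle step, including that the highlighted block has size $\ell$ and is determined by $i$. Once that is checked, the argument is a straightforward bookkeeping of the normalisations.
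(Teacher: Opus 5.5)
Your first inequality, $\|\Upsilon^+_\mu\phi\|\le\|\Upsilon_\mu\phi\|$, is false, and the rest of your argument depends on it. The operator $\Upsilon^+_\mu$ is not ``$\Upsilon_\mu$ followed by discarding Fourier components of $\cY_\mu$''. It is $\Upsilon_\mu$ applied \emph{after} the projection in $\cX$ onto the span of those $\ket X|\chit \sigma>$ with $\supp(\sigma)\in L^+_\mu$, and $\Upsilon_\mu$ is many-to-one on the Fourier basis ($\sigma\mapsto\tau$). By \rf{clm:UpsilonTriangle}, the coefficient of $\ket Y|\mu, \chit \tau>$ is the partial sum $\sum_{\sigma\in\mathfrak{S}^+[\mu,\tau]}\alpha_\sigma$ in one case and the full sum $\sum_{\sigma\in\mathfrak{S}[\mu,\tau]}\alpha_\sigma$ in the other. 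A partial sum of complex numbers is not dominated by the full sum. Concretely, let $\ell=2$, let $B=\{i,a\}$ be the highlighted block of $\mu$, and take $\phi=\ket X|\chit {\{i\mapsto 1,\, a\mapsto -1\}}>-\ket X|\chit \emptyset>$. Both terms are sent by $\Upsilon_\mu$ to $\ket Y|\mu, \chit \emptyset>$, so $\Upsilon_\mu\phi=0$. Only the first term has support in $L^+_\mu$, so $\Upsilon^+_\mu\phi=\ket Y|\mu, \chit \emptyset>\neq 0$. The same construction works for any $\ell\ge 2$, so no bound $\|\Upsilon^+_\mu\phi\|\le C\|\Upsilon_\mu\phi\|$ can hold. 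The comparison is valid only for a highlighted singleton, where $\mathfrak{S}^+[\mu,\tau]$ is either all of $\mathfrak{S}[\mu,\tau]$ or empty; this is why \rf{lem:Upsilon1Norm} goes through.

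The missing idea is to isolate the knowledge part by inclusion--exclusion over the lower unhighlighted collections. Let $\gamma[J]$ be the sum of $\alpha_\sigma$ over $\sigma\in\mathfrak{S}[\mu,\tau]$ with $\supp(\sigma)\cap B=J$. Then $\beta^+[\mu,\tau]=\gamma[B]$. Let $\mu_J$ be the partition obtained by turning the elements of $B\setminus J$ into zero-valued singletons; it lies in $M_{\circ|J|}$, or in $M_{\circ 1}$ when $J=\emptyset$, a case that only matters when $\tau(B)=0$. It satisfies $\beta[\mu_J,\tau_J]=\sum_{C\subseteq J}\gamma[C]$, so M\"obius inversion gives $\beta^+[\mu,\tau]=\sum_{J\subseteq B}(-1)^{\ell-|J|}\beta[\mu_J,\tau_J]$. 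After QM--AM, each pair $(\mu_j,\tau_j)$ with $\mu_j\in M_{\circ j}$ arises from $\OO(n^{\ell-j})$ pairs $(\mu,\tau)$ with $i$ in the highlighted block of $\mu$. This is exactly compensated by $|M_\ell|=\Omega(n^{\ell-j+1})|M_{\circ j}|$ together with $\|\Upsilon_{\circ j}\psi'_t\|=1$ for every $j\in[\ell]$. Your injectivity and normalisation bookkeeping is correct, but it only handles the $J=B$ term. The terms with $J\subsetneq B$, which live in $M_{\circ 1},\dots,M_{\circ(\ell-1)}$, are indispensable.
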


\begin{proof}
Let
\begin{equation}
\label{eqn:Xi3 psi}
\psi_t' = \sum_{\sigma, i, c, w} \alpha_{\sigma,i,c,w} \ket X|\chit{\sigma}> \ket I|i> \ket C|c> \ket W|w>,
\end{equation}
where $\sigma$ ranges over $\q^n$, $i$ over $[n]$, $c$ over $\q$, and $w$ over some orthonormal basis of $\cW$.
During a larger part of the proof, we will fix some values of $i,c,w$, so, for $\mu\in M_\ell$ and $\tau\in\q^\mu$, denote
\begin{equation}
\label{eqn:Xi3 beta+}
\beta^+[\mu,\tau] = \beta^+_{i,c,w}[\mu,\tau] = \sum_{\sigma\in \S^+[\mu,\tau]} \alpha_{\sigma, i,c,w}.
\end{equation}
Let $G_i$ be the set of those $\mu\in M_\ell$ that contain $i$ in the highlighted block.
Using \rf{clm:UpsilonTriangle} and the \rf{defn:queryProfile} of $\Xi$, we get
\begin{equation}
\label{eqn:Xi3Norm1}
\|\Xi \Upsilon^+_\ell \psi'_t\|^2
= 
\frac1{|M_\ell|}\sum_{i,c,w} \sum_{\mu\in G_i} \sum_{\tau\in \q^\mu} \absA|\beta^+_{i,c,w}[\mu,\tau]|^2.
\end{equation}

The idea is to upper bound this norm using $\Upsilon_{\circ j}\psi'_t$ for $j\in[\ell]$, which are normalised vectors by~\rf{eqn:UpsilonNorms}.
For that, we will use a variant of local alterations.
Towards that end, let us denote for $\mu_j\in M_{\circ j}$ and $\tau_j\in \q^{\mu_j}$:
\begin{equation}
\label{eqn:Xi3 beta}
\beta[\mu_j,\tau_j] =
\beta_{i,c,w}[\mu_j,\tau_j] =
\sum_{\sigma\in \S[\mu_j,\tau_j]} \alpha_{\sigma, i,c,w}.
\end{equation}
Using \rf{clm:UpsilonTriangle}, we obtain for $j\in [\ell]$:
\begin{equation}
\label{eqn:Xi3Norm2}
\|\Upsilon_{\circ j}\psi'_t\|^2=
\frac1{|M_{\circ j}|} \sum_{i,c,w} \sum_{\mu_j\in M_{\circ j}} \sum_{\tau_j\in \q^{\mu_j}} \absA|\beta_{i,c,w}[\mu_j,\tau_j]|^2 = 1.
\end{equation}

Fix now $i,c,w$, as well as $\mu\in G_i$, and $\tau\in \q^{\mu}$.
Let $B$ be the highlighted block of $\mu$, and $v = \tau(B)$.
In particular, $|B|=\ell$.
Let $\mu_\circ \in M_{\circ\ell}$ be the unhighlighted version of $\mu$.

\def\drawabcd{
    \drawelementA0{b_1}
    \drawelementA1{b_2}
    \drawelementA2{b_3}
    \drawelementA3{b_4}
    \drawelementA4{i}
}
\def\nodegapA{0.9}

\myfigure
\label{fig:Xik}
=====
\[
\begin{tikzpicture}
\begin{scope}[shift={(0,0)}]
    \node[blue, anchor=east] at (-0.8, 0) {$(\mu,\tau)$};
    \drawblockF 04v
    \drawabcd
\end{scope}
\draw[very thick, double distance=2pt, gray, arrows={-Latex[length=2 2.5 0]}] (2*\nodegapA,-0.9) to (2*\nodegapA, -2);
\begin{scope}[shift={(0,-2.7)}]
    \node[blue, anchor=east] at (-0.8, 0) {$(\mu_J,\tau_J)$};
    \drawblockA 000
    \drawblockA 110
    \drawblockA 23v
    \drawblockA 440
    \drawabcd
\end{scope}
\begin{scope}[shift={(8, 0)}]
    \begin{scope}[shift={(0,0)}]
        \node[blue, anchor=east] at (-0.8, 0) {$(\mu,\tau)$};
        \drawblockF 040
        \drawabcd
    \end{scope}
    \draw[very thick, double distance=2pt, gray, arrows={-Latex[length=2 2.5 0]}] (2*\nodegapA,-0.9) to (2*\nodegapA, -2);
    \begin{scope}[shift={(0,-2.7)}]
        \node[blue, anchor=east] at (-0.8, 0) {$(\mu_\emptyset,\tau_\emptyset)$};
        \drawblockA 000
        \drawblockA 110
        \drawblockA 220
        \drawblockA 330
        \drawblockA 440
        \drawabcd
    \end{scope}
\end{scope}
\end{tikzpicture}
\]
\negbigskip
-----
A transformation from the pair $(\mu,\tau)\in G_i$ to a pair $(\mu_J,\tau_J)$ from the proof of \rf{lem:Xik}.
Here, $\ell=5$ and the highlighted block $B = \{b_1,b_2,b_3,b_4,i\}$ of $\mu$ contains $i$.
On the left, $J = \{b_3, b_4\}$, and the corresponding $\mu_J$ belongs to $M_{\circ 2}$.
On the right, $J = \emptyset$, and the corresponding $\mu_\emptyset$ belongs to $M_{\circ 1}$.
The case on the right only makes appearance if $v = \tau(B) = 0$.
=====

For $J\subseteq B$, we will define a partition $\mu_J$ and a function $\tau_J\in \q^{\mu_J}$ on it by turning all the elements of $B\setminus J$ into zero-valued singletons.
Formally, there are two cases in dependence on $j = |J|$ (see \rf{fig:Xik}):

\begin{itemize}\noseps
\item If $j>0$, then
\[
\mu_J = \sA[\mu_\circ \setminus B]\cup J \cup \sfigA{\{b\} \mid b\in B\setminus J} \in M_{\circ j}
\qqand
\tau_J = \sA[\tau \setminus \sfigA{B\mapsto v}] \cup \sfigA{J\mapsto v}.
\]
\item If $j = 0$ (which only makes sense for $v=0$), then
\[
\mu_J = \sA[\mu_\circ \setminus B] \cup \sfigA{\{b\} \mid b\in B} \in M_{\circ 1}
\qqand
\tau_J = \tau.
\]
\end{itemize}

Let us also introduce the following quantity
\[
\gamma[J] = \sum_{\sigma\in \S[\mu,\tau]\colon \supp(\sigma)\cap B = J} \alpha_\sigma.
\]
Note that
$
\beta^+[\mu,\tau] = \gamma[B]
$
and
$
\beta[\mu_J, \tau_J] = \sum_{C\subseteq J} \gamma[C].
$
From this, we get by inclusion-exclusion:
\[
\beta^+[\mu,\tau] = \sum_{J\subseteq B} (-1)^{\ell - |J|} \beta[\mu_J, \tau_J].
\]
Using the QM-AM inequality (and that $\ell=\OO (1)$):
\begin{equation}
\label{eqn:Xik Main}
\absA|\beta^+[\mu,\tau]|^2 \le \OO \sC[{ \sum_{J\subseteq B} \absA|\beta[\mu_J, \tau_J]|^2 }].
\end{equation}
Now sum over all the pairs $(\mu,\tau)$ with $\mu\in G_i$ and $\tau\in \q^\mu$.
Consider $j\in [\ell]$ and a pair $(\mu_j, \tau_j)$ with $\mu_j \in M_{\circ j}$ and $\tau_j\in \q^{\mu_j}$.
Let us estimate for how many pairs $(\mu,\tau)$ it appears on the right-hand side of~\rf{eqn:Xik Main}.
The highlighted partition $\mu$ can be reconstructed by merging one $j$-block of $\mu_j$ with its $\ell-j$ singletons.
One of these $\ell-j+1$ blocks must contain $i$, and, for each of the others, there are at most $n$ choices.
For a fixed reconstructed $\mu$, the function $\tau$ can be reconstructed uniquely.
Hence, $(\mu_j, \tau_j)$ appears at most $\OO (n^{\ell-j})$ times on the right-hand side of~\rf{eqn:Xik Main}.
Therefore,
\[
\sum_{\mu\in G_i} \sum_{\tau\in \q^\mu} \absA|\beta^+[\mu,\tau]|^2
\le
\sum_{j=1}^\ell \OO (n^{\ell-j}) \sum_{\mu_j\in M_{\circ j} } \sum_{\tau_j\in \q^{\mu_j}} \absA|\beta[\mu_j,\tau_j]|^2.
\]

Summing over all $i,c,w$, and using~\rf{eqn:Xi3Norm1} and~\rf{eqn:Xi3Norm2}, we get
\[
\|\Xi \Upsilon^+_\ell \psi'_t\|^2 
\le 
\sum_{j=1}^\ell 
\OO \s[\frac{n^{\ell-j}|M_{\circ j}|}{|M_\ell|}] \|\Upsilon_{\circ j}\psi'_t\|^2
=
\OO \s[\frac1 n].
\]
In the second inequality above, we used that $|M_\ell| = \Omega(n^{\ell-j+1})|M_{\circ j}|$ by \rf{clm:setSizes}, and that $\|\Upsilon_{\circ j}\psi'_t\| =1$ for all $j$ by~\rf{eqn:UpsilonNorms}.
\end{proof}

\begin{thm}
\label{thm:kDistQueryGain}
For the $k$-Distinctness problem, using the collection $M_k$ as defined in \rf{sec:kDistPartitions}:
\[
\|\Upsilon_k^+\psi_t' \| = \|\Upsilon_k^+\psi_t\| = 
\OO \s[{\sqrt[2^{k-1}]{\frac{t^{2^k-1}}{n^{3\cdot 2^{k-2} -1}}}}]
\]
for every $t$.
\end{thm}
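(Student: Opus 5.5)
The plan is to bound the knowledge level by level, running an induction up the hierarchy $M_1,\dots,M_k$ from \rf{sec:kDistPartitions}. Write $K_\ell(t)$ for an upper bound on $\|\Upsilon^+_\ell\psi_t\| = \|\Upsilon^+_\ell\psi'_t\|$; the equality holds by \rf{prp:knowledgeProperties}(b). The target is
\[
K_\ell(t) = \OO\s[(t/n)^{1-2^{-\ell}}] \quad\text{for } \ell\in[k-1],
\qquad
K_k(t) = \OO\s[\frac{t}{\sqrt n}\,K_{k-1}(t)].
\]
Throughout I assume $t\le n$, as in the standing assumption $T\le n$, and I use that every bound below is nondecreasing in $t$, so sums over $j\le t$ can be bounded by $t$ times the last term.

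The base case $\ell=1$ is exactly~\rf{eqn:EDQueryGain1}. It follows from \rf{lem:Upsilon1Norm} applied to $\psi_t\in\cX_{\le t}\otimes\cA$ (by \rf{prp:inH}), using $\|\Upsilon_{\circ1}\psi_t\|=1$ from~\rf{eqn:UpsilonNorms}. This gives $K_1(t)=\OO(\sqrt{t/n})$.

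For the inductive step at an intermediate level $2\le\ell\le k-1$, I first bound the query gain. \rf{cor:querySpecial} gives $\|\Psi^\partial_\ell\psi_t\|,\|\Psi^\partial_\ell\psi'_t\| = \OO(K_{\ell-1}(t)/\sqrt n) =: G_\ell(t)$. I then plug this into the Refined Query Gain Bound, \rf{lem:queryGainRefined}, and bound the factor $\|\Xi\Upsilon^+_\ell\psi'_t\|$ by $\OO(1/\sqrt n)$ using \rf{lem:Xik}. Each step then increases $\|\Upsilon^+_\ell\psi_t\|^2$ by at most
\[
\OO\s[\frac{1}{\sqrt n}G_\ell + G_\ell^2] = \OO\s[\frac{K_{\ell-1}}{n}],
\]
where the $G_\ell^2$ term is absorbed because $K_{\ell-1}\le 1$. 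Summing over $j<t$, starting from zero knowledge by \rf{prp:knowledgeProperties}(a), yields $K_\ell(t)^2 = \OO(tK_{\ell-1}(t)/n)$. Solving the recursion from $K_1=(t/n)^{1/2}$ gives $K_\ell=(t/n)^{1-2^{-\ell}}$; the first step is $K_2=(t/n)^{3/4}$.

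At the top level $\ell=k$, \rf{lem:Xik} is unavailable. Here I fall back to the plain \rf{cor:knowledgeUpperBound}, with the query gain again taken from \rf{cor:querySpecial}:
\[
K_k(t) = \OO\s[\sum_{j\le t}\frac{K_{k-1}(j)}{\sqrt n}] = \OO\s[\frac{t}{\sqrt n}\s[\frac tn]^{1-2^{-(k-1)}}].
\]
The exponent of $t$ is $2-2^{-(k-1)} = (2^k-1)/2^{k-1}$, and the exponent of $n$ is $3/2-2^{-(k-1)} = (3\cdot2^{k-2}-1)/2^{k-1}$, which is the claimed expression. For $k=2$ there are no intermediate levels, and the bound reduces to \rf{thm:EDQueryGain}.

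The main subtlety is the choice, at each level, between the refined and the simple gain bound. Using the refined one at the intermediate levels is what beats the $n^{3/5}$ loss of \rf{sec:3DPartI}. At the top level the simple bound is the one that is actually used, and the refined one adds nothing since it has no independent control on $\Xi\Upsilon^+_k$. The other point to verify is the absorption of the quadratic term $G_\ell^2$ and of constants, which relies on $t\le n$ and $k=\OO(1)$.
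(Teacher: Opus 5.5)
Your proposal is correct and follows the paper's proof essentially step for step. You take the base case $\alpha_1=1/2$ from Lemma~\ref{lem:Upsilon1Norm}, and at the intermediate levels $2\le\ell\le k-1$ you combine Corollary~\ref{cor:querySpecial} with the Refined Query Gain Bound (Lemma~\ref{lem:queryGainRefined}) and Lemma~\ref{lem:Xik} to get the recursion $\alpha_{\ell+1}=(\alpha_\ell+1)/2$, while at the top level you use the plain Corollary~\ref{cor:knowledgeUpperBound}, with the same absorption of the quadratic term via $t\le n$ and the same exponent bookkeeping as the paper.
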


\begin{proof}
By \rf{prp:knowledgeProperties}(b), $\|\Upsilon_\ell^+\psi_t \| = \|\Upsilon_\ell^+\psi'_t\|$ for all $\ell$, so we will use them interchangeably.
By~ \rf{eqn:UpsilonNorms}, $\|\Upsilon_{\circ 1} \psi_t\| = 1$.
From this, using \rf{lem:Upsilon1Norm} and that $\psi_t\in \cX_{\le t}$ by \rf{prp:inH}:
\begin{equation}
\label{eqn:kDist1}
\|\Upsilon^+_1 \psi'_t \| = \|\Upsilon^+_1 \psi_t \| \le \OO \s[\sqrt{\frac{t}{n}}] \|\Upsilon_{\circ 1} \psi_t\| = \OO \s[\sqrt{\frac{t}{n}}].
\end{equation}
Now we use induction on $\ell$.
For $\ell \in [k-2]$, assume that $\alpha = \alpha_\ell>0$ is such that $\|\Upsilon^+_\ell \psi_t \| = \OO \sB[{\s[\frac{t}{n}]^\alpha}]$ for all $t$.
Then, by~\rf{cor:querySpecial}, we get
\[
\|\Psi_{\ell+1}^\partial\psi_t' \| + \|\Psi_{\ell+1}^\partial\psi_{t+1} \|  
\le 
\OO \s[\frac{1}{\sqrt n}]
\sA[\|\Upsilon^+_\ell \psi'_t \| + \Upsilon^+_\ell \psi_{t+1} \|]
=\OO \s[\frac{(t+1)^\alpha}{n^{\alpha+\frac12}}].
\]

The Refined Query Gain Bound from~\rf{lem:queryGainRefined} combined with our bound $\|\Xi \Upsilon^+_{\ell+1} \psi'_t\| = \OO (1/\sqrt{n})$ from~\rf{lem:Xik} reads as
\[
\|\Upsilon^+_{\ell+1}\psi_{t+1}\|^2 
-
\|\Upsilon^+_{\ell+1}\psi_{t}\|^2
\le
\OO \s[\frac{(t+1)^\alpha}{n^{\alpha+1}} + \frac{(t+1)^{2\alpha}}{n^{2\alpha+1}}]
=
\OO \s[\frac{(t+1)^{\alpha}}{n^{\alpha+1}}],
\]
where we used that $\s[\frac{t+1}{n}]^\alpha \le 1$.
Thus, using also \rf{prp:knowledgeProperties}(a), we get
\[
\|\Upsilon^+_{\ell+1}\psi_t\|^2 
\le
\sum_{j=1}^{t} \OO \s[\frac{j^{\alpha}}{n^{\alpha+1}}]
=
\OO \s[{\s[\frac{t}{n}]}^{\alpha+1}].
\]
Therefore, we have the recursive relation
\[
\alpha_{\ell+1} = \frac{\alpha+1}{2}.
\]
Denoting now $\alpha = \alpha_{k-1}$, and using that $\alpha_1 = 1/2$ by~\rf{eqn:kDist1}, we have:
\[
\alpha = \alpha_{k-1} = \frac12 + \frac 14 + \cdots + \frac1{2^{k-1}} = 1 - \frac1{2^{k-1}}.
\]

Let us now estimate $\|\Upsilon_k^+\psi_t\|$.
Another application of \rf{cor:querySpecial} gives us
\[
\|\Psi_{k}^\partial\psi_t \|, \|\Psi_{k}^\partial\psi_t' \|  
\le 
\OO \s[\frac{1}{\sqrt n}]
\|\Upsilon^+_{k-1} \psi_t \|
=\OO \s[\frac{t^{\alpha}}{n^{\alpha+\frac12}}].
\]
We cannot use the Refined Query Gain Bound any more because \rf{lem:Xik} fails for $\ell=k$, but we can use the original version, \rf{cor:knowledgeUpperBound}.
It gives
\[
\|\Upsilon^+_k \psi_t \| \le 
\sum_{j=0}^{t-1} \sA[\|\Psi_k^\partial\psi_j' \|+ \|\Psi_k^\partial\psi_{j+1} \|]
\le \sum_{j=1}^t \OO\s[\frac{t^{\alpha}}{n^{\alpha+\frac12}}]
= \OO \s[\frac{t^{\alpha+1}}{n^{\alpha+\frac12}}].
\]
To finish the proof, it suffices to evaluate both exponents:
\[
\alpha+1 = 2 - \frac{1}{2^{k-1}} = \frac{2^k-1}{2^{k-1}}
\qqand
\alpha+\frac12 = \frac32 - \frac{1}{2^{k-1}} = \frac{3\cdot 2^{k-2}-1}{2^{k-1}}.
\qedhere
\]
\end{proof}

Combining this with the general anti-concentration bound from \rf{thm:anticoncentration} in our framework from \rf{lem:framework}, we get that any quantum algorithm solving the $k$-Distinctness problem in the relaxed settings of \rf{defn:relaxed} using $T$ queries has success probability 
\[
\OO \s[{\frac{1}{n} + \sqrt[2^{k-2}]{
\frac{T^{2^k-1}}{n^{3\cdot 2^{k-2} -1}}
}}].
\]
This bound is sub-optimal (at least in the first term), but, nonetheless, yields the following theorem, which matches the upper bound from~\cite{belovs:learningKDist}.

\begin{thm}[$k$-Distinctness Lower Bound]
Assuming $q = \Omega(n^2)$, any quantum algorithm solving the search version of the $k$-Distinctness problem with constant success probability makes 
\[
\Omega\s[n^{\frac{3\cdot 2^{k-2}-1}{2^k-1}}]
=
\Omega\s[n^{\frac 34 - \frac{1}{4(2^k-1)}} ]
\]
queries.
The lower bound holds even if the algorithm knows that there is exactly one $k$-tuple of equal elements and is given in advance the precise number of $\ell$-tuples of equal elements in the input string for every $\ell\in[k-1]$.
\end{thm}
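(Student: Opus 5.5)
The plan is to combine the two halves already in place via the framework lemma \rf{lem:framework}, and then pass from the relaxed to the strict, worst-case setting.

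\emph{Choice of distribution.} Take $M=M_k$ to be the orbit of $\mu_k$ from \rf{sec:kDistPartitions}, with uniform distribution over $M_k$. Here $\mu_k$ has one $k$-block and $\Omega(n)$ $\ell$-blocks for each $\ell\in[k-1]$; in particular it has at least $cn$ singletons, and all its blocks have size at most $k=\OO(1)$. Since every partition in the orbit has the same block counts, this instance already matches the promise in the statement: exactly one $k$-tuple, and a known number of $\ell$-tuples for every $\ell\in[k-1]$.

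\emph{Anti-concentration.} The hypotheses of \rf{thm:anticoncentration} hold for this orbit: there are $cn$ singletons, the largest block has size $\OO(1)$, $R$ is the set of $k$-subsets, and $L^+$ is given by intersections of size at least $k$ with a block. Because the only block of size at least $k$ is the $k$-block, $L^+$ coincides with $L^+_k$. So for $T\le cn/2$ the space $\Upsilon^-\cX_{\le T}$ is $\gamma$-anti-concentrated with $\gamma=\OO(1/\sqrt n)$. We may assume $T\le cn/2$, since otherwise $T=\Omega(n)$ already exceeds the claimed bound.

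\emph{Small knowledge.} By \rf{thm:kDistQueryGain},
\[
\delta=\|\Upsilon^+_k\psi'_T\|=\OO\bigl((T^{2^k-1}/n^{3\cdot2^{k-2}-1})^{1/2^{k-1}}\bigr).
\]
Then \rf{lem:framework} gives relaxed success probability at most $(\gamma+\delta)^2$. By \rf{prp:relaxedToStrict} with $q=\Omega(n^2)$, the strict average-case success probability is at most $(1+\OO(n^2/q))(\gamma+\delta)^2$. We need $q\ge Cn^2$ for a large enough constant $C$ so that this factor is at most, say, $2$; otherwise we enlarge the alphabet to achieve this. The worst-case success probability is bounded by the average-case one.

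\emph{Concluding.} If the success probability is a constant $\epsilon>0$, then since $\gamma\to0$ we must have $\delta=\Omega(1)$. This forces $T^{2^k-1}=\Omega(n^{3\cdot 2^{k-2}-1})$, that is, $T=\Omega(n^{(3\cdot2^{k-2}-1)/(2^k-1)})$. A direct computation shows $(3\cdot 2^{k-2}-1)/(2^k-1)=\tfrac34-\tfrac1{4(2^k-1)}$.

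The main obstacle is bookkeeping rather than new mathematics. I would double-check three points: that the $T\le cn/2$ restriction is harmless; that the constant-factor loss from \rf{prp:relaxedToStrict} does not interfere with a constant success probability below $1$; and that the orbit $M_k$ is really realisable, which needs $n$ large enough to fit $\Omega(n)$ blocks of each size $\ell\in[k-1]$ together with $cn$ singletons, fine since $k=\OO(1)$.
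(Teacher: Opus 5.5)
Your proposal is correct and follows the paper's own argument: combine the $\OO(1/\sqrt n)$ anti-concentration of \rf{thm:anticoncentration} for the orbit $M_k$ with the knowledge bound of \rf{thm:kDistQueryGain} through \rf{lem:framework}, then pass to the strict (hence worst-case) setting via \rf{prp:relaxedToStrict}. One remark: under $q=\Omega(n^2)$ the factor $1+\OO(n^2/q)$ is already a constant and cannot change the conclusion $\delta=\Omega(1)$, so there is no need to ``enlarge the alphabet'' (and, taken literally, a bound for a larger alphabet would not transfer back to the given $q$).
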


By the standard reduction from the search to the decision version~\cite{ambainis:adv}, this gives the same lower bound on solving the decision version with bounded error.

\section{Summary and Future Work}

In this paper, we proved a tight quantum query lower bound for the $k$-Distinctness problem.
However, we made a number of assumptions that significantly simplified our proofs.
One of the assumptions is that the input partition contains $\Omega(n)$ singletons.
It is interesting whether this condition can be dropped.
In particular, can we prove the same lower bound when $M$ consists of the orbit of a partition with one $k$-block, $\Omega(n)$ $(k-1)$-blocks, and no blocks of other sizes?
More generally, can we tightly characterise complexity of $k$-Distinctness when $M$ is the orbit of an arbitrary partition (assuming possibly, it contains a unique $k$-block).
Related problems are to get a tight trade-off between number of queries and success probability, or to drop the condition on the size of the input alphabet.

Another open problem is to generalise these techniques beyond Equal Element Problems.
One candidate could be the Sum Problems from~\cite{belovs:onThePower}.  We have tight characterisation of their bounded-error quantum query complexity, but the techniques of this paper could prove tighter trade-offs between number of queries and success probability of the algorithm.
It is also interesting to understand relation between our framework and the original version of Zhandry's compressed oracle technique.

\subsection*{Acknowledgements}

I am grateful to Fr\'ed\'eric Magniez for introducing me to the subtleties of Zhandry's method and to Amin Shiraz Gilani for many useful discussions.

This research is supported by the Latvian Quantum Initiative under European Union Recovery and Resilience Facility project no. 2.3.1.1.i.0/1/22/I/CFLA/001.

\bibliographystyle{habbrvM}
{
\small
\bibliography{belov}
}

\end{document}